\newtheorem{theorem}{Theorem}
\newtheorem{lemma}[theorem]{Lemma}
\newtheorem{property}[theorem]{Property}
\theoremstyle{definition}
\newtheorem{definition}{Definition}
\newcommand{\mysf}[1]{\ensuremath{\mathsf{#1}}}
\newcommand{\ID}{\operatorname{ID}}
\newcommand{\LOCAL}{\mathsf{LOCAL}}
\newcommand{\Pshort}{\mathscr{P}_{\operatorname{short}}}
\newcommand{\Plong}{\mathscr{P}_{\operatorname{long}}}
\newcommand{\midd}{\mathsf{Mid}}
\newcommand{\LabelIn}{\Sigma_{\operatorname{in}}}
\newcommand{\LabelOut}{\Sigma_{\operatorname{out}}}
\newcommand{\Lpump}{ {\ell_{\operatorname{pump}}} }
\newcommand{\Lcount}{\ell_{\operatorname{count}}}
\newcommand{\Lpattern}{\ell_{\operatorname{pattern}}}
\newcommand{\Lwidth}{\ell_{\operatorname{width}}}
\newcommand{\ltape}{\mysf{tape}}
\newcommand{\tape}{\mysf{Tape}}
\newcommand{\lstart}{\mysf{Start}}
\newcommand{\lseparator}{\mysf{Separator}}
\newcommand{\lempty}{\mysf{Empty}}
\newcommand{\lstep}{\mysf{step}}
\newcommand{\lstate}{\mysf{state}}
\newcommand{\lhead}{\mysf{head}}
\newcommand{\linput}{\mysf{Input}}
\newcommand{\loutput}{\mysf{Output}}
\newcommand{\lerror}{\mysf{Error}}
\newcommand{\ldist}{\mysf{dist}}
\newcommand{\successor}{\mysf{succ}}
\newcommand{\predecessor}{\mysf{pred}}
\newcommand{\lin}{\mysf{in}}
\newcommand{\lout}{\mysf{out}}
\newcommand{\currstate}{\mysf{current\_state}}
\newcommand{\transtate}{\mysf{transition\_state}}
\newcommand{\tapecontent}{\mysf{tape\_content}}
\newcommand{\newcontent}{\mysf{new\_content}}
\newcommand{\type}{\mysf{Type}}
\newcommand{\replace}{\mysf{Replace}}
\newcommand{\lba}{\mysf{LBA}}
\newcommand{\lcl}{\mysf{LCL}}
\newcommand{\Enc}{\mathsf{Enc}}
\newcommand{\Dec}{\mathsf{Dec}}
\newcommand{\GG}{\mathcal{G}}
\newcommand{\PP}{\mathcal{P}}
\newcommand{\LL}{\mathcal{L}}
\newcommand{\Pset}{\mathscr{P}}
\newcommand{\simm}{\overset{\star}{\sim}}
\newenvironment{myabstract}
{\list{}{\listparindent 1.5em%
		\itemindent    \listparindent
		\leftmargin    1cm
		\rightmargin   1cm
		\parsep        0pt}%
	\item\relax}
{\endlist}
\newenvironment{mycover}
{\list{}{\listparindent 0pt
		\itemindent    \listparindent
		\leftmargin    1cm
		\rightmargin   1cm
		\parsep        0pt}%
	\raggedright
	\item\relax}
{\endlist}
\newcommand{\myemail}[1]{\,$\cdot$\, {\small #1}}
\newcommand{\myaff}[1]{\,$\cdot$\, {\small #1}\par\medskip}
\begin{document}
	
	\begin{mycover}
		{\huge\bfseries\boldmath The distributed complexity of locally checkable problems on paths is decidable \par}
		\bigskip
		\bigskip
		
		\textbf{Alkida Balliu}
		\myemail{alkida.balliu@aalto.fi}
		\myaff{Aalto University}
		
		\textbf{Sebastian Brandt}
		\myemail{brandts@ethz.ch}
		\myaff{ETH Zurich}
		
		\textbf{Yi-Jun Chang}
		\myemail{cyijun@umich.edu}
		\myaff{University of Michigan}
		
		\textbf{Dennis Olivetti}
		\myemail{dennis.olivetti@aalto.fi}
		\myaff{Aalto University}
		
		\textbf{Mika\"el Rabie}
		\myemail{mikael.rabie@irif.fr}
		\myaff{Aalto University and IRIF, University Paris Diderot}
		
		\textbf{Jukka Suomela}
		\myemail{jukka.suomela@aalto.fi}
		\myaff{Aalto University}

	\end{mycover}
	
	\medskip
	\begin{myabstract}
		\noindent\textbf{Abstract.}
		Consider a computer network that consists of a path with $n$ nodes. The nodes are labeled with inputs from a constant-sized set, and the task is to find output labels from a constant-sized set subject to some local constraints---more formally, we have an LCL (locally checkable labeling) problem. How many communication rounds are needed (in the standard LOCAL model of computing) to solve this problem?

		It is well known that the answer is always either $O(1)$ rounds, or $\Theta(\log^* n)$ rounds, or $\Theta(n)$ rounds. In this work we show that this question is \emph{decidable} (albeit PSPACE-hard): we present an algorithm that, given any LCL problem defined on a path, outputs the distributed computational complexity of this problem and the corresponding asymptotically optimal algorithm.
	\end{myabstract}

	\thispagestyle{empty}
	\setcounter{page}{0}
	\newpage

	\section{Introduction}\label{sec:introduction}

To what extent is it possible to \emph{automate} the design of algorithms and the study of computational complexity? While algorithm synthesis problems are typically undecidable, there are areas of theoretical computer science in which we can make use of computational techniques in algorithm design---at least in principle, and sometimes also in practice. One such area is the theory of \emph{distributed computing}; see \cite{Dolev2016,Hirvonen2017,Rybicki2015,chang17hierarchy,Brandt2017,Bloem2016,Faghih2015,Klinkhamer2016} for examples of recent success stories. In this work we bring yet another piece of good news:
\begin{framed}
\noindent Consider this setting: there is a computer network that consists of a path with $n$ nodes, the nodes are labeled with inputs from a constant-sized set, and the task is to find output labels from a constant-sized set subject to some local constraints. We show that for any given set of local constraints, \emph{it is decidable} to tell what is the asymptotically optimal number of communication rounds needed to solve this problem (as a function of $n$, for the worst-case input).
\end{framed}

\paragraph{\boldmath Background: $\lcl$s and the $\LOCAL$ Model.}

We focus on what are known as $\lcl$ (\emph{locally checkable labeling}) problems \cite{Naor1995} in the $\LOCAL$ model of distributed computing \cite{Linial1992,Peleg2000}. We define the setting formally in Section~\ref{sec:local}, but in essence we look at the following question:
\begin{itemize}
	\item We are given an unknown input graph of maximum degree $\Delta = O(1)$; the nodes are labeled with \emph{input labels} from a constant-size set $\LabelIn$, and the nodes also have unique identifiers from a polynomially-sized set.
	\item The task is to label the nodes with \emph{output labels} from a constant-size set $\LabelOut$, subject to some \emph{local} constraints $\mathcal{P}$; a labeling is globally feasible if it is locally feasible in all radius-$r$ neighborhoods for some $r = O(1)$.
	\item Each node has to produce its own output label based on the information that it sees in its own radius-$T(n)$ neighborhoods for some function $T$.
\end{itemize}
Here the local constraints $\mathcal{P}$ define an $\lcl$ problem. The rule that the nodes apply to determine their output labels is called a \emph{distributed algorithm} in the $\LOCAL$ model, and function $T(n)$ is the \emph{running time} of the algorithm---here $T(n)$ determines how \emph{far} a node has to see in order to choose its own part of the solution, or equivalently, how many \emph{communication rounds} are needed for each node to gather the relevant information if we view the input graph as a communication network.

In this setting, the case of $T(n) = \Theta(n)$ is trivial, as all nodes can see the entire input. The key question is to determine which problems $\mathcal{P}$ can be solved in sublinear time---here are some examples:
\begin{itemize}
	\item Vertex coloring with $\Delta+1$ colors: can be solved in time $O(\log^* n)$ \cite{Goldberg1988,cole86deterministic} and this is tight \cite{Linial1992,Naor1991}.
	\item Vertex coloring with $\Delta$ colors, for $\Delta > 2$: can be solved in polylogarithmic time \cite{panconesi95delta} and requires at least logarithmic time \cite{chang16exponential} for deterministic algorithms.
\end{itemize}

While the study of this setting was initiated already in the seminal work by Naor and Stockmeyer in 1995 \cite{Naor1995}, our understanding of these questions has rapidly advanced in the past three years \cite{Balliu2018stoc,Balliu2018disc,Brandt2016,chang16exponential,chang17hierarchy,fischer17sublogarithmic,ghaffari17distributed,Ghaffari2018,Ghaffari2018a,Pettie2018}. The big surprises have been these:
\begin{itemize}
	\item There are $\lcl$ problems with infinitely many different time complexities---for example, we can construct $\lcl$ problems with a time complexity exactly $\Theta(n^{\alpha})$ for any rational number $0 < \alpha \le 1$.
	\item Nevertheless, there are also wide gaps in the complexity landscape: for example, no $\lcl$ problem has a (deterministic) computational complexity that is between $\omega(\log^* n)$ and $o(\log n)$.
\end{itemize}

However, what is perhaps most relevant for us is the following observation: if we look at the case of $\Delta = 2$ (paths and cycles), then the time complexity of any $\lcl$ problem is either $O(1)$, $\Theta(\log^* n)$, or $\Theta(n)$, and the same holds for both deterministic and randomized algorithms \cite{Naor1995,Brandt2017,chang17hierarchy}.

\paragraph{\boldmath Decidability of $\lcl$ Time Complexities.}

For a fixed $\Delta$, any $\lcl$ problem has a trivial finite representation: simply enumerate all feasible radius-$r$ local neighborhoods. Hence it makes sense to ask whether, given an $\lcl$ problem, it is possible to determine its time complexity. The following results are known by prior work:
\begin{itemize}
	\item If the input graph is an \emph{unlabeled path or cycle}, the time complexity is decidable \cite{Naor1995,Brandt2017}.
	\item If the input graph is a \emph{grid or toroidal grid}, the time complexity is undecidable \cite{Naor1995}. However, there are also some good news: in unlabeled toroidal grids, the time complexity falls in one of the classes $O(1)$, $\Theta(\log^* n)$, or $\Theta(n)$, it is trivial to tell if the time complexity is $O(1)$, and it is semi-decidable to tell if it is $\Theta(\log^* n)$ \cite{Brandt2017}.
	\item In the case of trees, there are infinitely many different time complexities, but there is a gap between $\omega(\log n)$ and $n^{o(1)}$, and it is decidable to tell on which side of the gap a given problem lies \cite{chang17hierarchy}.
\end{itemize}
Somewhat surprisingly, the seemingly simple case of \emph{labeled paths or cycles} has remained open all the way since the 1995 paper by Naor and Stockmeyer \cite{Naor1995}, which defined $\lcl$s with inputs but analyzed decidability questions only in the case of unlabeled graphs.

We initially expected that the question of paths with input labels is a mere technicality and the interesting open questions are related to much broader graph families, such as rooted trees, trees, and bounded-treewidth graphs. However, it turned out that the \emph{main obstacle for understanding decidability in any such graph family seems to lie in the fact that the structure of the graph can be used to encode arbitrary input labels}, hence it is necessary to first understand how the input labels influence decidability---and it turns out that this makes all the difference in the case of paths.

In this work we show that the time complexity of a given $\lcl$ problem on \emph{labeled paths or cycles} is decidable. However, we also show that decidability is far from trivial: the problem is PSPACE-hard, as $\lcl$ problems on labeled paths are expressive enough to capture linear bounded automata (Turing machines with bounded tapes).

	\section{Model}\label{sec:local}
	
	\paragraph{\boldmath The $\LOCAL$ Model.}
	The model of computation we consider in this work is the $\LOCAL$ model of distributed computing \cite{Linial1992,Peleg2000}.
	In the $\LOCAL$ model, each node of the input graph is considered as a computational entity that can communicate with the neighboring nodes in order to solve some given graph problem.
	Computation is divided into synchronous rounds, where in each round each node first sends messages of arbitrary size to its neighbors, then receives the messages sent by its neighbors, and finally performs some local computation of arbitrary complexity.
	Each node is equipped with a globally unique identifier ($\ID$) which is simply a bit string of length $O(\log n)$, where $n$ denotes the number of nodes of the input graph.
	In the beginning of the computation, each node is aware of its own $\ID$, the number of nodes and the maximum degree $\Delta$ of the input graph, and potentially some additional problem-specific input.
	Each node has to decide at some point that it terminates, upon which it returns a local output and does not take part in any further computation; the problem is solved correctly if the local outputs of all nodes together constitute a global output that satisfies the output constraints of the given problem.
	
	Each node executes the same algorithm; the running time of the distributed algorithm is the number of rounds until the last node terminates.
	It is well known that, due to the unbounded message sizes, an algorithm with runtime $T(n)$ can be equivalently described as a function from the set of all possible radius-$T(n)$ neighborhoods to the set of allowed outputs.
	In other words, we can assume that in a $T(n)$-round algorithm, each node first gathers the topology of and the input labels contained in its radius-$T(n)$ neighborhood, and then decides on its output based solely on the collected information.
	
	\paragraph{Locally Checkable Labelings.}
	The class of problems we consider is \emph{locally checkable labeling} ($\lcl$) problems \cite{Naor1995}.
	$\lcl$ problems are defined on graphs of bounded degree, i.e., we will assume that $\Delta = O(1)$.
	Formally, an $\lcl$ problem is given by a finite input label set $\LabelIn$, a finite output label set $\LabelOut$, an integer $r$, and a finite set $\mathcal C$ of graphs where every node is labeled with a pair $(\ell_{\operatorname{in}}, \ell_{\operatorname{out}}) \in \LabelIn \times \LabelOut$ and one node is marked (as the center).
	Each node of the input graph is assigned an input label from $\LabelIn$ before the computation begins, and the global output of a distributed algorithm is correct if the radius-$r$ neighborhood of each node $v$, including the input labels given to the contained nodes and the output labels returned by the contained nodes, is isomorphic to an element of $\mathcal C$ where $v$ corresponds to the node marked as the center.

	In the case of directed paths as our class of input graphs, we are interested in identifying the simplest possible form of $\lcl$ problems. For this purpose, we define \emph{$\beta$-normalized} $\lcl$s; these are problems for which the input is just binary, and the size of the set of output labels is $\beta$. Moreover, the solution can be checked at each node $v$ by just inspecting the input and output of $v$, and, separately, the output of $v$ and the output of its predecessor. More formally, a \emph{$\beta$-normalized} $\lcl$ problem is given by finite input and output label sets $\LabelIn$, $\LabelOut$ satisfying $|\LabelIn| = 2$, $|\LabelOut| = \beta$, a finite set $\mathcal C_{\operatorname{in}-\operatorname{out}}$ of pairs $(\ell_{\operatorname{in}}, \ell_{\operatorname{out}}) \in \LabelIn \times \LabelOut$ and a finite set $\mathcal C_{\operatorname{out}-\operatorname{out}}$ of pairs $(\ell_{\operatorname{out}}, \ell'_{\operatorname{out}}) \in \LabelOut \times \LabelOut$.
	The global output of a distributed algorithm for the $\beta$-normalized $\lcl$ problem is correct if the following hold:
	\begin{itemize}
		\item For each node $v$, we have $(\linput(v), \loutput(v)) \in \mathcal C_{\operatorname{in}-\operatorname{out}}$, where $\linput(v)$ denotes the input label of $v$, and $\loutput(v)$ the output label of $v$.
		\item For each node $v$ that has a predecessor, we have $(\loutput(v), \loutput(u)) \in \mathcal C_{\operatorname{out}-\operatorname{out}}$, where $u$ is the predecessor of $v$, and $\loutput(v), \loutput(u)$ are the output labels of $v$ and $u$,  respectively.
	\end{itemize}
	It is straightforward to check that a $\beta$-normalized $\lcl$ problem is indeed a special case of an $\lcl$ problem where $r = 1$.

	\section{Hardness}\label{sec:lb}
	In this section we study the hardness of determining the distributed complexity of \lcl{}s on paths and cycles with input labels. More precisely, we start by proving the existence of a family $\Pi$ of \lcl{} problems for consistently globally oriented paths, such that, given an \lcl{} problem in $\Pi$, it is PSPACE-hard to decide if its distributed complexity is $O(1)$ or $\Theta(n)$. Our main result shows the following.
	\begin{framed}
		\noindent It is PSPACE-hard to distinguish whether a given \lcl{} problem $\PP$ with input labels can be solved in $O(1)$ time or needs $\Omega(n)$ time on globally oriented path graphs.	
	\end{framed}
	
	The high level idea of the proof of the above result is as follows. We would like to encode the execution of Turing machines as \lcl{}s on consistently oriented paths, and then define some $\lcl{}$ for which the complexity depends on the running time of the machine. This is fairly easy on oriented grids, for example, where we can use one dimension of the grid as a tape, and the other dimension as time. One may try to do the same on paths, by projecting everything on a single dimension, concatenating the tape state of each step. Unfortunately, the obtained encoding is not locally checkable, since the length of the tape may be non-constant. Hence, in order to guarantee the local checkability, we should consider Turing machines having a tape of size at most $B$, where $B$ is a constant with respect to the number of nodes in the path where we want to encode its execution. For this purpose, we consider Linear Bounded Automata (\lba)~\cite[p.~225]{HU79}. An \lba{} is a Turing machine that has a tape of size upper bounded by some $B$. We show that, if $B$ is constant with respect to the number of nodes in the path, we can then encode the execution of an \lba{} $M_B$ as an \lcl{} for directed paths. Moreover, we show that by seeing this encoding as a two party game between a \emph{prover} and a \emph{disprover}, we can encode the execution of $M_B$ using labels of constant size that do not depend on $B$, even in the case in which the \lcl{} checkability radius is $1$. If the execution of  $M_B$ is not correctly encoded in the input of the \lcl, then we can disprove its correctness using output labels of size $O(B)$. Moreover, we ensure that, if the execution of $M_B$ is correctly encoded in the input of the \lcl, it is not possible to produce a correct proof of non-correctness. Then, in order to obtain an \lcl{} with a distributed complexity that depends on the execution time of $M_B$, we encode some secret input at the first node of the path. We require then that all nodes involved in a correct encoding must produce the same secret as output.
	
	\begin{figure}
		\centering
		\includegraphics[width=\textwidth]{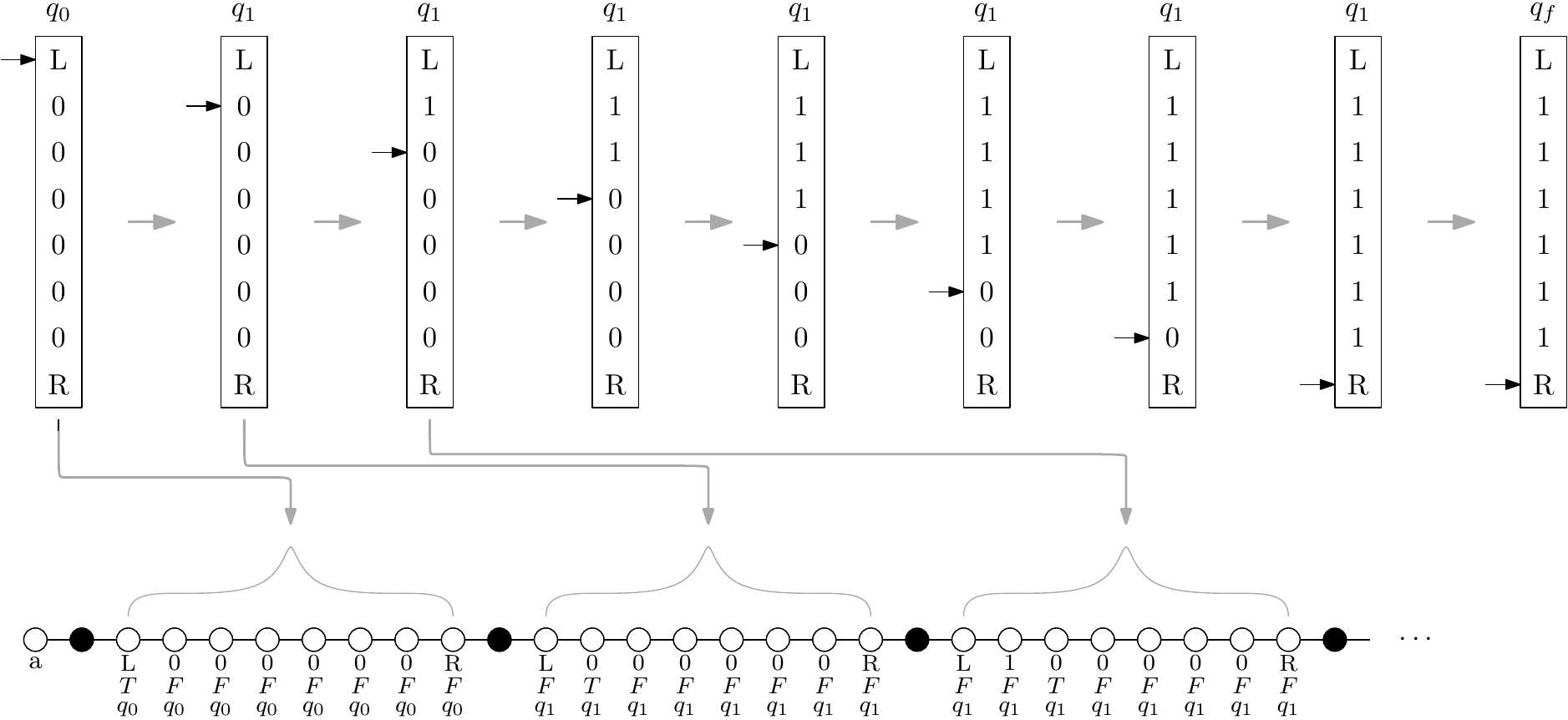}
		\caption{Illustration of a correct encoding of the execution of an \lba{} on a path; black nodes act as separators between the encoding of two consecutive steps of the \lba; in the example, the \lba{} executes a unary counter.}\label{good_input}
	\end{figure}

	Figure \ref{good_input} shows an example of an \lba{} that executes a unary counter, and its encoding as input to nodes on a path. In this instance, all nodes must produce the symbol $a$ as output. Figure  \ref{error_example}  shows an example of the wrong input (the tape has been copied incorrectly between two consecutive steps of the \lba{}). In this case, nodes are allowed to produce a chain of errors. Different types of errors will be handled using different types of error labels. In the example, all nodes that produce the error chain, output $E^2$, indicating an error of type $2$. We will show that we need $O(B)$ symbols to handle all possible errors (including the case in which the input tape is too long, way more than~$B$). Also, it is necessary that all error chains that we allow as outputs must be locally checkable.
	
	\begin{figure}
		\centering
		\includegraphics[width=\textwidth]{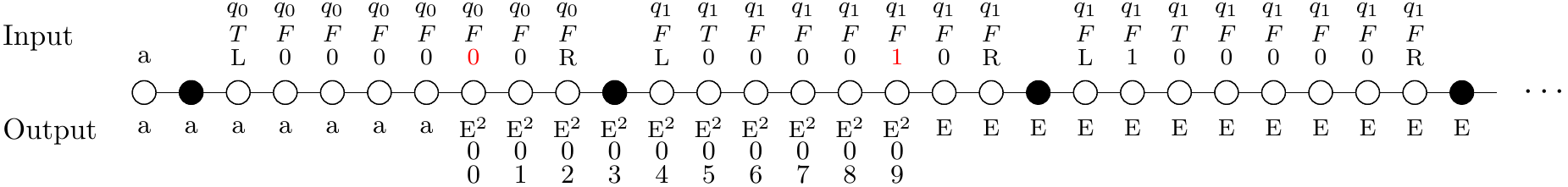}
		\caption{Illustration of an incorrect encoding of the execution of an \lba{} on a path; in the example, the tape of the \lba{} is wrongly copied (the inputs in red are different, while they should be the same). The error output $E^2$ encodes the distance of $B+1$ between the two nodes, and the input wrongly copied.}\label{error_example}
	\end{figure}
	
	Another interesting problem is to identify, for an \lcl{} that can be distributedly solved in constant time, how big this constant can be. In particular, we first focus on identifying the simplest possible description of an \lcl{}, and then, we provide a lower bound on the complexity of a constant time \lcl{}, as a function of the size of the \lcl{} description. For this purpose, we consider $\beta$-normalized \lcl{}s, i.e., problems for which the input labeling is just binary and there are $\beta$ possible output labels. Also, the verifier for these \lcl{}s is the simplest possible: it can only check if the output of a node is correct w.r.t.\ its input, and separately, if the output of a node is correct w.r.t.\ the output of its predecessor. Therefore, we show how to convert an \lcl{} to a $\beta$-normalized one by encoding the input in binary (Figure \ref{normalization} shows an example), and obtain the following result.
	\begin{framed}
		\noindent There are $\beta$-normalized \lcl{}s that can be solved in constant time but the distributed time complexity is $2^{\Omega(\beta)}$.
	\end{framed}

	\begin{figure}
		\centering
		\includegraphics[width=0.8\textwidth]{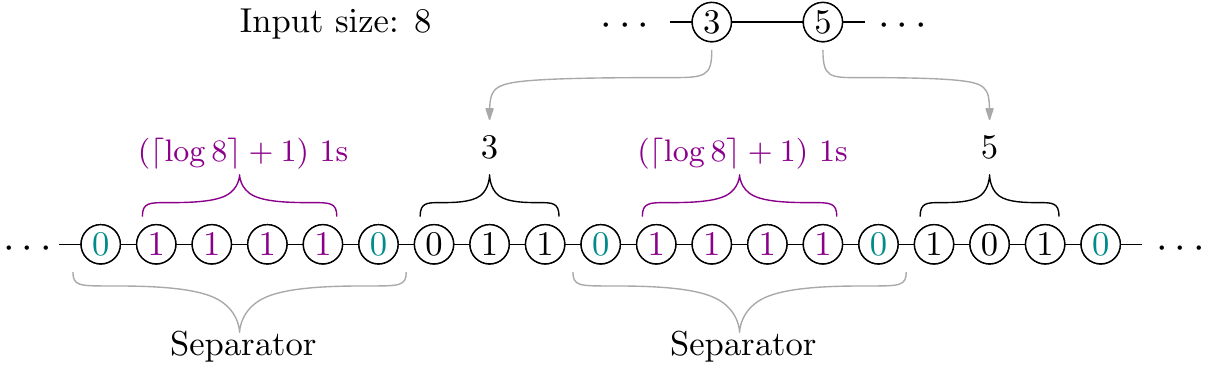}
		\caption{Illustration of the normalization of an \lcl.}\label{normalization}
	\end{figure}
	
	All results that we have been described so far apply to globally oriented paths. Nevertheless, we show that ideas and techniques can be generalized to work on undirected path and cycles as well, obtaining essentially the same results. Finally, we will show how to lift these results to trees \emph{without} input labels, proving the following result.
	\begin{framed}
		\noindent It is PSPACE-hard to distinguish whether a given \lcl{} problem $\PP$ without input labels can be solved in $O(1)$ time or needs $\Omega(n)$ time on trees with degree $\Delta = 3$.
	\end{framed}

\subsection{Linear Bounded Automata}
A Linear Bounded Automata $M_B$ is a Turing Machine having a \emph{bounded tape} of size at most $B$, such that it is able to recognize the boundaries of the tape~\cite[p.~225]{HU79}. More formally, we define an \lba{} as a tuple of $5$ elements $M = (Q,q_0,q_f,\Gamma,\delta)$, where
\begin{itemize}[noitemsep]
	\item $Q$ is a finite set of states;
	\item $q_0 \in Q$ is the initial state;
	\item $q_f \in Q$ is the final state;
	\item $\Gamma$ is a finite set of tape alphabet symbols that contains integers $0$, $1$, and special symbols $L$ (\emph{left}), and $R$ (\emph{right});
	\item $\delta$ is the transition function, where $\delta \colon Q\setminus\{q_f\} \times \Gamma \rightarrow Q \times \Gamma \times \{-,\leftarrow,\rightarrow\}$.
\end{itemize}
The tape of $M_B$ is initialized as follows:
\begin{itemize}[noitemsep]
	\item the first cell is marked with the symbol $L$;
	\item the last cell is marked with the symbol $R$;
	\item all other cells contain an integer in $\{0,1\}$.
\end{itemize}
An execution of an \lba{} is a sequence $(\lstep_i ~|~ i \in \{1,\ldots, t\})$, where
\begin{itemize}[noitemsep]
	\item $\lstep_i = (\lstate_i,\ltape_i,\lhead_i)$;
	\item $\lstate_t = q_f$;
	\item $\delta(\lstate_i,\ltape_i[\lhead_i]) = (\lstate_{i+1},\ltape_{i+1}[\lhead_i],\epsilon)$, and $\lhead_{i+1} $ is
	\begin{itemize}[noitemsep]
		\item $\lhead_{i}-1 $ if $\epsilon$ is $\leftarrow$;
		\item $\lhead_{i}$ if $\epsilon$ is $-$;
		\item $\lhead_{i}+1$ if $\epsilon$ is $\rightarrow$.
	\end{itemize}
\end{itemize}

\subsection{\boldmath The \lcl{} Problem}
We define a family $\Pi$ of \lcl{}s, in which each problem $\Pi_{M_B}$ depends on the \lba{} $M_B$. The general idea is that the input of the \lcl{} may encode the execution of an \lba{} $M_B$. If it is the case, nodes are required to solve a problem that requires a time proportional to the execution time of $M_B$. On the other hand, if it is not the case, nodes can produce an output that proves that this encoding is wrong. In order to define valid \lcl{}s, we consider the case where $B=O(1)$, that is, the size of the tape does not depend on the size of the distributed network.

\subsubsection{Input Labels}
We define the input labels of our \lcl{} as follows:
\begin{itemize}[noitemsep]
	\item $\lstart(\phi)$, where $\phi\in\{a,b\}$, indicates a symbol that will be used as some kind of secret;
	\item \lseparator, a label that acts as a separator between two steps of $M_B$;
	\item $\tape(c, s, h)$ gives information about the tape and the state of $M_B$, where the content $c\in\{0,1,L,R\}$, the state $s\in Q$, and the head $h\in \{\mbox{true}, \mbox{false}\}$;
	\item \lempty{}, indicating an empty input.
\end{itemize}
Note that the size of the set of possible input labels does not depend on the size $B$ of the tape.

\subsubsection{\boldmath Encoding an \lba{} on a Path}
Suppose we have a consistent global orientation in the path $P=(p_0,p_1, p_2, \ldots, p_{n-1})$. Let $\lstep = \big(\lstep_i = (\lstate_i,\ltape_i,\lhead_i) ~|~ i \in \{1,\ldots, t\}\big)$ be the execution of the \lba{} $M_B$ starting from a tape initialized with $(L,0,\ldots,0,R)$.
\begin{definition}
	The input of the \lcl{} is a \textbf{good input} if the first node of the path has in input $\lstart(\phi)$, where $\phi\in\{a,b\}$, and the rest of the path correctly encodes the execution of an \lba{} $M_B$ initialized with $(L,0,\ldots,0,R)$ (see Figure \ref{good_input}). More precisely:
	\begin{itemize}[noitemsep]
		\item $\linput(p_0)  = \lstart(\phi)$;
		
		\item $\linput(p_{(i-1)(B+1)+1}) = \lseparator$ for $i \in {1,\ldots,t}$;
		
		\item $\linput(p_{(i-1)(B+1)+1+j}) = \tape(c,s,h)$ for $i \in {1,\ldots,t}$, $j \in \{1,\ldots,B\}$, where
		\begin{itemize}[noitemsep]
			\item $c = \ltape_i[j]$;
			\item $s = \lstate_i$;
			\item $h = \mbox{true}$ if $\lhead_i = j$, otherwise $h = \mbox{false}$;
		\end{itemize}
		\item All other nodes have in input $\lempty$.
	\end{itemize}
\end{definition}

\subsubsection{Output Labels}
The set of output labels is the following.
\begin{itemize}[noitemsep]
	\item $\lstart(\phi)$;
	
	\item $\lempty$;
	
	\item $\lerror$: a generic error label;
	
	\item $\lerror^0(i)$ where $0\le i \le B+1$: an error of type $0$ indicating that the machine is not correctly initialized;
	
	\item $\lerror^1(i)$, where $0\le i \le B$: an error of type $1$ that we will use in the case where the size of the tape is not correct, i.e., when the size of the tape is not $B$;
	
	\item $\lerror^2(x, i)$, where $x\in\{0,1,L,R\}$ and $0\le i \le B+1$: an error of type $2$ used when the tape of $M_B$ is wrongly copied;
	
	\item $\lerror^3$: an error of type $3$ is used in case nodes have inconsistent states;
	
	\item $\lerror^4(\currstate, \tapecontent, i)$, where $0\le i \le B+2$: an error of type $4$ indicating that the transition of $M_B$ is encoded incorrectly (this error captures also the case where the head is missing);
	
	\item $\lerror^5(x)$ where $x \in \{0,1\}$: an error of type 5 used in the case when there is more than one head.
	
\end{itemize}

\subsubsection{\boldmath \lcl{} Constraints}
The high level idea is the following. If the path encodes a good input, then nodes that are not labeled $\lempty$ are required to output the input given to the first node of the path (either $a$ or $b$). Otherwise, nodes can produce a locally checkable proof of an error (see Figure \ref{error_example} for an example). While nodes may output $a$ or $b$ even in the case in which the input is not a good input, nodes must not be able to produce a proof error in the case in which the path encodes a good input. We describe all these requirements as locally checkable constraints.

An output labeling for problem $\Pi_{M_B}$ is correct if the following conditions are satisfied for nodes of the path $P=(p_0,\ldots,p_{n-1})$. Note that, although nodes do not know their position on the path, for the sake of simplicity we will denote with $p_{i-1}$ the predecessor of $p_i$, if it exists.

\begin{enumerate}
	\item Each node $v$ produces exactly one output label.
	
	\item If $\loutput(v)=\lempty$ then $\linput(v)=\lempty$.
	
	\item If $v$ has no predecessors (i.e., $v=p_0$) and $\loutput(v)=\lstart(\phi)$, then $\linput(v)=\lstart(\phi)$.
	
	\item If $\loutput(p_{i-1}) = a$ then $\loutput(p_{i}) \neq b$, and if $\loutput(p_{i-1}) = b$ then $\loutput(p_{i}) \neq a$.
	
	\item If $\loutput(p_i)=\lerror^0(j)$, then
	\begin{itemize}
		\item if $j=0$ then the node has no predecessor;
		\item if $j>0$ then $\loutput(p_{i-1})= \lerror^0(j-1)$.
	\end{itemize}
	
	\item If $\loutput(p_i)=\lerror^1(j)$, then
	\begin{itemize}
		\item if $j=0$, then $\linput(p_i)=\lseparator$;
		\item if $j> 0$ then  $\linput(p_i)\neq\lseparator$ and  $\loutput(p_{i-1})=\lerror^1(j-1)$.
	\end{itemize}
	
	\item If $\loutput(p_i)=\lerror^2(x, j)$, then
	\begin{itemize}
		\item if $j=0$, then $\linput(p_i) = \tape(c,s,h)$ where $h=\mbox{false}$, $c=x$;
		\item if $j=B+1$ then $\linput(p_i) = \tape(c,s,h)$ where $c \neq x$;
		\item if $0<j< B+1$ then $\loutput(p_{i-1})=\lerror^2(x,j-1)$.
	\end{itemize}
	
	\item If $\loutput(p_i)=\lerror^3$, then $\linput(p_i)=\tape(c,s,h)$, $\linput(p_{i-1})=\tape(c',s',h')$, and $s \neq s'$.	
	
	\item If $\loutput(p_i)=\lerror^4(\currstate, \tapecontent, j)$, let $(\transtate,\newcontent,\epsilon) = \delta(\currstate,\tapecontent)$
	\begin{itemize}
		\item if $j=0$, then $\linput(p_i)=\tape(c,s,h)$ where $c=\tapecontent$, $s=\currstate$, $h=\mbox{true}$;
		\item if $j= B$ and $\epsilon = {\leftarrow}$, or $j= B+1$ and $\epsilon = -$, or $j=B+2$ and $\epsilon = {\rightarrow}$ (i.e., if node $p_i$ is an ``$\lerror^4$ final node''), then either $\currstate$ is a final state or $\linput(p_i)=\tape(c,s,h)$ where $s \neq \transtate$ or $h = \mbox{false}$;
		\item otherwise, then $\loutput(p_{i-1})=\lerror^4(\currstate, \tapecontent, j-1)$.
	\end{itemize}
	
	\item If $\loutput(p_i)=\lerror^5(x)$
	\begin{itemize}
		\item if $\loutput(p_{i-1})\neq \lerror^5$ then $\linput(p_i)=\tape(c,s,h)$ where $h=\mbox{true}$ and $x=0$.
	\end{itemize}
	
	\item If $\loutput(p_i)=\lerror$ then one of the following condition holds:
	\begin{itemize}
		\item $\linput(p_i)\neq\lstart(\phi)$ and $p_i$ has no predecessors;
		\item $\linput(p_i)=\lstart(\phi)$ and $p_i$ has a predecessor;
		\item $\linput(p_{i-1})$ or $\loutput(p_{i-1})$ is $\lempty$;
		\item $\loutput(p_{i-1}) = \lerror$;
		\item $\loutput(p_{i-1}) = \lerror^0(j)$, $j>0$, and
		\begin{itemize}
			\item if $j=1$ then $\linput(p_{i-1}) \neq \lseparator$;
			\item if $j\ge 2$ then either $\linput(p_{i-1})\neq \tape$, or $\linput(p_{i-1})=\tape(c,s,h)$ and:
			\begin{itemize}
				\item if $j=2$ either $c\neq L$, or $s\neq q_0$ or $h=\mbox{false}$;
				\item if $2< j\le B$ either $c\neq 0$, or $s\neq q_0$ or $h=\mbox{true}$;
				\item if $j = B+1$, either $c\neq R$, or $s\neq q_0$ or $h=\mbox{true}$;
			\end{itemize}
		\end{itemize}
		
		\item $\linput(p_i)=\lseparator$ and $\loutput(p_{i-1})= \lerror^1(x)$ where $x \neq B$;
		
		\item $\linput(p_i)\neq\lseparator$ and $\loutput(p_{i-1}) = \lerror^1(B)$;
		
		\item $\loutput(p_{i-1}) = \lerror^2(x,j)$ where $j = B + 1$;

		\item $\loutput(p_{i-1}) = \lerror^3$;
		
		\item $p_{i-1}$ is an ``$\lerror^4$ final node'';
		
		\item $\loutput(p_{i-1}) = \lerror^5(x)$ and $\linput(p_{i-1})=\tape(c,s,h)$ where $h=\mbox{true}$ and $x=1$.
	\end{itemize}
	
	\item If $\loutput(p_{i})$ is of type $\lerror^x$, then $\loutput(p_{i-1})$ must not be of type $\lerror^y$ where $y \neq x$.
\end{enumerate}
The following property directly holds by definition of the constraints.
\begin{property}
	Each node is able to locally check all constraints by just inspecting its own input and output, and the ones of its predecessor (if it exists).
\end{property}

\subsection{\boldmath Upper Bound on the Complexity of the \texorpdfstring{$\lcl{}$}{LCL}}
We need to consider two possible scenarios: either $M_B$ terminates within time $T$, or $M_B$ loops. In the case in which $M_B$ loops, we show a simple $O(n)$ algorithm that solves the \lcl{} $\Pi_{M_B}$. As we know, any problem for which a solution exists can be solved in $O(n)$ rounds in the $\LOCAL$ model by gathering all the graph and solving the problem locally. There always exists a solution for problem $\Pi_{M_B}$ if $M_B$ loops, in fact:
\begin{itemize}[noitemsep]
	\item If $\linput(p_0)=\lstart(\phi)$, then all nodes output $\phi$, even if there are errors in the machine encoding.
	\item Otherwise, if $\linput(p_0)\neq\lstart(\phi)$, all nodes output $\lerror$.
\end{itemize}
It is easy to see that this output satisfies the \lcl{} constraints described above.

Suppose that $M_B$ terminates. In this case, we show how to solve the \lcl{} problem $\Pi_{M_B}$ in constant time. More precisely, if $M_B$ terminates in $T$ rounds, we show a distributed algorithm that solves $\Pi_{M_B}$ in $T' = 2+(B+1)T$ rounds. Each node $v$ starts by gathering its $T'$-radius neighborhood $B_v(T')$. Notice that, by definition, if the input is a good input, then for each node $v$ that is taking part in the encoding of the execution of $M_B$ (i.e., $\linput(v)\neq \lempty$), $B_v(T')$ contains $p_0$. Hence, if a node $v$ does not see $p_0$ after gathering its ball $B_v(T')$, it means that the input is not a good input. So, after gathering its $T'$-radius ball, each node $v$ does the following.
\begin{itemize}[noitemsep]
	\item If $\linput(v)=\lempty$, then $\loutput(v)=\lempty$.
	\item If $B_v(T')$ does not contain $p_0$, or if $\linput(p_0) \neq \lstart(\phi)$, then $v$ outputs $\lerror$.
	\item If $B_v(T')$ is a good input, then $v$ outputs $\lstart(\phi)$.
\end{itemize}
The remaining case that we still need to handle is when $B_v(T')$ contains $p_0$, $\linput(p_0) = \lstart(\phi)$, but $B_v(T')$ does not look like a good input. We want nodes to produce a proof of an error in some consistent way. Thus, we show that nodes can identify the first error and produce a proof based on that. First of all, notice that, since $v$ sees the first node in the path, $v$ can compute its position $i$ on the path. Also, node $v$ can identify who is the first node $u$ not satisfying the constraints of being a good input. Let $j$ be the position of $u$ in the path, that is $u=p_j$. Now we distinguish the following cases based on $B_u(B+2)$ (the output of each node will be determined by the first case encountered in the following list).
\begin{enumerate}
	\item If $\linput(p_j)=\lstart(\phi)$ and $j\neq 0$, then, if $i< j$, $\loutput(v) = \lstart(\phi)$; otherwise $\loutput(v)=\lerror$.
	
	\item If $j \le B+1$, it means that either the initial state is encoded incorrectly, or the tape is not initialized correctly, or the head is not initialized on the correct position. In this case, if $i\le j$, then $\loutput(v) = \lerror^0(i)$, otherwise $\loutput(v)=\lerror$.
	
	\item If $\linput(p_{j-(B+1)}) = \lseparator$ and $\linput(p_j) \neq \lseparator$, then the length of the tape is too long, and $u$ expected to have in input \lseparator. Then, if $i < j-(B+1)$, $\loutput(v)=\lstart(\phi)$; if $i > j$ then $\loutput(v)=\lerror$; otherwise, $\loutput(v)=\lerror^1(i - j + B + 1)$.
	
	\item If $\linput(p_{j}) = \lseparator$ and there exists a $k$ such that $1 \le j-k < B+1$ such that $\linput(p_{k}) = \lseparator$, then the length of the tape is too short, and $u$ did not expect to have a separator. In this case, if $i < k$ then $\loutput(v)=\lstart(\phi)$; if $i \ge j$ then $\loutput(v)=\lerror$; otherwise $\loutput(v)=\lerror^1(k-i)$.
	
	\item If $\linput(p_{j-(B+1)}) = \tape(c,s,h)$ where $h=\mbox{false}$, $c=x$, and $\linput(p_j) = \tape(c',s',h')$, where $c'\neq x$, then the tape of $M_B$ has been copied incorrectly. In this case, if $i < j-(B+1)$, then $\loutput(v)=\lstart(\phi)$; if $i > j$ then $\loutput(v)=\lerror$; otherwise, $\loutput(v)=\lerror^2(x, i - j + B + 1)$.
	
	\item If $\linput(p_{j}) = \tape(c,s,h)$ and  $\linput(p_{j-1}) = \lseparator$ and there exists a $k < j + B$ such that $\linput(p_{k}) = \tape(c',s',h')$ and that $s \neq s'$, it means that nodes have inconsistent states. Consider the minimum $k$ satisfying the constraints. If $i<k$ then $\loutput(v)=\lstart(\phi)$; if $i > k$ then $\loutput(v)=\lerror$; otherwise, $\loutput(v)=\lerror^3$.
	
	\item If none of the above is satisfied, it means that there exist a $k$ satisfying $j-k \le B+2$, such that $\linput(p_k) = \tape(c,s,h)$ and $h = \mbox{true}$. Let $(\transtate,\newcontent,\epsilon) = \delta(s,c)$. It holds that if $\epsilon$ is $\leftarrow$, $-$, or $\rightarrow$, then $j-k$ is respectively $B$, $B+1$, or $B+2$. If $\linput(p_j) = \tape(c',s',h')$, where either $h' = \mbox{false}$, or $\transtate \neq s'$, or $s$ is a final state, then there is some error in the transition (this captures also the case where there is no head). If $i < k$, then $\loutput(v)=\lstart(\phi)$; if $i > j$ then $\loutput(v)=\lerror$; otherwise, $\loutput(v)=\lerror^4(s, c, k-i)$. Notice that this case captures also the one where the head is missing.
	
	\item If $\linput(p_j)=\tape(c,s,h)$ where $h=\mbox{true}$, since all the above cases are not satisfied, it means that there exists a $k$, such that $\linput(p_k)=\tape(c',s',h')$, $h'=\mbox{true}$, $|j-k| < B$ and all nodes $p_{min(j,k)},\ldots,p_{max(j,k)}$ are labeled with some $\tape$. That is, there are at least two heads, one on node $p_j$ and one on node $p_k$. In this case, if $i < min(j,k)$, then $\loutput(v)=\lstart(\phi)$; if $i > max(j,k)$ then $\loutput(v)=\lerror$; if $i = min(j,k)$ then $\loutput(v)=\lerror^5(0)$, otherwise $\loutput(v)=\lerror^5(1)$.
\end{enumerate}
If the path encodes a good input, every node taking part in the encoding of the execution of $M_B$ outputs $\lstart(\phi)$, and in this case it is easy to see that the output satisfies the \lcl{} constraints.

Therefore, assume that the path does not correctly encode the execution of $M_B$ starting from the correct tape content. First of all, notice that the algorithm handles all possible errors in the machine encoding, that is, if the input is not good, at least one case of the list applies. Consider all nodes $v$ that do not have in input $\lempty$, that is, all nodes taking part in the encoding of the execution of $M_B$. If node $v$ sees the first node $p_0$, i.e., if the distance between $p_0$ and $v$ is at most $T'$ (notice that a good input has length $T'$), then it is easy to see that the output satisfies the \lcl{} constraints. Some care is needed in the case where a node $v$ outputs a generic error $\lerror$ and $v$ does not see $p_0$: we need to show that also in this case the output is valid, meaning that the \lcl{} constraints are satisfied. In this case, the distance between $p_0$ and $v$ is strictly greater than $T'$, and since the encoding of the execution of $M_B$ is not correct, then
\begin{itemize}[noitemsep]
	\item either the path $(p_0,\ldots,v)$ does not correctly encode the execution of $M_B$,
	\item or $M_B$ is not correctly initialized and it loops.
\end{itemize}
In the first case, some node on the path between $p_0$ and $v$ will output some specific error $\lerror^x$ where $x>0$, while in the second case initial nodes will output $\lerror^0$. In both scenarios the constraints for $\lerror$ are satisfied. The complexity of the algorithm is $O(B \cdot T)$.

\subsection{\boldmath Lower Bound on the Complexity of the \texorpdfstring{$\lcl{}$}{LCL}}
Let us define $T''$ as follows. If $M_B$ terminates in time $T$, then $T'' =T' = 2+(B+1)T$. If $M_B$ loops, then $T'' = n$. We prove a lower bound on the complexity of $\Pi_{M_B}$ of $\Omega(T'')$ rounds, by showing that $\Omega(T'')$ rounds are needed in the case where the input is a good input. In particular, we show that, in a good input, for all nodes $v$ such that $\linput(v)\neq \lempty$, $\loutput(v)$ must be $\lstart(\phi)$. The result then comes from the fact that, for some nodes, it requires $\Omega(T'')$ rounds in order to see if $\phi=a$ or $\phi=b$.

First of all, we ignore nodes that have in input $\lempty$ since, in a good input, they are at distance at least $T''+1$ from $p_0$, the first node of the path. Hence, assume that a node $v$ not having $\lempty$ in input does not output $\lstart(\phi)$. In this case, $v$ can either output a generic error $\lerror$, or a specific error $\lerror^x$. If all nodes output $\lerror$, the verifier rejects on $p_0$. If all nodes, starting from a node $p_i$ where $i>0$, output $\lerror$, and all nodes $p_{i'}$ with $i'<i$ output $\lstart(\phi)$, then the verifier rejects on $p_i$. Therefore, let us assume that there is at least a node that outputs a specific error $\lerror^x$. We write $\successor(v)$ and $\ldist(u,v)$ to denote respectively the successor of a node $v$ in the path, and the distance between two nodes $u$ and $v$ in the path.
\begin{itemize}
	\item If $x = 0$, the verifier accepts only if this error produces a chain that starts from $p_0$ and proceeds with increasing values. In order to be accepted, this chain must end at a node $w'$, and $\successor(w')$ must output $\lerror$. Then, $\successor(w')$ must witness that $w'$ indeed has a local error in the machine initialization, which is not possible in a good input.
	
	\item If $x = 1$, we could have two cases:
	\begin{itemize}
		\item there is a chain of increasing values that starts from a node $w$ with $\linput(w)=\lseparator$, and ends on a node $w'$ such that $\ldist(w,w')<B$, $\linput(\successor(w'))=\lseparator$, and $\loutput(\successor(w'))=\lerror$ (the tape is too short);
		\item there is a chain of increasing values that starts from a node $w$ with $\linput(w)=\lseparator$, and ends on a node $w'$ such that $\ldist(w,w')=B$, $\linput(\successor(w'))\neq\lseparator$, and $\loutput(\successor(w'))=\lerror$ (the tape is too long).
	\end{itemize}
	Since, in a good input, the distance between two nodes having in input \lseparator{} is always $B+1$, the above scenarios are not possible.
	
	\item If $x = 2$, there must be a chain of length exactly $B+1$, starting from a node $w$ having $\linput(w)=\tape(c,s,h)$, where $c=x\in\{0,1\}$, $h=\mbox{false}$, and ending on a node $w'$ such that $\ldist(w,w')=B+1 $, and $\linput(w')=\tape(c',s',h')$, where $c'\neq x$. In a good input, the tape content of nodes $w$ and $w'$ must be the same.
	
	\item If $x=3$, it means that there must exist two neighbors having two different states, and this can not happen in a good input.
	
	\item If $x=4$, there must be a chain that propagates the old state and old input, and the verifier accepts only if acknowledges that the transition has been wrongly encoded, which can not the case in a good input.
	
	\item If $x=5$, there must be a chain of length at least $2$ not passing through nodes having in input \lseparator, starting from a node $w$ with $\linput(w)=\tape(c,s,h)$ where $h=\mbox{true}$, and ending on a node $w'$ with $\linput(w')=\tape(c',s',h')$ where $h'=\mbox{true}$. This is not possible on a good input.
\end{itemize}
Therefore, since nodes can not output any kind of error, and since $\lempty$ is not a valid output for the nodes encoding the \lba{}, then these nodes must output $\lstart(\phi)$, where the value of $\phi$ matches the input of the first node of the path. Hence, $\Pi_{M_B}$ requires $\Omega(T'')$.

\subsection{\boldmath Normalizing an \texorpdfstring{$\lcl{}$}{LCL} Problem}
We now show how to $\beta$-normalize an \lcl{} $\Pi_{M_B}$ and obtain a new \lcl{} having roughly the same time complexity. We define three different verifiers depending on their view.
\begin{itemize}[noitemsep]
	\item A $\mathcal V_{\operatorname{in,in}-\operatorname{out,out}}$ verifier running at node $v$, checks $\linput(v)$, $\linput(\predecessor(v))$, $\loutput(v)$, and $\loutput((\predecessor(v))$.
	\item A $\mathcal V_{\operatorname{in}-\operatorname{out}}$ verifier running at node $v$, checks $\linput(v)$ and $\loutput(v)$.
	\item A $\mathcal V_{\operatorname{out}-\operatorname{out}}$  verifier running at node $v$ checks $\loutput(v)$ and $\loutput((\predecessor(v))$.
\end{itemize}

\begin{lemma}
	Consider an \lcl{} $\PP$ with $|\LabelIn^{\PP}| = \alpha$ and $|\LabelOut^{\PP}| = \beta$ that can be solved in time $T$ and can be locally checked with a $\mathcal V_{\operatorname{in,in}-\operatorname{out,out}}$ verifier. It is possible to define an \lcl{} $\PP'$ such that $|\LabelIn^{\PP'}| = \alpha$ and $|\LabelOut^{\PP'}| = \alpha \cdot \beta$ that can be solved in time $T$ and can be locally checked with a $\mathcal V_{\operatorname{in}-\operatorname{out}}$ and a $\mathcal V_{\operatorname{out}-\operatorname{out}}$ verifier.
\end{lemma}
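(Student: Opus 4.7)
My plan is to encode the input label into the output so that the stronger $\mathcal V_{\operatorname{in,in}-\operatorname{out,out}}$ verifier can be simulated by the two weaker verifiers. Concretely, I would set $\LabelIn^{\PP'} := \LabelIn^{\PP}$ and $\LabelOut^{\PP'} := \LabelIn^{\PP} \times \LabelOut^{\PP}$, so that $|\LabelOut^{\PP'}| = \alpha \cdot \beta$. Intuitively, a node's output in $\PP'$ is a pair consisting of a ``claimed input'' and a ``real output'' for $\PP$. The purpose of the first coordinate is to make the neighbor's input visible to the $\mathcal V_{\operatorname{out}-\operatorname{out}}$ verifier, which would otherwise be blind to input labels.

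Next I would define the two verifiers. The $\mathcal V_{\operatorname{in}-\operatorname{out}}$ verifier at $v$ accepts the pair $(\linput(v), (a, b))$ exactly when $a = \linput(v)$; this forces the first coordinate of every output to truthfully reproduce the node's own input label. The $\mathcal V_{\operatorname{out}-\operatorname{out}}$ verifier at $v$ accepts $((a_u, b_u), (a_v, b_v))$ exactly when the original $\mathcal V_{\operatorname{in,in}-\operatorname{out,out}}$ verifier of $\PP$ would accept the configuration with predecessor input $a_u$, predecessor output $b_u$, current input $a_v$, current output $b_v$. Combined with the first verifier, which guarantees $a_u = \linput(u)$ and $a_v = \linput(v)$, this is precisely the original constraint; so solutions of $\PP'$ are in bijection with solutions of $\PP$ via the projection $(a, b) \mapsto b$ on the output coordinate, and conversely by $b \mapsto (\linput(v), b)$ at each node.

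For the running time, the algorithm for $\PP'$ simply runs the $T$-round algorithm for $\PP$ to compute $b_v = \loutput_{\PP}(v)$, and then outputs $(\linput(v), b_v)$. Since $\linput(v)$ is known to $v$ at round $0$, no extra rounds are needed, so the complexity of $\PP'$ is also $T$.

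The argument is essentially a bookkeeping exercise, so no single step is hard; the only place one must be careful is in verifying that the two ``weak'' verifiers together impose exactly the original constraint rather than something strictly weaker. This is handled by the observation that the $\mathcal V_{\operatorname{in}-\operatorname{out}}$ verifier certifies truthfulness of the embedded input, after which the $\mathcal V_{\operatorname{out}-\operatorname{out}}$ verifier sees the same four pieces of data that the original verifier saw. Hence correctness of $\PP'$ with the two decoupled verifiers is equivalent to correctness of $\PP$ with the joint verifier, completing the construction.
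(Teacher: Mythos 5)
Your construction is exactly the paper's: the output alphabet becomes $\LabelIn^{\PP}\times\LabelOut^{\PP}$, the $\mathcal V_{\operatorname{in}-\operatorname{out}}$ verifier forces the first coordinate to equal the node's input, the $\mathcal V_{\operatorname{out}-\operatorname{out}}$ verifier replays the original $\mathcal V_{\operatorname{in,in}-\operatorname{out,out}}$ verifier on the reconstructed quadruple, and the algorithm appends its own input to the output of the original $T$-round algorithm. The proposal is correct (and even fixes a small typo in the paper, which writes $\Sigma^{\PP}_{\operatorname{out}}\times\Sigma^{\PP}_{\operatorname{out}}$ where $\Sigma^{\PP}_{\operatorname{in}}\times\Sigma^{\PP}_{\operatorname{out}}$ is clearly intended).
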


\begin{proof}
	We define $\Sigma^{\PP'}_{\operatorname{in}} = \Sigma^{\PP}_{\operatorname{in}}$, and $\Sigma^{\PP'}_{\operatorname{out}} = \Sigma^{\PP}_{\operatorname{out}} \times \Sigma^{\PP}_{\operatorname{out}}$. Let $\loutput(v) = (\lin, \lout) \in \Sigma^{\PP'}_{\operatorname{out}}$. Let $\loutput(\predecessor(v)) = (\lin',\lout') \in \Sigma^{\PP'}_{\operatorname{out}}$. The $\mathcal V_{\operatorname{in}-\operatorname{out}}$ verifier checks that $\linput(v) = \lin$. The $\mathcal V_{\operatorname{out}-\operatorname{out}}$ verifier acts the same as the $\mathcal V_{\operatorname{in,in}-\operatorname{out,out}}$ verifier executed on $((\lin',\lout'),(\lin,\lout))$. The \lcl{} problem $\PP'$ can be solved with the  following algorithm at each node $v$.
	\begin{itemize}[noitemsep]
		\item Gather the ball $B_v(T)$.
		\item Simulate the original algorithm on  $B_v(T)$; let $\lout$ be the output of this simulation.
		\item Output $(\linput(v),\lout)$.
	\end{itemize}
	It is easy to check that this output is valid for the problem $\PP'$, and that it requires $T$ rounds. Also, note that it is not possible to solve $\PP'$ faster than $T$. In fact, in order to satisfy the $\mathcal V_{\operatorname{in}-\operatorname{out}}$ verifier, the input must be copied correctly; while in order to satisfy the $\mathcal V_{\operatorname{out}-\operatorname{out}}$ verifier, we need to satisfy the $\mathcal V_{\operatorname{in,in}-\operatorname{out,out}}$ verifier executed giving the same input that it would have seen on $\PP$.
\end{proof}

\begin{lemma}
	Consider an \lcl{} $\PP$ with $|\Sigma^{\PP}_{\operatorname{in}}| = \alpha$ and $|\Sigma^{\PP}_{\operatorname{out}}| = \beta$ that can be solved in time $T$ and can be locally checked with a $\mathcal V_{\operatorname{in}-\operatorname{out}}$ and a $\mathcal V_{\operatorname{out}-\operatorname{out}}$ verifier. We can define a $\beta'$-normalized \lcl{} $\PP'$ with $\beta'=|\Sigma^{\PP'}_{\operatorname{out}}| = 2^{\gamma}\cdot (|\Sigma^{\PP}_{\operatorname{out}}| +3)$ that can be solved in time $\Theta(\gamma \cdot  T(n/\gamma))$, where $\gamma = 2\lceil\log\alpha \rceil +3$.
\end{lemma}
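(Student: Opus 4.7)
The plan is to represent each input label of $\PP$ by a \emph{block} of $\gamma$ consecutive binary-input nodes in $\PP'$, so that a $\PP'$-instance on $n$ nodes corresponds to a $\PP$-instance on $n/\gamma$ nodes. First, I would fix an injective, locally self-synchronizing encoding $\Enc\colon\LabelIn^{\PP}\to\{0,1\}^\gamma$: every codeword begins with a fixed 3-bit delimiter, and the remaining $2\lceil\log\alpha\rceil$ payload bits are chosen (for example by pushing the $\lceil\log\alpha\rceil$-bit binary representation of the input through the duplicate-and-complement map $b\mapsto b\bar b$ and picking the delimiter from a forbidden 3-bit window family) so that the delimiter cannot occur inside a codeword or straddle two consecutive codewords. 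Block boundaries are then identifiable from a constant-size window of the binary input.

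Next, I would set $\LabelOut^{\PP'}=\{0,1\}^\gamma\times R$ with $R=\LabelOut^{\PP}\sqcup\{r_1,r_2,r_3\}$, matching the required $|\LabelOut^{\PP'}|=2^\gamma(\beta+3)$. The first component of the output at a node $v$ is the $\gamma$-bit window of inputs \emph{starting} at $v$, so consecutive outputs differ by one left-shift. The second component is a tag: an element of $\LabelOut^{\PP}$ at exactly one designated position per block (carrying the simulated $\PP$-output of the super-node containing $v$) and an element of $\{r_1,r_2,r_3\}$ at every other position, distributed along the block in a fixed pattern sufficient to mark the designated output position and the block transitions. The precise position of $v$ inside its block is recovered by locating the delimiter inside the window stored in $\loutput(v)$, so the tag need not encode position on its own.

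I would then specify the verifiers. The $\mathcal V_{\operatorname{in}-\operatorname{out}}$ verifier checks that $\linput(v)$ equals the first bit of the window in $\loutput(v)$ and that the tag is consistent with the position of $v$ inside its block as read off from the delimiter. The $\mathcal V_{\operatorname{out}-\operatorname{out}}$ verifier checks that (i)~the windows of two consecutive outputs differ by one left-shift, (ii)~the tags follow the predetermined in-block sequence or mark a valid block transition, and (iii)~across a block transition, the $\PP$-outputs recorded in the tags of the two adjacent blocks are accepted by the original verifier $\mathcal V_{\operatorname{in,in}-\operatorname{out,out}}^{\PP}$ on the decoded pair of $\PP$-inputs. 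An induction along the path forces the window stored at each node to be the true input window, yielding a bijection between valid $\PP'$-solutions and valid $\PP$-solutions of the decoded instance.

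For the upper bound, each node gathers its radius-$\Theta(\gamma T(n/\gamma))$ neighborhood, locates block boundaries via the delimiter, decodes $T(n/\gamma)$ consecutive $\PP$-inputs, simulates the $T$-round $\PP$-algorithm on its super-node, and emits the (window, tag) pair; this takes $\Theta(\gamma\cdot T(n/\gamma))$ rounds because each step of the simulated algorithm traverses $\gamma$ $\PP'$-nodes and the decoded instance has size $n/\gamma$. For the matching lower bound, any $o(\gamma\cdot T(n/\gamma))$-round algorithm for $\PP'$ would yield, by an encode-simulate-decode reduction run at the designated output position of each block, an $o(T(n/\gamma))$-round algorithm for $\PP$ on instances of size $n/\gamma$, contradicting the assumed tight complexity of $\PP$. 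The main obstacle is designing the encoding and the role assignment so that the highly local checkability---one-node input/output views and two-consecutive-output views only---still forces the window stored in each output to be the true input window and blocks to line up correctly; the run-length-limited payload combined with the 3-bit delimiter is what makes this synchronization purely local.
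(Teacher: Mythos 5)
Your overall architecture---$\gamma$-node blocks with a self-synchronizing delimiter, an output whose first component is the $\gamma$-bit input window starting at the node (enforced by checking ``first bit equals own input'' at each node and a one-step shift between consecutive outputs), and a simulation giving the $\Theta(\gamma\cdot T(n/\gamma))$ bound---is the same as the paper's. The gaps are in how you allocate the second output component. Placing the $\PP$-output at a single designated position per block and filling every other position with one of $r_1,r_2,r_3$ makes the inter-block constraint uncheckable: the $\mathcal V_{\operatorname{out}-\operatorname{out}}$ verifier of $\PP'$ sees only two \emph{consecutive} outputs, while the designated nodes of two adjacent blocks are $\gamma$ apart, so no adjacent pair of $\PP'$-nodes ever carries both super-nodes' $\PP$-outputs (three position-marking fillers cannot encode an element of a size-$\beta$ set). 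Hence the original $\mathcal V_{\operatorname{out}-\operatorname{out}}$ check between consecutive super-nodes can never be evaluated from a two-node view. The paper avoids this by having \emph{every} node of a block carry the block's $\PP$-output, propagated along the block by the constraint that consecutive outputs inside a block have equal second components; the original verifier is then invoked only at block boundaries, where both outputs are visible.

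Second, you make no provision for binary input strings that are not valid encodings of a $\PP$-instance. A $\beta'$-normalized \lcl{} is defined on \emph{all} binary-labeled paths, and your window constraints force the output to reflect the true input; on a malformed input (say, all zeros, where no delimiter exists) your tag constraints admit no valid output, so $\PP'$ is unsolvable there and the claimed upper bound fails. The ``$+3$'' in $\beta+3$ is exactly the budget the paper spends on this: an error label $E$ whose acceptance requires a genuine local encoding defect, plus $E_l,E_r$ forming locally checkable chains that point to an $E$. One must additionally argue that on valid encodings no node can output any of these (otherwise the lower-bound reduction collapses), which the paper does by showing that the window cannot be falsified and that no window arising from a valid encoding satisfies the acceptance condition for $E$. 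You spent that budget on in-block fillers instead, so both the solvability of $\PP'$ on arbitrary inputs and the matching lower bound are unsupported as written.
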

\begin{proof}
	In the following we will exploit the ability of an algorithm to work on identifiers that can be polynomial in the size of the graph. In particular, we assume that if an algorithm works on an instance with IDs in the range $1,\ldots,r$, then it works also on an instance with IDs in the range $1,\ldots,\gamma\cdot r$. Then, we show how to define an \lcl{} $\PP'$ such that:
	\begin{itemize}[noitemsep]
		\item if the input instance encodes a virtual instance for the problem $\PP$, it is required to solve $\PP$ on the virtual instance;
		\item otherwise, it is required to prove that the encoding is wrong.
	\end{itemize}
	Let $\mathcal V'_{\operatorname{in}-\operatorname{out}}$ and $\mathcal V'_{\operatorname{out}-\operatorname{out}}$ be the verifiers of our $\beta'$-normalized \lcl.
	\paragraph{\boldmath Encoding $\PP$ in $\PP'$.} We start by defining how to encode an instance of $\PP$ of size $n$, as an instance of $\PP'$ of size $N = \gamma \cdot n$. We denote with $p_0,\ldots,p_{n-1}$ and $p'_0,\ldots,p'_{\gamma \cdot n -1}$ respectively the instance of $\PP$ and the one of $\PP'$. Let $a= \lceil\log\alpha \rceil$. For the sake of simplicity, let us rename nodes $p'_{\gamma i},\ldots ,p'_{\gamma(i+1) -1}$, where $0\le i \le n-1$, as $q^i_0, \ldots, q^i_{2a+2}$ (notice that $2a+2=\gamma-1$).  The first $a+1$ nodes, $q^i_0,\ldots, q^i_{a}$, have input $1$, while nodes $q^i_{a+1}$ and $q^i_{2a+2}$ have input $0$. Each of the remaining $a$ nodes, $q^i_{a+2},\ldots,q^i_{2a+1}$, has in input one bit of the binary representation of $\linput(p_i)$, in some fixed order (see Figure \ref{normalization} for an illustration).
	
	\paragraph{\boldmath The $\mathcal V'_{\operatorname{in}-\operatorname{out}}$ Verifier.} The set of output labels of $\PP'$ is $\Sigma^{\PP'}_{\operatorname{out}} = 2^\gamma \times (\Sigma_{\operatorname{out}} \cup \{ E_r, E, E_l \})$. Let $q^i_j$ be a node of the instance of $\PP'$ where  $0\le i \le n-1$ and $0\le j \le 2a+2$. Let $\linput(q^i_j) \in \{0,1\}$, and let $\loutput(q^i_j) = ((b_0,\ldots,b_{2a+2}),\lout) \in \Sigma^{\PP'}_{\operatorname{out}}$. The $\mathcal V'_{\operatorname{in}-\operatorname{out}}$ verifier running at $q^i_j$ checks that
	\begin{itemize}[noitemsep]
		\item $\linput(q^i_j) = b_0$, and
		\item if $\lout \in \Sigma^{\PP}_{\operatorname{out}}$, then
		\begin{itemize}
			\item if all bits in $b_0,\ldots,b_{a}$ are $1$s, checks that the original $\mathcal V_{\operatorname{in}-\operatorname{out}}$ verifier accepts on $(x,\lout)$, where $x$ is obtained by recovering the input $p_i$ for the original algorithm from $b_{a+2},\ldots,b_{2a+1}$.
		\end{itemize}
	\end{itemize}
	
	\paragraph{\boldmath The $\mathcal V'_{\operatorname{out}-\operatorname{out}}$ Verifier.} Let the output of $p^i_j$ be $((b_0,\ldots,b_{2a+2}),\lout)\in \Sigma^{\PP'}_{\operatorname{out}}$ and the output of the predecessor of $p^i_j$ be $((b'_0,\ldots,b'_{2a+2}),\lout') \in \Sigma^{\PP'}_{\operatorname{out}}$
	The O-O verifier first checks that
	\begin{itemize}[noitemsep]
		\item $b_0 = b'_1, b_1=b'_2, \ldots, b_{2a+1} = b'_{2a+2}$, and
		\item if $\lout \in \Sigma^{\PP}_{\operatorname{out}}$ and $\lout' \in \Sigma^{\PP}_{\operatorname{out}}$, then
		\begin{itemize}
			\item if at least one bit in $b_0,\ldots,b_{a}$ is $0$ , then $\lout = \lout'$,
			\item if all bits in $b_0,\ldots,b_{a}$ are $1$s, then check that the original $\mathcal V_{\operatorname{out}-\operatorname{out}}$ executed on $(\lout',\lout)$ accepts.
		\end{itemize}
	\end{itemize}
	\paragraph{Dealing with Errors.} We now add some constraints to handle the case in which $\lout \notin \Sigma^{\PP}_{\operatorname{out}}$. Let the output of $p^i_j$ be $((b_0,\ldots,b_{2a+2}),\lout)$ and the output of the predecessor of $p^i_j$ be $((b'_0,\ldots,b'_{2a+2}),\lout')$. The $\mathcal V'_{\operatorname{in}-\operatorname{out}}$ verifier additionally checks that, if $\lout = E$, then the encoding is not locally valid, that is,
	\begin{itemize}[noitemsep]
		\item either
		\begin{itemize}
			\item there are two numbers $x,y\ge 0$, $x+y \le a$, such that $b_0,\ldots,b_{x-1}$ and $b_{2a+3-y},\ldots,b_{2a+2}$ are all equal to $1$, $b_x = 0$, $b_{2a+2-y} = 0$, and
			\item there is not a contiguous sequence of length $a+1$ of all $1$s in $b_0, \ldots, b_{2a+2}$,
		\end{itemize}
		\item or $b_0,\ldots,b_a$ are all $1$s but either $b_{a+1}\neq 0$ or $b_{2a+2}\neq 0$.
	\end{itemize}
	The $\mathcal V'_{\operatorname{out}-\operatorname{out}}$ verifier running on $p^i_j$ additionally checks that,
	\begin{itemize}[noitemsep]
		\item if $\lout=E_l$,  then $p^i_j$ must have a predecessor, and it must hold that $\lout'\notin \{E_r\} \cup \Sigma^{\PP}_{\operatorname{out}}$;
		\item if $\lout \in \Sigma^{\PP}_{\operatorname{out}}$, if $p^i_j$ has a predecessor, then $\lout'$ must be different from $E_r$;
		\item if $\lout=E_r$, then $p^i_j$ must have a successor.
	\end{itemize}
	Let $N$ be the size of the graph. An algorithm solving $\PP'$ in $(\gamma+1)\cdot T$ rounds does the following at each node $v'$:
	\begin{itemize}[noitemsep]
		\item Gather the ball $B_{v'}((\gamma+1)\cdot T)$.
		\item If $B_{v'}((\gamma+1)\cdot T)$ looks like a correct encoding of an input instance of $\PP$
		\begin{itemize}
			\item let $u'$ be the nearest left node having input $1$ and other $a$ successors, $\successor^{(1)} u',\ldots,\successor^{(a)} u'$, having also input $1$
			\item Compute the virtual instance for $\PP$ (setting the IDs to be the same of the nodes satisfying the above)
			\item Simulate the original algorithm on the virtual instance by setting $n=N/\gamma$, let $\lout$ be the output of $u'$
			\item Output $((\linput(v'),\linput(\successor^{(1)} v'),\ldots,\linput(\successor^{(\gamma-1)} v'),\lout)$
		\end{itemize}
		\item Otherwise,
		\begin{itemize}
			\item if there is a local error, output $E$
			\item if the nearest error is on the left, output $E_l$
			\item otherwise, output $E_r$
		\end{itemize}
	\end{itemize}
	It is easy to check that the output of the algorithm satisfies the constraints. In order to show a lower bound for the new \lcl{}, we now show that it is not possible to produce errors in a graph that is a valid encoding.
	In fact, nodes can not cheat by wrongly outputting the input of the neighbors, otherwise either the input-output verifier notices inconsistencies on the first bit, or the output-output verifier notices inconsistencies on the other bits. Then, on a valid encoding, no input satisfies the constraints that allows to produce $E$ as output. Finally, the constraints impose that a chain of $E_r$ or $E_l$ points to a node that is outputting $E$.
	
	Note that, if the original \lcl{} has complexity $T$, then the new \lcl{}, on instances of size $N = \gamma \cdot n$, has complexity $\Theta(\gamma(T(n))) = \Theta(\gamma \cdot T(\frac{N}{\gamma}))$.	
\end{proof}

\subsection{Hardness Results}

\begin{theorem}\label{betahard}
	There are $\beta$-normalized \lcl{}s that can be solved in constant time but the distributed time complexity is $2^{\Omega(\beta)}$.
\end{theorem}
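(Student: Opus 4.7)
The plan is to combine the LCL family $\{\Pi_{M_B}\}$ from the previous subsections with the two normalization lemmas, instantiated at a carefully chosen \lba{} whose running time is exponential in its tape bound.

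First, I would fix a family of \lba{}s $\{M_B\}_{B\in\mathbb{N}}$ that each use a state set $Q$ and alphabet $\Gamma$ of size independent of $B$, and that on the canonical initialization $(L,0,\ldots,0,R)$ deterministically halt after $T(B) = 2^{\Omega(B)}$ steps. A natural choice is a binary counter that views the $B-2$ interior cells as a register and keeps incrementing it until overflow; this needs only $O(1)$ states and a constant-size alphabet yet runs through $\Theta(2^{B})$ configurations before halting.

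Second, I would instantiate $\Pi_{M_B}$ with this $M_B$. Since $M_B$ terminates, the upper- and lower-bound subsections give that the distributed complexity of $\Pi_{M_B}$ is $\Theta(B\cdot T(B)) = 2^{\Theta(B)}$, which is a constant in $n$ because $B$ itself is a constant of the problem. The input label set has size $O(1)$ (as $|Q|$ and the rest are all fixed), while the output label set has size $O(B)$, dominated by the error labels $\lerror^{0}(i), \lerror^{1}(i), \lerror^{2}(x,i), \lerror^{4}(\cdot,\cdot,i)$ whose index $i$ ranges over $\{0,\ldots,O(B)\}$.

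Third, I would apply the two normalization lemmas in sequence. The first lemma turns $\Pi_{M_B}$ into an \lcl{} $\PP$ with $|\LabelIn^{\PP}|=\alpha=O(1)$ and $|\LabelOut^{\PP}|=O(B)$ that is checkable by a $\mathcal V_{\operatorname{in}-\operatorname{out}}$ and a $\mathcal V_{\operatorname{out}-\operatorname{out}}$ verifier, and preserves the complexity. The second lemma then converts $\PP$ into a $\beta$-normalized \lcl{} $\PP'$ with $\beta = 2^{\gamma}(|\LabelOut^{\PP}|+3) = O(B)$ (since $\gamma = 2\lceil\log\alpha\rceil + 3 = O(1)$) and complexity $\Theta(\gamma\cdot T_{\PP}(n/\gamma)) = \Theta(T_{\PP}) = 2^{\Theta(B)}$. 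Hence $\PP'$ is a $\beta$-normalized \lcl{} whose distributed complexity is constant in $n$ but equal to $2^{\Theta(B)} = 2^{\Omega(\beta)}$, which is the claim.

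The only real bookkeeping I need to be careful about is ensuring that the output-alphabet blow-up through the two lemmas is linear in $B$ rather than exponential: this is exactly why I insist on an \lba{} with constant state/alphabet sets, keeping $\alpha=O(1)$ throughout the pipeline so that the multiplicative factors $\alpha$ and $2^{\gamma}$ are absolute constants. Given that, the main conceptual point — that a single constant-time \lcl{} can hide an exponentially large constant — is delivered directly by the $2^{\Omega(B)}$ lower bound on $\Pi_{M_B}$ and survives normalization.
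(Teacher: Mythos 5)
Your proposal is correct and follows essentially the same route as the paper: instantiate $\Pi_{M_B}$ with a binary-counter \lba{} that halts in $2^{\Theta(B)}$ steps, invoke the $\Theta(B\cdot T)$ complexity bound, and push the problem through the two normalization lemmas, noting that the constant-size state/alphabet sets keep $\beta = O(B)$ so that the complexity is $2^{\Omega(\beta)}$. The paper's proof is just a terser version of the same argument.
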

\begin{proof}
	The complexity of $\Pi_{M_B}$ is $\Theta(B \cdot T)$ if $M_B$ terminates in $T$ steps. $|\Sigma_{\operatorname{in}}| = O(1)$ and $|\Sigma_{\operatorname{out}}| = \Theta(B)$. We can convert it to an \lcl{} where $|\Sigma_{\operatorname{in}}| = 2$, $|\Sigma_{\operatorname{out}}| = \Theta(B)$, and the complexity is still $\Theta(B \cdot T)$. There exist \lba{}s that terminate in $2^{\Theta(B)}$ steps (e.g. a binary counter). Thus, the complexity of the obtained \lcl{} is $\Theta(B \cdot 2^{\Theta(B)})$, that is $2^{\Omega(B)} = 2^{\Omega(|\Sigma_{\operatorname{out}}|)}=2^{\Omega(\beta)}$.
\end{proof}

\begin{theorem}\label{pspacehard}
	It is PSPACE-hard to distinguish whether a given \lcl{} problem $\PP$ with input labels can be solved in $O(1)$ time or needs $\Omega(n)$ time on globally oriented path graphs.
\end{theorem}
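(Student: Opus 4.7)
The plan is to reduce from the acceptance problem for linear bounded automata, which is a classical PSPACE-complete problem, to the problem of distinguishing $O(1)$-time from $\Omega(n)$-time $\lcl$s on globally oriented paths. The reduction will use the family $\Pi = \{\Pi_{M_B}\}$ constructed earlier.

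First, I would combine the upper and lower bound analyses of the previous two subsections into a clean dichotomy for $\Pi_{M_B}$. If $M_B$ halts on the canonical input $(L,0,\ldots,0,R)$ in some number of steps $T = T(M_B)$, then by the upper bound the complexity of $\Pi_{M_B}$ is $O(B\cdot T)$, and since $B$ and $T$ do not depend on the network size $n$, this complexity is $O(1)$. If instead $M_B$ loops, then by the lower bound the complexity of $\Pi_{M_B}$ is $\Omega(T'') = \Omega(n)$. Moreover, the defining constraints of $\Pi_{M_B}$ can be written down in time polynomial in the size of the description of $M_B$, since each constraint involves only $O(B)$ data and the state set $Q$ and transition table $\delta$. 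Hence the map $M_B \mapsto \Pi_{M_B}$ is a polynomial-time many-one reduction.

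Next, I would reduce a canonical PSPACE-complete problem to LBA halting on the fixed canonical tape $(L,0,\ldots,0,R)$. The standard PSPACE-completeness statement concerns whether a given LBA $M$ accepts a given input $w$ while using space at most $|w|$. To turn this into a question about the fixed initial tape, I would build, from $(M,w)$, a new LBA $M'_{B}$ with $B$ slightly larger than $|w|$ whose transition function first writes $w$ into the work cells (using a short sequence of ``write-and-move'' states hardcoded from $w$) and then simulates $M$; this $M'_B$ halts on $(L,0,\ldots,0,R)$ if and only if $M$ halts on $w$, and $M'_B$ can be constructed from $(M,w)$ in polynomial time. Composing this with the reduction $M'_B \mapsto \Pi_{M'_B}$, I obtain a polynomial-time reduction from a PSPACE-hard problem to the promise problem of telling $O(1)$ from $\Omega(n)$ complexity for $\lcl$s from $\Pi$.

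I do not expect a genuine obstacle here: all the heavy lifting was done in the upper and lower bound sections, and the only delicate point is the shift from ``arbitrary input'' to ``canonical input'', which is handled by the hardcoding trick above. The one piece of care worth taking is to keep the $\lcl$ description size polynomial in the description of the LBA, so that the reduction as a whole runs in polynomial time; this is immediate from the definition of $\Pi_{M_B}$ since its label sets and constraint lists are enumerated by quantities of the form $|Q|$, $|\Gamma|$, and $B$.
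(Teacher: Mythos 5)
Your proposal is correct and follows essentially the same route as the paper: reduce from the PSPACE-hard problem of deciding whether an LBA terminates or loops, using the dichotomy ($O(B\cdot T)=O(1)$ if $M_B$ halts, $\Omega(n)$ if it loops) established in the preceding upper- and lower-bound sections, together with the observation that the description of $\Pi_{M_B}$ has size polynomial in the description of $M_B$. The only difference is that you re-derive the PSPACE-hardness of LBA termination on the canonical initial tape via the input-hardcoding trick, whereas the paper simply cites this as a known result.
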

\begin{proof}
	It is PSPACE-hard to distinguish whether a given \lba{} terminates or loops (see e.g.~\cite{Esparza98}). Note that the description of a $\beta$-normalized \lcl{} has size $O(\beta ^2)$. In order to decide if a $\beta$-normalized version of a problem in $\Pi$ requires $O(1)$ or $\Omega(n)$ we need to decide if its associated \lba{}, running on a tape of size $B = \Theta(\beta)$, terminates or loops, and this implies the theorem.
\end{proof}

\subsection{Extending the Results to Undirected Cycles}
We show how to extend the above results, which apply to globally oriented paths, to the case where the input graph is an undirected path or an undirected cycle. We first focus on showing how to adapt these results to undirected paths. Given a $\beta$-normalized \lcl{} $\PP$ defined on directed paths, we can define an \lcl{} $\PP'$ in which the set of input labels is $\Sigma'_{\operatorname{in}} = \{0,1\} \times \{0,1,2\}$, and the set of output labels is $\Sigma'_{\operatorname{out}} = \{0,1,2\} \times \{1,\ldots,\beta,E\}$. Let $\mathcal V_{\operatorname{in}-\operatorname{out}}$ and $\mathcal V_{\operatorname{out}-\operatorname{out}}$ be the verifiers of the $\beta$-normalized \lcl{} $\PP$, and let $\mathcal V'_{\operatorname{in}-\operatorname{out}}$ and $\mathcal V'_{\operatorname{out}-\operatorname{out}}$ be the verifiers of the \lcl{} $\PP'$. The idea is that we can use $3$ symbols to give an orientation as input to the nodes, by giving $0$ to the first node, $1$ to the second, $2$ to the third, $0$ to the fourth, and so on. Nodes must copy their orientation number to the output, and then, if the given orientation is consistent, nodes are required to solve the original problem $\PP$. On the other hand, if the orientation is not consistent, nodes are allowed to output an error $E$. Also, in order to avoid the need of error pointers, we allow nodes to treat the places where  the orientation is not consistent, as a place where the path ends.

This new \lcl{} can be checked as follows. The $\mathcal V'_{\operatorname{in}-\operatorname{out}}$ verifier takes in input the input and the output of the current node (as before) and first checks that the orientation has been copied correctly, and then checks that the original $\mathcal V_{\operatorname{in}-\operatorname{out}}$ verifier accepts. To verify the output, we allow the $\mathcal V'_{\operatorname{out}-\operatorname{out}}$ verifier to see slightly more than the original verifier $\mathcal V_{\operatorname{out}-\operatorname{out}}$. The $\mathcal V'_{\operatorname{out}-\operatorname{out}}$ verifier sees a triple containing the output of the node and the outputs of its neighbors. Note that the verifier does not know the orientation of the path (and the orientation of the triple), but it can recover it from the output of the nodes (that contains a copy of the orientation given as input). Then the $\mathcal V'_{\operatorname{out}-\operatorname{out}}$ verifier checks that, if the node outputted $E$, the orientation is indeed wrong. If the output is a value in $1,\ldots,\beta$, the verifier $\mathcal V'_{\operatorname{out}-\operatorname{out}}$ runs the original $\mathcal V_{\operatorname{out}-\operatorname{out}}$  verifier, since it can compute which neighbor is the predecessor. It is easy to see that the complexity of $\PP'$ is the same as the one of $\PP$.

The \lcl{} description of $\PP'$, that is, the size of its input, its output, and its verifier, is now $O(\beta^3)$, therefore the hardness result still applies.

We now show how an \lcl{} for paths can be converted to an \lcl{} for cycles.  The idea is the following. On a cycle, we give an additional input to each node in $\{0,1\}$. Nodes marked as $1$ are exempt to solve the problem and act as separators between the other nodes. That is, nodes are required to solve the original problem on the subpaths that lie between nodes marked as $1$. It may be the case that no node has $1$ as input. In this case we allow nodes to output a special error. If a node decides to output this error, both its neighbors must output the same, that is, \emph{all} nodes must output the same. We impose the constraint that a node marked as $1$ can not output this error. A worst case instance would be the one in which one node is marked $1$ and all other nodes are marked $0$---this would represent a path with a length that is roughly equal to the one of the cycle. There is one case that requires a bit of care: if all nodes are marked $0$, but the original problem can be solved in sublinear time, nodes could not be able to coordinate to produce the special error. For our purpose, it is possible to check that, if we consider the problem $\Pi_{M_B}$ previously defined in the case in which $M_B$ terminates, in an instance in which nobody has a predecessor, nodes can efficiently solve the problem by just outputting $\lerror$ or $\lempty$, depending on their input.

\subsection{Encoding Input Labels as Trees \label{sec.encode}}
In this section we demonstrate a reduction from the \lcl{} problem  $\mathcal{P}$ with input labels on {\em any} graph $G$ to an \lcl{} problem $\mathcal{P}^\star$ without input labels on the modified graph $G^\star$. The modified graph $G^\star$ is the result of attaching a rooted tree to each $v \in V(G)$ that encodes the input label of $v$ for the \lcl{} problem  $\mathcal{P}$. The reduction  allows us to extend the hardness proof to the case of \lcl{} problems without input labels.

\paragraph{Encoding.} Given a $2^k$-bit binary string $S=(s_1, \ldots, s_{2^k})$, define $\Enc(S)$ as the rooted tree constructed as follows.
\begin{itemize}
	\item Begin with the full binary tree which has $2^k$ leaves, and the distance from the root to each leaf is $k$.
	\item For each non-leaf node $v$, let $u$ be any one of its two children, and subdivide the edge $\{v,u\}$ into two edges $\{v,w\}$ and $\{w,u\}$, where $w$ is a new node. The node $w$ is designated as the left child of $v$.
	\item Let $U=(u_1 \ldots, u_{2^k})$ be the leaves ordered by the in-order traversal.
	\item For each $i \in [2^k]$, add two new nodes $x$ and $y$ as the children of  $u_i$. If $s_i = 1$, add  two more new nodes $x'$ and $y'$ and the two edges $\{x, x'\}$ and $\{y, y'\}$.
\end{itemize}
The tree $\Enc(S)$ has maximum degree 3, and all nodes are within distance $2(k + 1)$ to the root. Given a graph $G$ such that each node $v \in V(G)$ is associated with an input label $L(v) \in \LabelIn$, define $G^\star$ as the graph resulting from the following operations on $G$. For each node $v \in V(G)$, attach the rooted tree $T = \Enc(L(v))$ to $v$ by adding the edge $\{v,z\}$, where $z$ is the root of $T$. Notice that  $\Delta(G^\star) = \max\{3, \Delta(G) +1\}$.

\paragraph{Decoding.} Given a rooted tree $T=\Enc(S)$ for some $S \in \{0,1\}^{2^k}$, define $\Dec(T) = S$. The decoding can be done by the following procedure. Consider an in-order traversal of the tree such that (i) for each node $v$ such that exactly one of its children $w$ has degree 2, treat $w$ as the left child of $v$, (ii) print `1' if a node that has two children of degree 2 is encountered, (iii) print `0' if a node that has two children of degree 1 is encountered. Then the printed sequence is $S$.

\paragraph{\boldmath The Modified $\lcl$ Problem $\mathcal{P}^\star$.} Let $\mathcal{P}$ be an $\lcl$ problem  with input labels. Suppose that the radius of $\mathcal{P}$ is $r$, and the maximum degree is $\Delta$. Set $k = \lceil \log  \log |\LabelIn| \rceil$, and let each label in $\LabelIn$ be represented by a distinct $2^k$-bit string. The modified $\lcl$ problem $\mathcal{P}^\star$, which does not require input label, is defined by the following rules. The set of the output labels of $\mathcal{P}^\star$ is $\LabelOut$, the same as that of $\mathcal{P}$. Let $G^\star$ be a graph with maximum degree $\leq \Delta +1$.
\begin{itemize}
	\item Define $G_1 = G^\star$.
	
	For each $2 \leq i \leq k+2$, define $G_i$ as the graph induced by nodes in $V(G^\star) - \bigcup_{j = 1}^{i-1} (A_j \cup B_j)$.
	
	For each $i \in [k+2]$, define $A_i = \{v \in V(G_i) | \deg_{G_i}(v) = 1\}$.
	
	For each $i \in [k+1]$, define $B_i = \{v \in V(G_i) | \deg_{G_i}(v) = 2 \text{ and } \exists u \in A_i \text{ s.t. } \{u,v\} \in E(G_i)\}$.
	\item Define $V_{\text{label}} = \bigcup_{j = 1}^{k+2} A_j \cup \bigcup_{j = 1}^{k+1}  B_j$.
	
	Define  $V_{\text{main}}$ as the set of nodes in $V(G^\star) \setminus V_{\text{label}}$ that have exactly one neighbor in $V_{\text{label}}$.
	\item For each $v \in V_{\text{main}}$, define $L(v)$ as follows. Let $u$ be the unique node in $V_{\text{label}}$ adjacent to $v$, and let $T$ be the connected component induced by nodes in $V_{\text{label}}$ that contains $u$. Set $L(v) = \Dec(T)$ (with $u$ being the root of $T$). If the decoding procedure $\Dec(T)$ fails, simply set $L(v)$ as the {\em first} label in $\LabelIn$.
	\item The output labeling, together with the input labeling defined by the function $L(\cdot)$, forms a legal labeling of the subgraph induced by the nodes in $V_{\text{main}}$ for $\mathcal{P}$.
\end{itemize}
Notice that the connected components in $V_{\text{label}}$ are the trees encoding input labels,
and the subgraph induced by the nodes in $V_{\text{main}}$ is $G$ (as long as $G$ does not contain isolated node).
The function $L(v)$ recovers the input label of $v$ for each $v \in V(G)$.
Given that $\mathcal{P}$ has a valid labeling on {\em all} graphs (resp., trees) of maximum degree $\Delta$, the modified $\lcl$ problem $\mathcal{P}^\star$ also has a valid labeling on all graphs (resp., trees) of maximum degree $\Delta + 1$.

\paragraph{Reducing the Radius.} The above definition of $\mathcal{P}^\star$ requires radius $r + O(k)$, as a node $v \in V_{\text{main}}$ needs $O(k)$ extra rounds to calculate $L(u)$ for all $u \in N^r(v) \cup V_{\text{main}}$. We present a simple modification that reduces the radius to only $r + O(1)$ at the cost of expanding the number of output labels from $|\LabelOut|$ to $|\LabelOut| + 2^{2^k} < |\LabelOut| + |\LabelIn|^2$. The idea is to let nodes in $V_{\text{label}}$ to use output labels to pass the information stored at the leaves to the root based on local rules. Consider a connected component $T$ induced by nodes in $V_{\text{label}}$. The subgraph $T$ is interpreted as a tree rooted at the unique node in $T$ that is adjacent to some node in $V_{\text{main}}$.
\begin{description}
	\item[Base Case:] Let $v \in A_2$. If $v$ is adjacent to two nodes in $B_1$, the output label of $v$ is $1$; otherwise the output label of $v$ is $0$.
	\item[Root / Degree-3 Nodes:] Let $3 \leq i \leq k+2$, and let $v \in A_i$. Then $v$ has a unique neighbor $u_{\text{left}} \in B_{i+1}$ and a unique neighbor $u_{\text{right}} \in A_{i+1}$. Let $S_{\text{left}}$ be the output label of $u_{\text{left}}$, and let $S_{\text{right}}$ be the output label of $u_{\text{right}}$. Then the output label of $v$ is the binary string $S_{\text{left}} \circ S_{\text{right}}$.
	\item[Degree-2 Nodes:] Let $3 \leq i \leq k+1$, and let $v \in B_i$. Then $v$ has a unique neighbor $u \in A_{i+1}$. The output label of $v$ is the same as the output label of $u$.
\end{description}
Thus, for each node $v \in V_{\text{main}}$, $L(v)$ is simply the output label of the unique node $u \in V_{\text{label}} \cap N(v)$.

\begin{theorem}
	For any $\lcl$ problem $\mathcal{P}$ on any graph $G$ of maximum degree $\Delta$ that does not have isolated nodes, the following two statements are equivalent.
	\begin{itemize}
		\item The labeling $\LL\colon V(G) \rightarrow \LabelOut$ is a valid labeling of $G$ for the problem  $\mathcal{P}$.
		\item There exists some labeling $\LL'$ of the nodes in $V_{\text{label}}$ such that $\LL$ and $\LL'$ together form a valid labeling of $G^\star$ for the problem $\mathcal{P}^\star$.
	\end{itemize}
\end{theorem}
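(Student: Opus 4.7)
The plan is to reduce the equivalence to two structural facts about $G^\star$ and then translate constraints between the two problems. The first fact is that the peeling procedure correctly identifies $V_{\text{label}}$ as the disjoint union of the attached encoding trees and $V_{\text{main}}$ as $V(G)$. Each attached tree $T_v = \Enc(L(v))$ has a layered degree structure: the bit-gadget leaves are the unique degree-$1$ nodes inside $T_v$, so they form $A_1$; the remaining bit-gadget nodes have degree $2$ with a degree-$1$ neighbor and thus enter $B_1$; this exposes each original leaf $u_i$ as degree-$1$ in $G_2$ so it enters $A_2$, while the subdivision node immediately above $u_i$ becomes degree-$2$ with a neighbor in $A_2$ and enters $B_2$; inductively, round $j$ for $j \ge 2$ strips one level of the original binary tree together with the subdivision nodes immediately above that level. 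Because $G$ has no isolated node, every $v \in V(G)$ has degree $\ge 2$ in $G^\star$ throughout the process, and one verifies that $v$ is never in any $A_j$ or $B_j$: the root of $T_v$ retains a tree-side neighbor in $G_j$ for every $j \le k+1$, hence has degree $\ge 2$ there, so $v$ has no $A_j$-neighbor before round $k+2$. Thus after $k+2$ rounds, $V_{\text{label}}$ is exactly $\bigcup_v V(T_v)$ (with the roots entering $A_{k+2}$), $V_{\text{main}} = V(G)$, and each $v \in V_{\text{main}}$ is adjacent in $V_{\text{label}}$ only to the root of its own tree.

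The second fact is that the propagation rules on $V_{\text{label}}$ are \emph{deterministic} given the graph structure of each $T_v$, and they force the output at the root of $T_v$ to be the binary encoding of the input label of $v$. I would prove this by induction from the leaves of $T_v$: in the base case, $u_i \in A_2$ is forced to output $1$ iff it has two $B_1$-neighbors, which by the peeling computation holds iff the bit $s_i = 1$; in the inductive step, the concatenation $S_{\text{left}} \circ S_{\text{right}}$ applied at a degree-$3$ node of $A_i$ exactly reproduces the in-order traversal used by $\Dec$ (the subdivision convention aligns ``left child = subdivided child'' with the in-order bit ordering), so the root of $T_v$ outputs the full bit string corresponding to $L(v)$. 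Hence, whenever the propagation rules are satisfied, the recovered input label used by the $V_{\text{main}}$-constraints of $\mathcal{P}^\star$ agrees with the true input label $L(v)$ of $v$ in $G$.

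Given these two facts, the equivalence follows directly. For $(1)\Rightarrow(2)$, starting from a valid $\LL$ for $\mathcal{P}$ on $G$, define $\LL'$ on $V_{\text{label}}$ by applying the propagation rules inside each $T_v$; by the second fact the recovered label at every $v \in V_{\text{main}}$ is $L(v)$, so the $V_{\text{main}}$-constraint of $\mathcal{P}^\star$ at $v$ reduces to the $\mathcal{P}$-constraint at $v$ in $G$, which $\LL$ satisfies by hypothesis, while the $V_{\text{label}}$-constraints hold by construction. For $(2)\Rightarrow(1)$, any valid $(\LL, \LL')$ for $\mathcal{P}^\star$ must satisfy the propagation rules on $V_{\text{label}}$, so by the second fact the recovered input label at each $v \in V_{\text{main}}$ is again $L(v)$; the $V_{\text{main}}$-constraint of $\mathcal{P}^\star$ then reads precisely as the $\mathcal{P}$-constraint at $v$ in $G$, so the restriction of $\LL$ to $V(G) = V_{\text{main}}$ is a valid labeling of $G$ for $\mathcal{P}$.

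The main obstacle is the level-by-level verification of the first fact: one must carefully check that the rule ``degree-$2$ in $G_j$ with a neighbor in $A_j$'' picks out exactly the subdivision nodes at each round and never catches a node of $V(G)$ or a tree node one layer too early. This is where the subdivision trick of $\Enc$ is essential, because it guarantees that each internal tree node retains degree $3$ until its own layer's round arrives, and where the non-isolation hypothesis on $G$ is used, because it prevents a node of $V(G)$ from ever acquiring a degree-$1$ neighbor in $G_j$ (its only tree-side neighbor, the root of $T_v$, is kept at degree $\ge 2$ by the still-present tree nodes below). The remaining inductions are routine; the only notational care point is to fix, once and for all, the consistency between the ``left = subdivided child'' convention used by the propagation rules and the in-order traversal convention used by $\Dec$, so that the recovered bit string literally equals $L(v)$ rather than a permutation of it.
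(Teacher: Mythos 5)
Your proof is correct. The paper in fact states this theorem \emph{without} any proof, treating it as immediate from the construction of $\mathcal{P}^\star$; your argument supplies exactly the intended verification, namely (i) that the degree-peeling recovers $V_{\text{label}}=\bigcup_v V(\Enc(L(v)))$ and $V_{\text{main}}=V(G)$ (with the non-isolation hypothesis used precisely where you use it, to keep every $v\in V(G)$ at degree at least $2$ in each $G_j$ and out of every $B_j$), and (ii) that the propagation rules deterministically force the root of each attached tree to carry the encoding of $L(v)$, so the $V_{\text{main}}$-constraints of $\mathcal{P}^\star$ coincide with the $\mathcal{P}$-constraints on $G$ in both directions. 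The only point where you silently correct the paper is the indexing in the propagation rule: for $v\in A_i$ the relevant neighbors lie in $B_{i-1}$ and $A_{i-1}$ (one peeling round earlier), not $B_{i+1}$ and $A_{i+1}$ as written; your level-by-level accounting makes this explicit and is consistent with your base case $A_2$ versus $B_1$.
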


\begin{theorem}\label{deciding_deg3trees}
	It is PSPACE-hard to distinguish whether a given \lcl{} problem $\PP$ without input labels can be solved in $O(1)$ time or needs $\Omega(n)$ time on trees with degree $\Delta = 3$.
\end{theorem}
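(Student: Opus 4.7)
The plan is to compose the PSPACE-hardness for labeled paths (Theorem~\ref{pspacehard}) with the tree-encoding reduction $G \mapsto G^\star$ developed in Section~\ref{sec.encode}. First I would take the PSPACE-hard family $\Pi$ for \emph{globally oriented} paths and push it through the ``Extending the Results to Undirected Cycles'' subsection to obtain a family of LCL problems on \emph{undirected} paths that is still PSPACE-hard to classify between $O(1)$ and $\Omega(n)$; recall that this step only multiplies label sets by constants and preserves the complexity up to constants. Let $\PP$ denote an LCL in this family, with input set $\LabelIn$ of constant size.

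Next, I would apply the reduction of Section~\ref{sec.encode} with the constant-radius ``Reducing the Radius'' variant. Since the underlying undirected path has $\Delta = 2$, the modified graph $G^\star$ has maximum degree $\max\{3, \Delta+1\} = 3$ and is still a tree; moreover each attached tree $\Enc(L(v))$ has constant size (because $k = \lceil \log\log |\LabelIn| \rceil = O(1)$), so $n^\star = |V(G^\star)| = \Theta(n)$. The resulting problem $\PP^\star$ has no input labels, and the output-label set is of constant size, so $\PP^\star$ is a bona fide LCL on trees of degree~$3$.

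It then remains to show that the two complexities transfer. For the upper-bound direction, if $\PP$ admits an $O(1)$-round algorithm, then the ``Reducing the Radius'' construction gives an $r + O(1) = O(1)$-round algorithm for $\PP^\star$: nodes in $V_{\text{label}}$ deterministically aggregate the subtree encoding into their output labels via the base/degree-2/degree-3 rules, and nodes in $V_{\text{main}}$ read $L(u)$ for every $u$ in their $r$-neighborhood directly from the adjacent $V_{\text{label}}$-node's output, then simulate the algorithm for $\PP$. For the lower-bound direction, any $T(n^\star)$-round algorithm $A^\star$ for $\PP^\star$ yields a $T(\Theta(n)) + O(1)$-round algorithm for $\PP$: each node $v \in V(G)$ locally constructs the virtual tree $\Enc(L(v))$ attached to it using only its own input label, assigns virtual identifiers, and simulates $A^\star$ on $G^\star$. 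Thus $\PP^\star$ is in $O(1)$ iff $\PP$ is, and $\PP^\star$ is in $\Omega(n^\star)$ iff $\PP$ is in $\Omega(n)$; deciding between these for $\PP^\star$ is therefore PSPACE-hard by Theorem~\ref{pspacehard}.

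The part that requires the most care is the lower-bound simulation: we must ensure that a hypothetical fast algorithm for $\PP^\star$ on $G^\star$ really does induce a comparably fast algorithm for $\PP$ on $G$, with only constant overhead. This hinges on two facts that are already in place: (i) the attached trees are generated purely from local input, so no communication is needed to construct the virtual graph $G^\star$; and (ii) the decoding function $\Dec$ is uniquely determined by the local structure of a connected component of $V_{\text{label}}$, so the input labeling recovered from $G^\star$ coincides exactly with the original labeling of $G$. With these, the equivalence of valid labelings stated in the preceding theorem of Section~\ref{sec.encode} immediately transfers both the constant and the linear complexities, and the PSPACE-hardness follows.
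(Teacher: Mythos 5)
Your proposal is correct and follows exactly the route the paper intends: the paper states Theorem~\ref{deciding_deg3trees} without an explicit proof, relying on the composition of the PSPACE-hardness for (undirected) labeled paths with the $G \mapsto G^\star$ encoding of Section~\ref{sec.encode}, which is precisely what you carry out, including the degree bound $\max\{3,\Delta+1\}=3$, the constant size of each $\Enc(L(v))$, and the two-way complexity transfer via local simulation. Your write-up is in fact more explicit than the paper's on the lower-bound simulation (virtual trees and virtual identifiers), and no step is missing.
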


\section{Decidability}\label{sec.cycle}
In this section, we show that the two gaps $\omega(1)$---$o(\log^\ast n)$ and $\omega(\log^\ast n)$---$o(n)$ for \lcl\ problems {\em with input labels}  on paths and cycles are decidable. More specifically, given a specification of an \lcl\ problem $\mathcal{P}$, there is an algorithm that outputs a description of an asymptotically optimal deterministic $\LOCAL$ algorithm for $\mathcal{P}$, as well as its time complexity.

We will prove the statements for the case of cycles, but the analogous results for cycles and paths follows as a simple corollary, as we can encode constraints related to degree-$1$ nodes as constraints related to nodes adjacent to a special input label. Furthermore, having a promise that the input is a path does not change the time complexity of an $\lcl$ problem: if a problem can be solved in time $T = o(n)$ in labeled paths, the same algorithm will solve it also in time $T = o(n)$ in labeled cycles.

The proof of Theorem~\ref{thm-cycle-gap-2} is in Section~\ref{sect-cycle-gap-logstar}; the proof of Theorem~\ref{thm-cycle-gap-1} is in Section~\ref{sec.cycle-main}.

\begin{theorem}\label{thm-cycle-gap-2}
	For any \lcl\ problem $\mathcal{P}$ on cycle graphs, its  deterministic $\LOCAL$ complexity is either  $\Omega(n)$ or  $O(\log^\ast n)$. Moreover, there is an algorithm that decides whether $\mathcal{P}$ has complexity  $\Omega(n)$ or  $O(\log^\ast n)$ on cycle graphs; for the case the complexity is $O(\log^\ast n)$, the algorithm outputs a description of an $O(\log^\ast n)$-round deterministic $\LOCAL$ algorithm that solves $\mathcal{P}$.
\end{theorem}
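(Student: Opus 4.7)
The plan is to invoke the known trichotomy for \lcl{}s on cycles to reduce the theorem to a decidable search problem, and then to describe a terminating enumeration procedure that either outputs an $O(\log^\ast n)$-round algorithm or certifies that none exists. By the classical results of \cite{Naor1995} combined with \cite{Brandt2017,chang17hierarchy}, the deterministic $\LOCAL$ complexity of any \lcl{} on a cycle --- with or without input labels --- lies in $\{O(1),\Theta(\log^\ast n),\Theta(n)\}$; the first two possibilities collapse asymptotically. This establishes the first sentence of the theorem and reduces the decidability question to distinguishing the sublinear regime from the linear one and, in the sublinear case, producing an explicit $O(\log^\ast n)$-round algorithm.

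The reduction from an arbitrary sublinear algorithm to a finitely describable object proceeds via the standard Ramsey / order-invariance argument of Naor and Stockmeyer. Any $O(\log^\ast n)$-round algorithm can be post-composed with an $O(\log^\ast n)$-round constant-coloring routine and then, on sufficiently large instances, be assumed to ignore identifiers beyond their role as witnesses of a proper coloring. Concretely, it is captured by a triple $(r,c,f)$ with constants $r$ and $c$ and a function
\[
f\colon (\LabelIn\times [c])^{2r+1}\longrightarrow\LabelOut
\]
such that applying $f$ pointwise to every radius-$r$ neighborhood of any input-labeled, properly $c$-colored cycle yields a labeling that satisfies the local constraints of $\mathcal{P}$. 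For each fixed pair $(r,c)$ this becomes a finite search: enumerate candidate $f$ and check validity on every possible radius-$(r+r_{\mathcal{P}})$ segment that can arise from a $c$-colored cycle over $\LabelIn$. The overall decision procedure iterates over $(r,c)$ in a canonical order.

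The main obstacle is to make this search \emph{terminate}: we need an effective upper bound on $(r,c)$ beyond which no new algorithm can appear. The plan is to encode, for each contiguous segment of input labels and colors, the set of boundary continuations that are compatible with the local constraints of $\mathcal{P}$; this is a subset of a finite ``state set'' whose cardinality is computable from $|\LabelIn|$, $|\LabelOut|$ and the description of $\mathcal{P}$. A pumping / saturation argument in the spirit of \cite{Brandt2017}, but carried out uniformly over every possible periodic input pattern on the cycle, then shows that once the induced state dynamics stabilize, no further increase of $(r,c)$ can help; the procedure may therefore safely halt and report $\Omega(n)$. Making this uniform-over-input-patterns step precise, and extracting from it a genuinely computable bound on $(r,c)$ rather than only a semi-decidable one, is the technically delicate part of the proof; once that is in place, any $f$ found during the enumeration is converted into the required $O(\log^\ast n)$-round algorithm by prepending the constant-coloring subroutine.
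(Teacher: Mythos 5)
Your high-level plan---normalize a sublinear algorithm into a finitely describable local rule and then search over such rules---is in the right spirit, but the proposal leaves unproven exactly the step that constitutes the theorem. You write that ``extracting \ldots{} a genuinely computable bound on $(r,c)$ rather than only a semi-decidable one, is the technically delicate part of the proof.'' That bound \emph{is} the decidability result: without it, your enumeration over $(r,c,f)$ gives at best semi-decidability of the $O(\log^\ast n)$ side, and you have no terminating procedure that can ever answer $\Omega(n)$. The gestured ``pumping / saturation argument \ldots{} carried out uniformly over every possible periodic input pattern'' is not a proof; in particular it is not clear how saturation of a state dynamics on periodic patterns would control the behaviour of a candidate rule on \emph{aperiodic} labeled cycles, which is where input labels make the problem genuinely harder than the unlabeled case of \cite{Brandt2017}. (A secondary soft spot: the Naor--Stockmeyer order-invariance reduction you invoke is itself Ramsey-based and needs care to combine with input labels and a $\Theta(\log^\ast n)$ preprocessing phase; the paper avoids it entirely.)

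The paper closes this gap by changing what the finite object is. It defines an equivalence relation $\simm$ on input-labeled paths, where two paths are equivalent if their length-$2r$ boundary regions carry the same input labels and admit exactly the same set of \emph{extendible} partial output labelings. The number of classes is explicitly computable, at most $\Lpump \le |\LabelIn|^{4r}\,2^{|\LabelOut|^{4r}}$ (Lemma~\ref{lem-type-number}), and types behave like states of a finite automaton (Lemma~\ref{lem:type-path}), yielding pumping lemmas. A ``feasible function'' is then a rule $f$ defined only on windows $w_1\circ S\circ w_2$ of size $O(\Lpump + r)$---a domain that is bounded \emph{in advance} by a computable quantity, so existence of a feasible $f$ is decidable by brute force. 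The two directions are: (i) a feasible $f$ yields an $O(\log^\ast n)$ algorithm by computing a well-spaced decomposition (Lemma~\ref{lem:ruling-set}), applying $f$ to the short separator blocks, and completing the gaps using feasibility; (ii) any $o(n)$-round algorithm $\mathcal{A}$ yields a feasible $f$ by pumping each window $w$ to a long path $w^{+}$ of the same type, simulating $\mathcal{A}$ there, and transporting the resulting legal labeling back via the replacement lemma (Lemma~\ref{lem:replace-path}). If you want to repair your proposal, the missing ingredient is precisely this computable bound on the number of boundary-behaviour classes; once you have it, your $(r,c)$ can be capped at $r=O(\Lpump)$ and the search terminates.
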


\begin{theorem}\label{thm-cycle-gap-1}
	For any \lcl\ problem $\mathcal{P}$ on cycle graphs, its deterministic $\LOCAL$ complexity is either  $\Omega(\log^\ast n)$ or  $O(1)$. Moreover, there is an algorithm that decides whether $\mathcal{P}$ has complexity  $\Omega(\log^\ast n)$ or  $O(1)$ on cycle graphs; for the case the complexity is $O(1)$, the algorithm outputs a description of an $O(1)$-round deterministic $\LOCAL$ algorithm that solves $\mathcal{P}$.
\end{theorem}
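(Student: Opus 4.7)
The plan is to establish Theorem~\ref{thm-cycle-gap-1} by combining a Ramsey-theoretic gap result (adapting Naor--Stockmeyer to the setting of \emph{labeled} paths and cycles) with a finite search over canonical constant-time algorithms. I will work throughout with consistently oriented cycles; the orientation-as-input gadget described above for passing from directed to undirected instances, together with the reduction between paths and cycles already mentioned in the section, then transfers the result to the full setting.

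First, I would prove the gap: any $\LOCAL$ algorithm solving $\PP$ in time $T(n) = o(\log^\ast n)$ can be converted into an $O(1)$-round one. For sufficiently large $n$ we may assume $T(n) \le T^\ast$ for a fixed $T^\ast$. Such an algorithm assigns an output to each node as a function of the $(2T^\ast+1)$-window of (input label, $\ID$) pairs around it. For every fixed pattern of input labels appearing in such a window, the output becomes a coloring of ordered $(2T^\ast+1)$-tuples of $\ID$s; Ramsey's theorem, iterated over the (finitely many) input-label patterns, yields arbitrarily large $\ID$ subsets on which the algorithm is \emph{order-invariant}. Since a consistent orientation already fixes the order within each window, what remains is an order-invariant function $f \colon \LabelIn^{2T^\ast+1} \to \LabelOut$; a routine padding argument then shows $f$ solves $\PP$ on cycles of every sufficiently large length, i.e., in $O(1)$ rounds.

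Next, I would decide the dichotomy by enumerating and testing order-invariant algorithms. A candidate $f \colon \LabelIn^{2T+1} \to \LabelOut$ can be checked for correctness mechanically: it suffices to verify, for every input string of length $2T + 2r + 1$ (with $r$ the checking radius of $\PP$), that the outputs produced by $f$ at the central $2r+1$ positions satisfy the $\lcl$ constraints at the middle. If some $f$ passes, we return it as the $O(1)$-round algorithm; otherwise, by the gap, $\PP$ requires $\Omega(\log^\ast n)$ rounds, and the procedure reports this.

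The hard part is providing a \emph{computable} bound on the radius $T$ that needs to be searched, since Ramsey alone only yields a constant radius abstractly. To overcome this, I would argue via a Myhill--Nerode style analysis of input--output prefixes: declare two input windows equivalent when they admit the same set of locally consistent output continuations (measured against the $\lcl$ constraints of $\PP$). This equivalence has finite index bounded by an explicit computable function of $|\LabelOut|$ and $r$, and one can show that whenever an order-invariant constant-time algorithm exists, there is one whose radius is bounded by this index. With such a bound $R(\PP)$ in hand the enumeration becomes finite and the procedure halts, yielding either an explicit $O(1)$-round algorithm for $\PP$ or the conclusion that $\PP$ requires $\Omega(\log^\ast n)$ rounds. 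The main subtlety I expect to have to work through is precisely this bounding step---turning the purely existential Ramsey conclusion into an effective bound on $R(\PP)$ so that the search can actually be carried out.
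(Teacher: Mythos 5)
Your overall architecture (reduce to a canonical constant-radius form, then search a finite space of candidates, using a Myhill--Nerode-style equivalence to bound the search) matches the paper's plan in spirit, and your equivalence on input windows is essentially the paper's type relation $\simm$ with its pumping constant $\Lpump$. However, two central steps are incorrect or missing. First, $T(n)=o(\log^\ast n)$ does \emph{not} imply $T(n)\le T^\ast$ for all large $n$ (consider $T(n)=\sqrt{\log^\ast n}$), so you cannot simply start from a constant-radius algorithm: that speedup \emph{is} the gap you are trying to prove. The paper instead fixes a constant $n'$, simulates the given algorithm $\mathcal{A}$ while pretending the graph has $n'$ nodes, and uses the pumping and replacement machinery (Lemmas~\ref{thm:pump2} and~\ref{lem:replace-path}) to make the simulated local views consistent with a genuine $n'$-node instance. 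Second, and more fundamentally, order-invariance obtained from Ramsey concerns the relative order of the \emph{identifiers} in a window, not the positional order of the nodes; a consistent orientation does not fix the ID order, so the algorithm does not collapse to a function $\LabelIn^{2T^\ast+1}\to\LabelOut$ of the input labels alone. Eliminating this residual ID-dependence is exactly the hard content of the theorem: one must show that long periodic input stretches can be labeled with periodic outputs (the paper proves this in Lemma~\ref{lem:func-aux} by showing that otherwise $\mathcal{A}$ would yield an $o(\log^\ast n)$-round MIS algorithm on unlabeled directed cycles, contradicting Linial's lower bound), that these periodic blocks can be glued across arbitrary intermediate segments (the $\GG_{w_1,w_2,S}$ condition in the definition of a feasible function), and that aperiodic stretches provide enough irregularity for $O(1)$-round symmetry breaking (Lemma~\ref{lem:MIS} and the $(\Lwidth,\Lcount,\Lpattern)$-partition). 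Your ``routine padding argument'' silently assumes all of this.

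Two secondary issues. The orientation-as-input gadget you invoke is the paper's \emph{hardness} reduction; it changes the problem and does not transfer the decidability upper bound from directed to undirected cycles. For that direction the paper computes an $\ell$-orientation in $O(1)$ rounds (Lemma~\ref{lem:ori}) and builds the reversed cases into the feasibility conditions. Also, your verification step (checking all input strings of length $2T+2r+1$) is sound only once the pure-input-function canonical form has been justified, which brings you back to the missing argument above; with the correct canonical object (the feasible function), the finite check must additionally cover one representative middle segment $S$ per type, which is where the pumping lemma and the bound of Lemma~\ref{lem-type-number} enter.
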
	

\paragraph{Graph Notation.} For convenience, in this section, a directed path $P$ with input labels is alternatively described as a string in $\LabelIn^k$, where $k > 0$ is the number of nodes in $P$. Similarly, an output labeling $\LL$ of $P$ is alternatively described as a string in $\LabelOut^k$. In subsequent discussion, we freely switch between the graph-theoretic notation and the string notation.
Given an output labeling $\mathcal{L}$ of $P$, we say that $\mathcal{L}$ is {\em locally consistent} at $v$ if the input and output labeling assigned to $N^r(v)$ is acceptable for $v$. Note that $N^r(v)$ refers to the radius-$r$ neighborhood of $v$.  Given two integers $a \leq b$, the notation $[a, b]$ represents the set of all integers $\{a, a+1, \ldots, b\}$.  Given a string $w$, denote $w^R$ as the reverse of $w$.

\paragraph{Overview.}  Before we proceed, we briefly discuss the high level idea of the proofs. 
The main tool underlying the proofs is the ``pumping lemma'' which was developed in~\cite{chang17hierarchy}. Intuitively, we classify the set of all input-labeled paths into a finite number of equivalence classes satisfying the following property.  Let $P$ be a subpath of $G$, and let $P'$ be another path that is of the same equivalence class as $P$. Given a complete legal labeling of $G$, if we let $G'$ be the result of replacing $P$ with $P'$, then it is always possible to extend this partial labeling of $G'$ to a complete legal labeling by appropriately labeling $P'$.
The pumping lemma guarantees that for any path  $P$ whose length is at least the pumping constant $\Lpump$, and for any number $x \geq \Lpump$, there is another path $P'$ of length at least $x$ and $P'$ is of the same equivalence class as $P$.

Informally, in the proof of Theorem~\ref{thm-cycle-gap-2}, we show that any \lcl\ problem $\mathcal{P}$ solvable in $o(n)$ rounds can be solved in $O(\log^\ast n)$ rounds in the following canonical way based on a ``feasible labeling function'' $f$.  Intuitively, a labeling function $f$ is feasible if  for any given independent set $I$ that is sufficiently well-spaced, we can apply $f$ to assign the output labels to each $v \in I$ and its nearby neighbors locally such that this partial labeling can always be extended to a complete legal labeling.
The $\omega(\log^\ast n)$---$o(n)$ gap and the decidability result follows from these two claims.
\begin{itemize}
\item If there is an $o(n)$-round algorithm $\mathcal{A}$ that solves $\mathcal{P}$, then a feasible function $f$ exists. This is proved by first create an imaginary graph where some paths are extended using pumping lemmas, and then apply a simulation of $\mathcal{A}$ on the imaginary graph.
\item  Whether a feasible function exists is decidable. Intuitively, this is due to the fact that the number of equivalence classes is finite.
\end{itemize}

The proof of Theorem~\ref{thm-cycle-gap-1} is a little more complicated since the time budget is only $O(1)$, so we cannot even afford to find an MIS. 
To solve this issue, we decompose the cycle graph $G$ into paths with unrepetitive patterns and paths with repetitive patterns, in $O(1)$ rounds. For paths with unrepetitive patterns, we are able to compute a sufficiently well-spaced MIS in $O(1)$ rounds by making use of the irregularity of the input patterns. Paths with repetitive patterns are similar to the paths without input labels, and we will show that we can always label them by repetitive output patterns, given that the underlying \lcl\ problem is $o(\log^\ast n)$-time solvable.

\subsection{Pumping Lemmas for Paths\label{sec.pump}}

Let $P = (s, \ldots, t)$ be a directed path, where each node has an input label from $\LabelIn$.
The {\em tripartition} of the nodes $\xi(P)=(D_1,D_2,D_3)$ is defined as follows:
\begin{align*}
	D_1 		&= N^{r-1}(s) \cup N^{r-1}(t),\\
	D_2 		&= \left(N^{2r-1}(s)  \cup N^{2r-1}(t)\right) \setminus D_1,\\
	D_3 &= P \setminus (D_1 \cup D_2).
\end{align*}
See Figure~\ref{fig:tripartition} for an illustration. More specifically, suppose $P = (u_1, \ldots, u_k)$, and let $i \in [1,k]$.
Then we have:
\begin{itemize}[noitemsep]
	\item  $u_i \in D_1$ if and only if $i \in [1,r] \cup [k-r+1, k]$.
	\item  $u_i \in D_2$ if and only if $i \in [r+1, 2r] \cup [k-2r+1, k-r]$.
	\item  $u_i \in D_3$ if and only if $i \notin [1, 2r] \cup [k-2r+1, k]$.
\end{itemize}

Let $\LL \colon D_1 \cup D_2 \rightarrow \LabelOut$ assign output labels to $D_1\cup D_2$.
We say that $\LL$ is {\em extendible} w.r.t.\ $P$ if there exists a complete labeling
$\LL_\diamond$ of $P$ such that $\LL_\diamond$
agrees with $\LL$ on $D_1\cup D_2$, and $\LL_\diamond$ is locally consistent at all nodes
in $D_2\cup D_3$.

\begin{figure}
	\centering
	\includegraphics[width=0.6\textwidth]{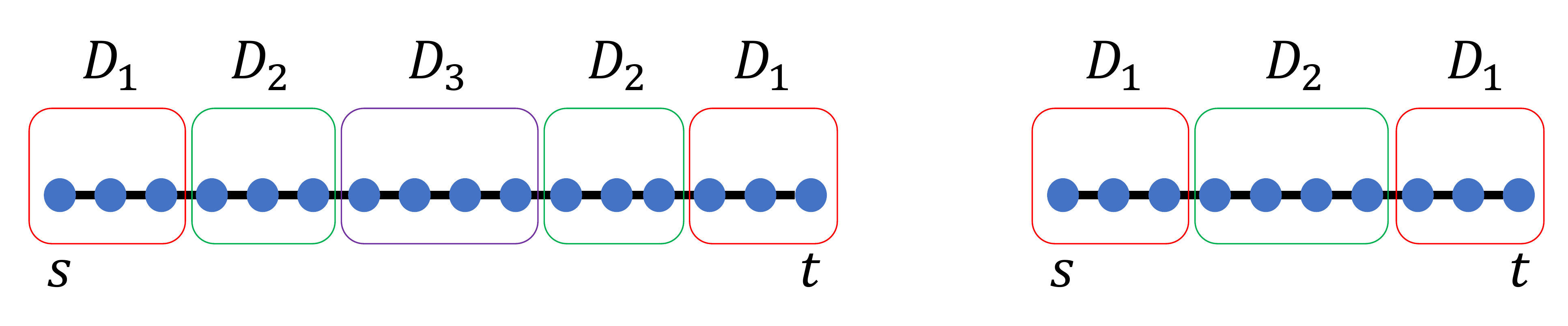}
	\caption{Illustration of the tripartition $\xi(P)=(D_1,D_2,D_3)$ with $r=3$.}\label{fig:tripartition}
\end{figure}

\paragraph{An Equivalence Class.}
We define an equivalence class $\simm$ for the directed paths (i.e., the set of all non-empty strings in $\LabelIn^{\ast}$), as follows.

\begin{oframed}
	\noindent
	Consider two directed paths $P=(u_1, \ldots, u_x)$ and $P' = (v_1, \ldots, v_y)$, and let $\xi(P)=(D_1,D_2,D_3)$ and  $\xi(P')=(D_1',D_2',D_3')$.
	Consider the following natural 1-to-1 correspondence  $\phi \colon (D_1 \cup D_2) \rightarrow (D_1'\cup D_2')$ defined as  $\phi(u_i) = v_i$ and $\phi(u_{x - i + 1}) = v_{y - i + 1}$ for each $i \in [1, 2r]$.
	The 1-to-1 correspondence  is well-defined so long as (i) $x = y$  or (ii) $x \geq 4r$ and  $y \geq 4r$.
	We have $P \simm P'$ if and only if the following two statements are met:
	
	\begin{itemize}
		\item \textbf{Isomorphism:} The 1-to-1 correspondence $\phi$ is well-defined, and for each $u_i \in D_1 \cup D_2$, the input label of $u_i$ is identical to the input label of $\phi(u_i)$.
		
		\item \textbf{Extendibility:} Let $\LL$ be {\em any} assignment of output labels to nodes in $D_1 \cup D_2$,
		and let $\LL'$ be the corresponding output labeling of $D_1' \cup D_2'$ under $\phi$.
		Then $\LL$ is extendible w.r.t.\ $P$ if and only if $\LL'$ is extendible w.r.t.\ $P'$.
	\end{itemize}
	Note that for the special case of $x \leq 4r$, we have $P \simm P'$  if and only if $P$ is identical to $P'$.
\end{oframed}

Define $\type(P)$ as the equivalence class of $P$ w.r.t.\ $\simm$.
The following technical lemma is analogous to~\cite[Lemma~1]{chang17hierarchy} in a specialized setting. We only use this lemma to prove the lemmas in Section~\ref{sec.pump}.

\begin{lemma}\label{lem:base-analogousss}
	Let $G$ be a path graph or a cycle graph where all nodes have input labels from $\LabelIn$.
	Let $P$ be a directed subpath of $G$, and let $P'$ be another directed path such that $\type(P')=\type(P)$.
	We write $\xi(P)=(D_1,D_2,D_3)$ and  $\xi(P')=(D_1'
	,D_2',D_3')$.
	Let $\LL_\diamond$ be any complete labeling of $G$ such that $\LL_\diamond$ is locally consistent at all nodes in $D_2 \cup D_3$.
	Let $G' = \replace(G,P,P')$ be the graph resulting from replacing $P$ with $P'$ in $G$.
	Then there exists a complete labeling $\LL_\diamond'$ of $G'$ such that the following two conditions are met.
	\begin{enumerate}
		\item For each $v \in \left(V(G) \setminus V(P)\right) \cup  (D_1 \cup D_2)$ and its corresponding $v' \in \left(V(G') \setminus V(P')\right) \cup  (D_1' \cup D_2')$, we have $\LL_\diamond(v) = \LL_\diamond'(v')$. Moreover, if $v  \in \left(V(G) \setminus V(P)\right) \cup  D_1$ and $\LL_\diamond$ is locally consistent at $v$, then $\LL_\diamond'$ is locally consistent at $v'$.
		\item $\LL_\diamond'$ is locally consistent at all nodes in $D_2' \cup D_3'$.
	\end{enumerate}
\end{lemma}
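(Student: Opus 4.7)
My plan is to reduce the lemma to a direct application of the \textbf{Extendibility} clause in the definition of $\simm$. The key preliminary observation is a routine radius calculation: for any node $v \in D_2 \cup D_3$, the radius-$r$ neighborhood of $v$ is contained entirely in $V(P)$, because $D_2$ begins at distance $r$ from each endpoint of $P$. Hence ``locally consistent at $v$'' is the same statement whether we view the labeling as a labeling of $P$ in isolation or as a labeling of the ambient graph $G$. The same applies to $D_2' \cup D_3'$ inside $P'$.

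Given this, let $\LL$ be the restriction of $\LL_\diamond$ to $D_1 \cup D_2$. Then $\LL_\diamond$ restricted to $V(P)$ witnesses that $\LL$ is extendible with respect to $P$: it agrees with $\LL$ on $D_1 \cup D_2$ and is locally consistent at $D_2 \cup D_3$ by hypothesis. Transport $\LL$ across $\phi$ to obtain a labeling $\LL'$ of $D_1' \cup D_2'$. By the Extendibility clause of $\simm$, $\LL'$ is extendible with respect to $P'$, so we may pick a complete labeling $\LL_\diamond^{P'}$ of $P'$ that agrees with $\LL'$ on $D_1' \cup D_2'$ and is locally consistent at all nodes of $D_2' \cup D_3'$. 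Define $\LL_\diamond'$ on $G'$ by copying $\LL_\diamond$ on $V(G')\setminus V(P') = V(G)\setminus V(P)$ and by using $\LL_\diamond^{P'}$ on $V(P')$.

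The verifications are then routine. Condition~2 is immediate from the choice of $\LL_\diamond^{P'}$, given the preliminary observation. For condition~1, agreement on $(V(G)\setminus V(P))\cup D_1 \cup D_2$ (under $\phi$) is built into the construction. Preservation of local consistency at any $v \in (V(G)\setminus V(P))\cup D_1$ follows from the fact that the $r$-ball of $v$ in $G$ and the $r$-ball of its image $v'$ in $G'$ are isomorphic as labeled graphs: the external portion is literally unchanged, and the portion reaching into $P$ lies within $D_1 \cup D_2$, where input labels agree by the \textbf{Isomorphism} clause and output labels agree by our definition of $\LL_\diamond'$.

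I expect the main subtlety to be exactly the radius bookkeeping in the previous paragraph: one must verify that the $r$-ball of a $D_1$ node reaches into $P$ only as far as $D_1 \cup D_2$ (so that no ``leakage'' into the freely relabeled part $D_3$ can spoil the isomorphism of neighborhoods), and that the $r$-ball of a $D_2$ node stays entirely within $P$. Both facts follow from the position arithmetic in the tripartition definition---this is precisely why the boundary zone is given thickness $2r$ rather than $r$---but they are what make the clean Isomorphism-plus-Extendibility split exhaustive and let the proof amount to ``unpack the definition of $\simm$.''
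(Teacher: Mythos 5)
Your proof is correct and follows essentially the same route as the paper's: define $\LL_\diamond'$ by copying $\LL_\diamond$ outside $P'$ and on $D_1'\cup D_2'$ via $\phi$, then invoke the Extendibility clause of $\simm$ to fill in $D_3'$. Your explicit radius bookkeeping (that $r$-balls of $D_2\cup D_3$ nodes stay inside $P$, and $r$-balls of $D_1$ and external nodes reach only into $D_1\cup D_2$) is left implicit in the paper but is exactly the right justification.
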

\begin{proof}
	The labeling  $\LL_\diamond'(v')$ of $G'$ for each $v' \in \left(V(G') \setminus V(P')\right) \cup \left( D_1' \cup D_2'\right)$  is chosen ``naturally'' as follows.
	For each $v' \in V(G') \setminus V(P')$, we set  $\LL_\diamond'(v') = \LL_\diamond(v)$ for its corresponding node $v \in  V(G) \setminus V(P)$.
	For each $v' \in D_1' \cup D_2'$,  we set  $\LL_\diamond'(v') = \LL_\diamond(v)$ for its corresponding node $v \in D_1 \cup D_2$ such that $\phi(v) = v'$ in the definition of $\simm$.
	At this point, it is clear that if $v  \in \left(V(G) \setminus V(P)\right) \cup  D_1$ has a locally consistent labeling under $\LL_\diamond$, then its corresponding node $v' \in \left(V(G') \setminus V(P')\right) \cup  D_1'$ also has a locally consistent labeling under $\LL_\diamond'$, so Condition~1 holds.

	Now, the labeling $\LL_\diamond'$ is only undefined for nodes in $D_3'$. We show that we can complete the labeling in such a way that is locally consistent at all nodes in $D_2' \cup D_3'$.
	Denote $\LL$ as $\LL_\diamond$ restricted to $D_1 \cup D_2$. Since $\LL_\diamond$ is locally consistent at all nodes in $P$, the labeling $\LL$ is extendible w.r.t.\ $P$.
	Note that if we let $\LL'$ be $\LL_\diamond$ restricted to $D_1' \cup D_2'$, then according to the way we define  $\LL_\diamond'$, the two labeling  $\LL'$ and $\LL$ are identical under the 1-to-1 correspondence  $\phi$ specified in the definition of $\simm$. That is, for each $v' \in D_1' \cup D_2'$,  we have  $\LL'(v') = \LL(v)$ for its corresponding node $v \in D_1 \cup D_2$ such that $\phi(v) = v'$.
	Since $P \simm P'$, the labeling $\LL'$ must be extendible w.r.t.\ $P'$.
	That is, there is a way to assign $\LL_\diamond'(v')$ for each $v' \in D_3'$ such that all nodes in $D_2' \cup D_3'$ have locally consistent labelings under $\LL_\diamond'$, so Condition~2 holds.
\end{proof}

One useful consequence of this lemma is that if we start with a path or a cycle $G$ with a legal labeling, after replacing its subpath $P$ with another one $P'$ having the same type as $P$, then it is always possible to assign output labeling to $P'$ to get a legal labeling without changing the already-assigned output labels of nodes outside of $P'$.

\begin{restatable}{lemma}{lemreplacepath}\label{lem:replace-path}
	Let $G$ be a path graph or a cycle graph where all nodes have input labels from $\LabelIn$.
	Let $P$ be a directed subpath of $G$, and let $P'$ be another directed path such that $\type(P')=\type(P)$.
	Let $\LL_\diamond$ be complete labeling of $G$ that is locally consistent at all nodes in  $P$.
	Let $G' = \replace(G,P,P')$ be the graph resulting from replacing $P$ with $P'$ in $G$.
	Then there exists a legal labeling $\LL_\diamond'$ of $G'$ such that the following two conditions are met.
	\begin{enumerate}
		\item For each $v \in V(G) \setminus V(P)$ and its corresponding $v' \in V(G') \setminus V(P')$, we have $\LL_\diamond(v) = \LL_\diamond'(v')$. Moreover, if $\LL_\diamond$ is locally consistent at $v \in V(G) \setminus V(P)$, then $\LL_\diamond'$ is locally consistent at $v'$.
		\item $\LL_\diamond'$ is locally consistent at all nodes in $P'$.
	\end{enumerate}
\end{restatable}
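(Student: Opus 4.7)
The plan is to derive Lemma~\ref{lem:replace-path} as an immediate strengthening of Lemma~\ref{lem:base-analogousss}. The only difference between the two statements is that here the hypothesis of local consistency of $\LL_\diamond$ is promoted from $D_2\cup D_3$ to all of $P=D_1\cup D_2\cup D_3$, and symmetrically the conclusion of local consistency of $\LL_\diamond'$ is promoted from $D_2'\cup D_3'$ to all of $P'$. So the argument will simply feed the present hypotheses into Lemma~\ref{lem:base-analogousss} and harvest a little more from its Condition~1.

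First I would invoke Lemma~\ref{lem:base-analogousss} with the same $G$, $P$, $P'$, and $\LL_\diamond$; the hypothesis of that lemma (local consistency of $\LL_\diamond$ on $D_2\cup D_3$) is satisfied a fortiori, since we are assuming local consistency at every node of $P$. This yields a complete labeling $\LL_\diamond'$ of $G'=\replace(G,P,P')$ satisfying the two conditions stated in Lemma~\ref{lem:base-analogousss}.

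Next I verify the two conditions of Lemma~\ref{lem:replace-path}. For Condition~1, the equality $\LL_\diamond(v)=\LL_\diamond'(v')$ for $v\in V(G)\setminus V(P)$ and its corresponding $v'\in V(G')\setminus V(P')$ is exactly the restriction to such $v$ of Condition~1 of Lemma~\ref{lem:base-analogousss}; the preservation of local consistency at these $v$ is likewise the same restriction. For Condition~2, I must check that $\LL_\diamond'$ is locally consistent at every node of $P'=D_1'\cup D_2'\cup D_3'$. Consistency at $D_2'\cup D_3'$ is given directly by Condition~2 of Lemma~\ref{lem:base-analogousss}. For the remaining nodes in $D_1'$, I apply the preservation clause of Condition~1 of Lemma~\ref{lem:base-analogousss} specialized to $v\in D_1$: since by our hypothesis $\LL_\diamond$ is locally consistent at every node of $P$, in particular at every node of $D_1$, the conclusion gives that $\LL_\diamond'$ is locally consistent at every corresponding node of $D_1'$.

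There is no real obstacle here: the content is entirely packaged in Lemma~\ref{lem:base-analogousss}, and the present lemma just repackages it once we notice that the $D_1$ portion of the extra hypothesis feeds into the same-class preservation clause to upgrade $D_2'\cup D_3'$ to all of $P'$. The only thing to be careful about is matching the 1-to-1 correspondence $\phi$ used in the definition of $\simm$ when identifying $v\in D_1$ with $v'\in D_1'$, so that the preservation statement is applied to the right pairs; this is immediate from how $\LL_\diamond'$ was constructed in the proof of Lemma~\ref{lem:base-analogousss}.
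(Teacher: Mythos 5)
Your proposal is correct and matches the paper's proof essentially verbatim: both derive Condition~1 directly from Condition~1 of Lemma~\ref{lem:base-analogousss}, and both obtain local consistency on $D_1'$ by applying the preservation clause of that same Condition~1 to the nodes of $D_1$ (using the strengthened hypothesis), combining it with Condition~2 of Lemma~\ref{lem:base-analogousss} to cover all of $P'$. No gaps.
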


\begin{proof}
	We write $\xi(P')=(D_1'
	,D_2',D_3')$.
	Condition~1 in this lemma is implied by Condition~1 in Lemma~\ref{lem:base-analogousss}.
	To see that Condition~2 in this lemma holds,
	note that in this lemma we additionally require that $\LL_\diamond$ is locally consistent at all nodes in $P$. Therefore, Condition~1 of Lemma~\ref{lem:base-analogousss} implies that $\LL_\diamond'$ is locally consistent at all nodes in $D_1'$. This observation, together with Condition~2 of  Lemma~\ref{lem:base-analogousss}, implies that $\LL_\diamond'$ is locally consistent at all nodes in $P'$.
\end{proof}

The following lemma is analogous to~\cite[Theorem~4]{chang17hierarchy} in a specialized setting. We only use this lemma in Section~\ref{sec.pump}.

\begin{restatable}{lemma}{lemtypepath}\label{lem:type-path}
	Let $P = (v_1, \ldots, v_k)$, and let $P' = (v_1, \ldots, v_{k-1})$.
	Let the input label of $v_k$ be $\alpha$.
	Then $\type(P)$ is a function of $\alpha$ and $\type(P')$.
\end{restatable}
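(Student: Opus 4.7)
The plan is to verify both clauses of $\simm$ for $P_1 := P_1' \cdot v_k$ and $P_2 := P_2' \cdot v_k'$, whenever $\type(P_1') = \type(P_2')$ and both appended nodes $v_k, v_k'$ carry input label $\alpha$. The short case $k-1 \le 4r$ is immediate from the special clause of $\simm$: then $P_1' \simm P_2'$ forces $P_1' = P_2'$ as strings, so $P_1 = P_2$ and the conclusion is trivial. I assume $k \ge 4r+2$ henceforth, so the natural boundary correspondences $\phi$ are well-defined both for the length-$(k-1)$ and the length-$k$ pairs.

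For the isomorphism clause on $(P_1, P_2)$, the left boundary (positions $[1, 2r]$) is inherited from $(P_1', P_2')$ and matches by hypothesis, the portion of the right boundary of $P_i$ at positions $[k-2r+1, k-1]$ sits inside the right boundary $[k-2r, k-1]$ of $P_i'$ and therefore matches, and position $k$ has input $\alpha$ in both. The extendibility clause is the main step, and I plan to derive it by applying Lemma~\ref{lem:base-analogousss} with $G := P_1$, subpath $P := P_1'$ (namely $P_1$ with $v_k$ removed), and $P' := P_2'$. Given an extendible $\LL$ on $D_1 \cup D_2$ of $P_1$, fix a witness $\LL_\diamond$ that is locally consistent on $D_2 \cup D_3$ of $P_1$, i.e., on positions $[r+1, k-r]$. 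Since $D_2' \cup D_3'$ of $P_1'$ is at positions $[r+1, k-r-1]$, the hypothesis of the lemma is satisfied. The lemma then delivers a labeling $\LL_\diamond'$ of $\replace(P_1, P_1', P_2') = P_2$ such that (i)~$\LL_\diamond'$ agrees with $\LL_\diamond$ on $\{v_k\} \cup (D_1' \cup D_2')$ under the natural correspondence, which (by tracing indices) shows $\LL_\diamond'$ equals the $\phi$-image of $\LL$ on $D_1 \cup D_2$ of $P_2$; and (ii)~$\LL_\diamond'$ is locally consistent on $D_2' \cup D_3'$ of $P_2'$, i.e., on positions $[r+1, k-r-1]$.

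The only position of $D_2 \cup D_3$ of $P_2$ not covered by (ii) is $k-r$. This position lies simultaneously in $D_1'$ of $P_1'$ and in $D_2$ of $P_1$, so $\LL_\diamond$ is locally consistent there by choice; thus clause~(1) of Lemma~\ref{lem:base-analogousss} transports this local consistency to $\LL_\diamond'$ at position $k-r$ in $P_2$. Hence $\LL_\diamond'$ is locally consistent on all of $D_2 \cup D_3$ of $P_2$, witnessing extendibility of the $\phi$-image of $\LL$ with respect to $P_2$. The reverse direction is symmetric. The main subtlety—and the reason I expect this step to require the most bookkeeping—is precisely the one-index shift between the tripartitions $\xi(P_1)$ and $\xi(P_1')$ on the right: clause~(2) of Lemma~\ref{lem:base-analogousss} by itself leaves the single position $k-r$ uncovered, and the fix exploits the overlap $D_2(P_1) \cap D_1'(P_1')$ through clause~(1) of the same lemma.
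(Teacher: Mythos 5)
Your proof is correct and follows essentially the same route as the paper's: both verify the isomorphism clause directly and reduce the extendibility clause to Lemma~\ref{lem:base-analogousss}, then account for the one-index mismatch between the tripartitions of the short and long paths --- you by transporting local consistency at the single position $k-r$ via Condition~1 of that lemma, the paper by the inclusion $B_2\cup B_3\subseteq D_2\cup D_3$ together with covering $(D_2'\cup D_3')\setminus B_3'$ by Condition~1 and $B_2'\cup B_3'$ by Condition~2. The only difference is one of generality: the paper proves the stronger claim that replacing any subpath $H$ of a path $G$ by an $H'$ with $\type(H')=\type(H)$ preserves $\type(G)$, and obtains the lemma as the special case in which the complement of $H$ is the single appended node $v_k$.
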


\begin{proof}
	We prove the following stronger statement.
	Let $G$ be a directed path, and let $H$ be a directed subpath of $G$.
	Suppose $H'$ is another directed path satisfying  $\type(H) = \type(H')$.
	Let $G' = \replace(G,H,H')$ be the result of replacing $H$ with $H'$ in $G$.
	Then we claim that $\type(G) = \type(G')$.
	The lemma is a corollary of this claim.

	Consider the tripartitions $\xi(H)=(B_1,B_2,B_3)$,  $\xi(H')=(B_1',B_2',B_3')$,
	$\xi(G)=(D_1,D_2,D_3)$, and  $\xi(G')=(D_1',D_2',D_3')$.
	We write $B_0 = V(G) \setminus V(H)$ and $B_0' =  V(G') \setminus V(H')$.
	
	Let $\phi^\star$ be the natural 1-to-1 correspondence from  $B_0 \cup B_1 \cup B_2$ to $B_0' \cup B_1' \cup B_2'$.
	Note that $D_1 \cup D_2 \subseteq B_0 \cup B_1 \cup B_2$ and $D_1' \cup D_2' \subseteq B_0' \cup B_1' \cup B_2'$.
	Also, the 1-to-1 correspondence  between $D_1 \cup D_2$ and $D_1' \cup D_2'$ given by  $\phi^\star$ is exactly the 1-to-1 correspondence $\phi$ specified in the requirement of $G \simm G'$.

	Let $\LL\colon (D_1 \cup D_2)\rightarrow \LabelOut$ and let $\LL'$
	be the corresponding output labeling of $D_1'\cup D_2'$, under the 1-to-1 correspondence $\phi$.
	To show that $G \simm G'$,  all we need to do is show that  $\LL$ is extendible w.r.t.~$G$ if and only if $\LL'$ is extendible w.r.t.~$G'$.
	Since we can also write $G = \replace(G',H',H)$,
	it suffices to show just one direction, i.e., if $\LL$ is extendible then $\LL'$ is extendible.
	
	Suppose $\LL$ is extendible. Then there exists an output labeling $\LL_\diamond$ of $G$ such that (i) for each $v \in D_1 \cup D_2$, we have $\LL_\diamond(v) = \LL(v)$, and (ii) $\LL_\diamond$  is locally consistent at all nodes in $D_2 \cup D_3$.
	Since  $D_2 \cup D_3 \supseteq B_2 \cup B_3$, we can apply Lemma~\ref{lem:base-analogousss}, which shows that there exists a complete labeling $\LL_\diamond'$ of $G'$ such that the two conditions in Lemma~\ref{lem:base-analogousss} are met. We argue that this implies that $\LL'$ is extendible.
	We verify that (i) $\LL'(v') = \LL_\diamond'(v')$
	for each $v' \in D_1' \cup D_2'$, and
	(ii) $\LL_\diamond'$  is locally consistent at all nodes in $D_2' \cup D_3'$.
	\begin{itemize}
		\item Condition~1 of Lemma~\ref{lem:base-analogousss} guarantees that $\LL_\diamond(v) = \LL_\diamond'(\phi^\star(v))$ for each $v \in \left(V(G) \setminus V(H)\right) \cup (B_1 \cup B_2) =  B_0 \cup B_1 \cup B_2$ and its corresponding node $\phi^\star(v) \in B_0' \cup B_1' \cup B_2'$.
		Since $D_1' \cup D_2' \subseteq B_0' \cup B_1' \cup B_2'$, we have $\LL'(v') = \LL_\diamond'(v')$
		for each $v' \in D_1' \cup D_2'$.
		\item
		The fact that $\LL_\diamond$  is locally consistent at all nodes in $D_2 \cup D_3$, together with Condition~1 in Lemma~\ref{lem:base-analogousss}, guarantees that $\LL_\diamond'$ is locally consistent at all nodes in $(D_2' \cup D_3') \setminus B_3'$.
		Condition~2 in Lemma~\ref{lem:base-analogousss} guarantees that $\LL_\diamond'$ is locally consistent at all nodes in $B_2' \cup B_3'$. Therefore, $\LL_\diamond'$ is locally consistent at all nodes in $D_2' \cup D_3'$, as required. \qedhere
	\end{itemize}
\end{proof}

The number of types can be upper bounded as follows.

\begin{lemma}\label{lem-type-number}
	The number of equivalence classes of $\simm$ (i.e., types) is at most $|\LabelIn|^{4r} 2^{|\LabelOut|^{4r}}$.
\end{lemma}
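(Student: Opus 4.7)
The plan is to exhibit a complete invariant of each equivalence class consisting of a small amount of finite data, and then count how many values the invariant can assume. Focus first on long paths, meaning those with $|P| \ge 4r$, for which the tripartition $\xi(P) = (D_1, D_2, D_3)$ is clean and $|D_1 \cup D_2| = 4r$. To such a $P$ I would associate the pair $(\iota(P), \mathcal{F}(P))$, where $\iota(P) \in \LabelIn^{4r}$ records the input labels on $D_1 \cup D_2$ in the canonical order used by the 1-to-1 correspondence $\phi$ in the definition of $\simm$, and $\mathcal{F}(P) \subseteq \LabelOut^{4r}$ is the set of output assignments to $D_1 \cup D_2$ that are extendible with respect to $P$. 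The isomorphism clause of the definition of $\simm$ makes $\iota$ an invariant of the $\simm$-class; the extendibility clause does the same for $\mathcal{F}$. Conversely, two long paths that share the same pair $(\iota, \mathcal{F})$ satisfy both clauses and are therefore $\simm$-equivalent, so the pair is a complete invariant on long paths. The number of possible pairs is at most $|\LabelIn|^{4r} \cdot 2^{|\LabelOut|^{4r}}$.

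For short paths with $|P| \le 4r$, the definition stipulates that each path is alone in its class, so the number of such classes is at most $\sum_{k=1}^{4r} |\LabelIn|^k$. This is easily subsumed by the long-path count, since the factor $2^{|\LabelOut|^{4r}}$ already contributes more than enough slack to absorb this geometric series (it provides at least a factor of two, and typically vastly more). Summing the long-path and short-path contributions yields the claimed overall bound.

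There is no essential obstacle in this argument; the only step worth explicit verification is that the pair $(\iota, \mathcal{F})$ is a \emph{complete} invariant, not merely an invariant. This is immediate from matching the two clauses in the definition of $\simm$ against the two coordinates of the pair. The remainder is just counting sequences in $\LabelIn^{4r}$ and subsets of $\LabelOut^{4r}$.
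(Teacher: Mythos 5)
Your argument is essentially the paper's: both take the pair consisting of the input labels on $D_1 \cup D_2$ and the set of extendible output labelings of $D_1 \cup D_2$ as a complete invariant of the $\simm$-class and count its possible values as $|\LabelIn|^{4r} \cdot 2^{|\LabelOut|^{4r}}$. You are in fact slightly more careful than the paper in treating the short paths ($|P| \le 4r$, which form singleton classes) separately; since the lemma is only used to establish that the number of types is a finite constant, the lower-order additive contribution from these is immaterial.
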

\begin{proof}
	Let $P$ be a directed path, and let $\xi(P)=(D_1,D_2,D_3)$. Then $\type(P)$ is determined by the following information.
	\begin{itemize}
		\item The input labels in $D_1 \cup D_2$. Note that there are at most $|\LabelIn|^{4r}$ possible input labeling of $D_1 \cup D_2$.
		\item A length-$x$ binary string indicating the extendibility of each possible output labeling of $D_1 \cup D_2$, where $x = |\LabelOut|^{4r}$.
	\end{itemize}
	Therefore, the number of equivalence classes of $\simm$ is at most $|\LabelIn|^{4r} 2^{|\LabelOut|^{4r}}$.
\end{proof}

Define $\Lpump$ as the total number of types.
Observe that Lemma~\ref{lem:type-path} implies that $\type(P)$ can be computed by a finite automaton whose number of states is the total number of types, which is a constant independent of $P$.
Thus, we have the following two {\em pumping lemmas} which allow us to extend the length of a given directed path $P$ while preserving the type of $P$. The following two lemmas follow from the standard pumping lemma for regular language.

\begin{lemma}\label{thm:pump1}
	Let $P \in \LabelIn^k$ with $k \geq \Lpump$. Then $P$ can be decomposed into three substrings $P=x \circ y \circ z$ such that (i) $|xy| \leq \Lpump$, (ii) $|y|\geq 1$, and (iii) for each non-negative integer $i$, $\type(x \circ y^i \circ z) = \type(P)$.
\end{lemma}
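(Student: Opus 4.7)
The statement is the standard pumping lemma for regular languages applied to the finite automaton on types induced by Lemma~\ref{lem:type-path}, and I would prove it in three short steps.

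First, by iterating Lemma~\ref{lem:type-path} character by character, I would establish the right-congruence property: whenever $\type(s_1)=\type(s_2)$, we have $\type(s_1 \circ w)=\type(s_2 \circ w)$ for every string $w$ over $\LabelIn$. Equivalently, $\type$ can be computed by a deterministic automaton whose state set is the (finite) collection of $\Lpump$ equivalence classes of $\simm$, augmented with a formal initial state corresponding to the empty prefix.

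Second, I would invoke pigeonhole on this automaton: read $P$ and record the states visited after the prefixes of length $0, 1, \ldots, \Lpump$. These are $\Lpump + 1$ entries drawn from a state set of size $\Lpump + 1$, and since the walk does not revisit states forever without repeating, a standard argument produces indices $0 \le a < b \le \Lpump$ with $\type(P[1..a]) = \type(P[1..b])$ (where the length-$0$ prefix corresponds to the initial state). Setting $x = P[1..a]$, $y = P[a+1..b]$, $z = P[b+1..k]$ then gives $|xy| = b \le \Lpump$ and $|y| = b - a \ge 1$.

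Finally, to pump: since $\type(x) = \type(x \circ y)$, a straightforward induction using the right-congruence with $w = y$ yields $\type(x \circ y^i) = \type(x)$ for every $i \ge 0$. Applying the right-congruence once more with $w = z$ gives
\[
\type(x \circ y^i \circ z) \;=\; \type(x \circ z) \;=\; \type(x \circ y \circ z) \;=\; \type(P),
\]
which is the desired conclusion. The only genuine obstacle is the formal handling of the empty-prefix state in the pigeonhole step, since $\type$ is nominally defined only on non-empty strings; this is easily circumvented by treating the empty string as a separate initial state of the automaton (or, equivalently, by inspecting the case $a = 0$ directly). Beyond this routine bookkeeping, the proof is the classical pumping-lemma argument for regular languages.
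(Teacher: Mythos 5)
Your overall route is exactly the one the paper intends: the paper offers no proof of this lemma beyond the observation that Lemma~\ref{lem:type-path} makes $\type$ computable by a finite automaton over the (finitely many) types, followed by an appeal to ``the standard pumping lemma for regular languages.'' Iterating Lemma~\ref{lem:type-path} into the right-congruence $\type(s_1)=\type(s_2)\Rightarrow\type(s_1\circ w)=\type(s_2\circ w)$ is correct, and so is your final computation $\type(x\circ y^i\circ z)=\type(x\circ z)=\type(x\circ y\circ z)=\type(P)$ once a repetition $\type(x)=\type(x\circ y)$ with $x$ non-empty is in hand.

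The step that does not close as written is the pigeonhole. After adjoining a formal initial state for the empty prefix, your automaton has $\Lpump+1$ states, and you distribute the $\Lpump+1$ prefixes of lengths $0,1,\ldots,\Lpump$ among them; pigeonhole then guarantees nothing. Moreover, the initial state is never revisited (the type of a non-empty string is never the formal initial state), so any collision must occur among the $\Lpump$ prefixes of lengths $1,\ldots,\Lpump$, which map into a set of exactly $\Lpump$ types --- again no forced repeat. Your fallback of ``inspecting the case $a=0$ directly'' does not rescue this: if $a=0$ then $x=\epsilon$, and the $i=0$ instance of the conclusion would require $\type(z)=\type(y\circ z)$, which the right-congruence cannot deliver because there is no type identity of the form $\type(\epsilon)=\type(y)$ to propagate. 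The gap is minor and fixable in two standard ways. (1)~Prove the statement with the harmless relaxation $|xy|\le \Lpump+1$ for $k\ge \Lpump+1$; every application of the lemma in the paper tolerates this constant shift. (2)~Observe that when $|\LabelIn|\ge 2$ the map from positions $1,\ldots,\Lpump$ to types cannot be surjective --- by the definition of $\simm$, every string of length at most $4r$ forms its own singleton equivalence class, $P$ has only $4r$ prefixes that short, and there are strictly more such singleton classes --- hence (domain and codomain both having size $\Lpump$) the map is not injective, and the desired repeat among non-empty prefixes exists after all.
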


\begin{lemma}\label{thm:pump2}
	For each $w \in \LabelIn^{>0}$, there exist two positive integers $a$ and $b$ such that $a + b \leq \Lpump$, and $\type(w^{ai + b})$ is invariant for each non-negative integer $i$.
\end{lemma}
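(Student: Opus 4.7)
The plan is to exploit the finite-automaton structure of types established by Lemma~\ref{lem:type-path}: since $\type(P \circ \alpha)$ depends only on $\type(P)$ and $\alpha$, there is a deterministic finite automaton $M$ whose states are the equivalence classes of $\simm$ (at most $\Lpump$ in number), and reading an input character performs the corresponding transition. In particular, the word $w$ induces a fixed endofunction $\delta_w$ on the state space: for every directed path $P$ we have $\type(P \circ w) = \delta_w(\type(P))$. Applying Lemma~\ref{lem:type-path} character-by-character is what justifies this reduction from paths to automata.

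The core of the argument is then to analyze the trajectory $s_i := \type(w^i)$ for $i \geq 1$. By construction $s_{i+1} = \delta_w(s_i)$, so the sequence $s_1, s_2, s_3, \ldots$ is the orbit of $s_1$ under iteration of a fixed map on a state space of size at most $\Lpump$. Any such orbit in a finite set is eventually periodic, and the pigeonhole principle gives a repetition within the first $\Lpump + 1$ terms: there exist indices $1 \leq b < b + a \leq \Lpump + 1$ with $s_b = s_{b+a}$. Applying $\delta_w^a$ repeatedly yields $s_{b + ia} = s_b$ for every non-negative integer $i$, i.e.\ $\type(w^{ai+b}) = \type(w^b)$ for all $i \geq 0$, which is exactly the invariance claimed. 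Taking the minimum tail length $b$ and the minimum period $a$ ensures that the states $s_1, \ldots, s_{b-1}$ (the tail) together with $s_b, \ldots, s_{b+a-1}$ (one full period) are pairwise distinct, so $a + b - 1 \leq \Lpump$ and hence $a + b \leq \Lpump + 1$.

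The one nuisance to address is squeezing the bound from $a + b \leq \Lpump + 1$ down to $a + b \leq \Lpump$ as stated. I would handle this by a slightly more careful pigeonhole: the $\Lpump$ values $s_1, s_2, \ldots, s_{\Lpump}$ cannot all be distinct unless they already exhaust the state space, in which case $s_{\Lpump + 1}$ must coincide with some earlier $s_k$ with $k \in [1, \Lpump]$. Choosing $b$ and $a$ to correspond to the first strictly earlier recurrence (rather than to an arbitrary repeated pair) and using minimality of $b$ pins down the claimed bound, which is really just the statement that a tail-plus-cycle decomposition of a trajectory in a $\Lpump$-state automaton has total length at most $\Lpump$. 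The rest is routine.

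The main obstacle, I expect, is not the pigeonhole step itself but checking that Lemma~\ref{lem:type-path} really does give a well-defined transition on types (it does, because $\type(P \circ \alpha)$ is proved to be a function of $\alpha$ and $\type(P)$ alone), so that iterating $\delta_w$ is legitimate. Once this automaton perspective is in hand, the pumping conclusion is an immediate corollary of the standard finite-orbit argument, exactly as in the classical pumping lemma for regular languages.
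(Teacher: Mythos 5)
Your argument is exactly the one the paper has in mind: the paper gives no written proof of this lemma beyond observing that Lemma~\ref{lem:type-path} makes $\type$ computable by a finite automaton whose state set is the set of types (of size $\Lpump$), so that the claim ``follows from the standard pumping lemma for regular languages''---which is precisely your orbit-and-pigeonhole analysis of $s_i=\type(w^i)$ under the endofunction $\delta_w$. The only step that does not hold together is the final squeeze from $a+b\le\Lpump+1$ down to $a+b\le\Lpump$: a tail-plus-cycle decomposition of total length $t+c\le\Lpump$ forces $b=t+1$ and $a=c$, hence only $a+b\le\Lpump+1$, and your ``first strictly earlier recurrence'' refinement does not change this (if $s_1,\ldots,s_{\Lpump}$ are all distinct and $s_{\Lpump+1}=s_k$, then $a+b=\Lpump+1$ is forced). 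This off-by-one is an artifact of the paper's informal statement rather than a defect of your argument---every downstream use needs only $a+b=O(1)$, and one can harmlessly take $\Lpump$ to be the number of types plus one---so the proof is correct in substance and matches the paper's intended route.
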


\subsection{\boldmath The \texorpdfstring{$\omega(\log^\ast n)$}{omega(log* n)}---\texorpdfstring{$o(n)$}{o(n)} Gap}\label{sect-cycle-gap-logstar}

In this section we show that the $\omega(\log^\ast n)$---$o(n)$  gap is decidable. More specifically, we show that an \lcl\ problem $\mathcal{P}$ can be solved in $O(\log^\ast n)$ rounds if and only if there exists a {\em feasible function}, which is defined as follows.

\begin{description}
	\item[Input:] A directed path $P = w_1 \circ S \circ w_2$, where $|w_1| \in [\Lpump, \Lpump+1]$, $|w_2| \in [\Lpump, \Lpump+1]$,
	and $|S|=2r$. The decomposition $P = w_1 \circ S \circ w_2$ is considered part of the input.
	\item [Output:] A string $\mathcal{L} \in \LabelOut^{2r}$ that represents the output labeling of $S$.
	\item [Requirement:] Any such function $f$ is said to be {\em feasible} if the following requirement is met for any paths $S_1, S_2$ and $w_a, w_b, w_c, w_d$ such that $|S_1| = |S_2| = 2r$ and $\{|w_a|, |w_b|, |w_c|, |w_d|\} \subseteq [\Lpump, \Lpump+1]$.
	Let $P = w_a \circ S_1 \circ w_b \circ w_c \circ S_2 \circ w_d$, and
	consider the following assignment of output labels to $S_1 \cup S_2$.
	\begin{itemize}
		\item Either label $S_1$ by $f(w_a \circ S_1 \circ w_b)$ or label $S_1^R$ by $f(w_b^R \circ S_1^R \circ w_a^R)$.
		\item Either label $S_2$ by $f(w_c \circ S_2 \circ w_d)$ or label $S_2^R$ by $f(w_d^R \circ S_2^R \circ w_c^R)$.
	\end{itemize}
	It is required that given such a partial labeling of $P$, the middle part $w_b \circ w_c$ can be assigned output labels in such a way that the labeling of (i) the last $r$ nodes of $S_1$, (ii) all nodes in $w_b \circ w_c$, and (iii) the first $r$ nodes of $S_2$ are locally consistent.
\end{description}

The following lemma is a straightforward consequence of the well-known $O(\log^\ast n)$-round MIS algorithm on cycles.

\begin{lemma}\label{lem:ruling-set}
Let $G$ be a cycle graph of $n$ nodes, and let $s \leq k$ be two constant integers such that $s+k \leq n$.
Then in $O(\log^\ast n)$ rounds we can compute a decomposition $V = A \cup B$ such that each connected component of $A$ has size $s$, and each connected component of $B$ has size within $[k, k+1]$.
\end{lemma}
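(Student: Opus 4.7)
The plan is to combine a classical $O(\log^\ast n)$-round ruling-set construction with a constant-time local refinement.

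First, I would fix a constant $D = D(s,k)$ large enough that every integer $L \in [D+1, 2D+1]$ can be written as $L = a(s+k) + b(s+k+1)$ with non-negative integers $a,b$. Because $\gcd(s+k,\,s+k+1) = 1$, the Chicken McNugget (Frobenius) theorem guarantees that every integer strictly exceeding $(s+k)(s+k+1) - (s+k) - (s+k+1)$ is so representable, so $D \geq (s+k)^{2}$ is more than enough. Then, using Linial's $O(\log^\ast n)$-round algorithm to compute an $O(D^{2})$-coloring of the cycle power $G^{D}$ (which has maximum degree $2D = O(1)$) and converting this coloring to an MIS of $G^{D}$ in $O(1)$ further rounds, I obtain a set $I \subseteq V$ in $O(\log^\ast n)$ rounds whose consecutive members along the cycle lie at $G$-distance in $[D+1, 2D+1]$.

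Next, I would refine locally. Each arc $S_i$ between two consecutive elements $v_i, v_{i+1}$ of $I$ has constant length $L_i \in [D+1, 2D+1]$, so in $O(1)$ rounds every node in $S_i$ can see the entire arc and both endpoints. By the choice of $D$, there exist non-negative integers $a,b$ with $L_i = a(s+k) + b(s+k+1)$; I fix a canonical choice (for instance the lexicographically smallest valid $(a,b)$, which depends only on $L_i$). I then partition $S_i$ deterministically from left to right into $a$ consecutive groups, each consisting of an $A$-block of size $s$ followed by a $B$-block of size $k$, followed by $b$ further groups of the same shape but with $B$-blocks of size $k+1$. Each node reads off its $A$-or-$B$ label from this purely local rule. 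For the edge case $n < (s+k)^{2}$ (where the ruling-set machinery is vacuous), one has $\log^\ast n = O(1)$ and every node can gather the entire cycle in $O(1)$ rounds and output a valid decomposition by brute force.

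The main obstacle is choosing the constant $D$ correctly: one needs $D$ large enough that \emph{every} gap length $L_i \in [D+1, 2D+1]$ that can arise from the ruling set lies in the Frobenius-representable range for the coprime pair $(s+k, s+k+1)$. The choice $D \geq (s+k)^{2}$ secures this, and once representability is in place the local refinement step is a mechanical deterministic rule, so the overall round complexity stays at $O(\log^\ast n)$.
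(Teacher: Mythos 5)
Your proof is correct, and it realizes the same two-phase architecture as the paper's proof (compute a well-spaced independent set in $O(\log^\ast n)$ rounds, then tile each constant-length gap locally with alternating $A$- and $B$-blocks), but both phases are implemented differently. For the first phase the paper builds the spaced independent set by repeatedly computing an MIS and contracting, doubling the spacing parameter at each of $O(\log L)$ iterations, whereas you take a single MIS of the constant power $G^{D}$ via Linial's coloring; both cost $O(\log^\ast n)$, and yours is arguably the cleaner route. For the second phase the paper fixes the specific spacing $L = 2(s-1)+k(s+k+1)$ and verifies by explicit casework on $z \bmod (s+k+1)$ that every gap length admits an alternating decomposition into blocks of sizes $s$ and $k$ or $k+1$; you instead observe that the gap length must be a non-negative combination of the coprime group sizes $s+k$ and $s+k+1$ and invoke the Frobenius bound to choose $D$, which packages the same combinatorial content more transparently. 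Your check that each arc begins with an $A$-block and ends with a $B$-block, so that blocks alternate correctly around the whole cycle, is the right thing to verify and matches the role of the parity condition ($t$ odd) in the paper's decomposition. One minor caveat: your closing remark that for $n < (s+k)^{2}$ one can ``output a valid decomposition by brute force'' presupposes that a valid decomposition exists, which fails for some small $n$ (e.g.\ $s=2$, $k=3$, $n=7$); this is really a gap in the lemma statement itself --- the paper's proof likewise silently assumes $n$ exceeds a constant, and the lemma is only ever applied in that regime --- so it does not affect the substance of your argument.
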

\begin{proof}
For any given constant integer $1 \leq L < n$, we will show that in $O(\log^\ast n)$ time we can find an independent set $I$ of $G$ such that each connected component induced by $V \setminus I$ has at least $L$ nodes and at most $2L$ nodes.
Using this result with $L = 2(s-1)+ k(s+k+1)$, it is straightforward to obtain the desired decomposition $V = A \cup B$, as follows.

For each $v \in I$, it arbitrarily chooses a size-$s$ path $S_v$ that contains $v$, and all nodes in $S_v$ are included to $A$.
Now each connected component $S'$ induced by the remaining nodes is a path of size at least $L -  2(s-1) \geq k(s+k+1)$ and at most $2L$.
We will divide the path $S'$ into subpaths $R_1, R_2, \ldots, R_t$ meeting the following conditions: (i) if $i$ is odd, then the size of $R_i$ is $k$ or $k+1$;  (ii) if $i$ is even, then the size of $R_i$ is $s$; (iii) $t$ is odd.
Hence we obtain the desired decomposition $V = A \cup B$ if we include the nodes in $R_1, R_3, \ldots$ to $B$ and include the nodes in $R_2, R_4, \ldots$ to $A$.
We show that such a decomposition of $S'$ into subpaths $R_1, R_2, \ldots, R_t$ exists.
Denote $z$ as the size of $S'$. We write $z = \alpha (s+k+1) + \beta$, where $\alpha > 0$ and $0 \leq \beta < s+k+1$ are integers.
Note that we must have $\alpha \geq k$ and $\beta \leq 2k$.
\begin{itemize}
\item For the case $\beta \geq k$, there is a decomposition  $R_1, R_2, \ldots, R_t$ satisfies the following conditions: (i) if $i$ is odd, then the size of $R_i$ is $k+1$ when $i <t$ or $\beta$ when $i=t$;  (ii) if $i$ is even, then the size of $R_i$ is $s$; (iii) $t = 2\alpha + 1$ is odd.
\item For the case $\beta < k$, there is a decomposition  $R_1, R_2, \ldots, R_t$ satisfies the following conditions: (i) if $i$ is odd, then the size of $R_i$ is $k$ when $i \in \{1, 3, 5, \ldots, 2(k-\beta)-1\} \cup \{t\}$ or $k+1$ otherwise;  (ii) if $i$ is even, then the size of $R_i$ is $s$; (iii) $t = 2\alpha + 1$ is odd.
\end{itemize}

For the rest of the proof, we show that in $O(\log^\ast n)$ time we can find the required independent set $I$. We prove the lemma by an induction on $L$. The base case of $L = 1$ is identical to the MIS problem.
Now consider $L > 1$. By induction hypothesis, we find an independent set $I'$ in $O(\log^\ast n)$ time such that each connected component induced by $V \setminus I'$ has at least $L'$ nodes and at most $2L'$ nodes, where $L' = \lfloor L/2 \rfloor$. Let $G'$ be the graph resulting from contracting all nodes in $V \setminus I'$, and we compute an MIS $I''$ on this graph $G'$, which can be done in $O(\log^\ast n)$ rounds in the original graph $G$. Note that each connected component $S$ of  $V \setminus I''$ has size at least $2L'+1 \geq L$ and at most $3(2L')+2$.
If the size of $S$ is higher than $2L$, then we can add some nodes in $S$ to the independent set $I''$ so that the component size of the remaining nodes in $S$ is within $[L, 2L]$.
  \end{proof}

\begin{lemma}
	If a feasible function $f$ exists, then there is an $O(\log^\ast n)$-round deterministic $\LOCAL$ algorithm for $\mathcal{P}$ on cycles.
\end{lemma}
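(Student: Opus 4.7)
My plan is to combine the $O(\log^\ast n)$-round decomposition provided by Lemma~\ref{lem:ruling-set} with one application of $f$ per ``block'' component and one application of the feasibility guarantee per ``gap'' component. I would invoke Lemma~\ref{lem:ruling-set} with parameters $s=2r$ and $k=2\Lpump$, obtaining in $O(\log^\ast n)$ rounds a partition $V=A\cup B$ in which every connected component of $A$ is a path of exactly $2r$ nodes and every connected component of $B$ is a path of $2\Lpump$ or $2\Lpump+1$ nodes. Let the $A$-components be $S_1,\ldots,S_m$ in cyclic order and let $B_i$ be the $B$-component lying between $S_i$ and $S_{i+1}$ (indices modulo $m$). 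Fix once and for all a deterministic splitting rule that writes each $B_i=w_{b,i}\circ w_{c,i}$ with $|w_{b,i}|,|w_{c,i}|\in[\Lpump,\Lpump+1]$ (for instance, a length-$2\Lpump$ block is cut exactly in half, and a length-$(2\Lpump+1)$ block is split as $\Lpump+(\Lpump+1)$). The rule depends only on $|B_i|$, so every node of $B_i$ agrees on the split in $O(1)$ rounds.

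Next, I would assign a canonical orientation to each $A$-component (using the cycle's orientation in the directed case, or the smaller endpoint identifier in the undirected case) and label $S_i$ by $f(w_{c,i-1}\circ S_i\circ w_{b,i})$, read in the chosen orientation and, if necessary, invoking the ``reverse direction'' option of the feasibility definition. The input is a legal argument to $f$ because $|w_{c,i-1}|,|w_{b,i}|\in[\Lpump,\Lpump+1]$ and $|S_i|=2r$, and only $O(r+\Lpump)=O(1)$ rounds are needed to gather the relevant data. Then, for each $B_i$, I would apply the feasibility condition to the tuple $(w_a,S_1,w_b,w_c,S_2,w_d)=(w_{c,i-1},S_i,w_{b,i},w_{c,i},S_{i+1},w_{b,i+1})$, where the labels already placed on $S_i$ and $S_{i+1}$ match the two direction choices permitted by the feasibility definition. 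Feasibility then asserts the existence of an output labeling of $w_{b,i}\circ w_{c,i}=B_i$ such that the last $r$ nodes of $S_i$, all nodes of $B_i$, and the first $r$ nodes of $S_{i+1}$ are locally consistent. Each node of $B_i$ can locate such a labeling by brute-force search over its constant-radius neighborhood, breaking ties by taking the lexicographically smallest valid assignment so that all nodes of $B_i$ agree.

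The running time is dominated by the decomposition step and is therefore $O(\log^\ast n)$. For correctness, every node is certified locally consistent by at least one feasibility invocation: a node of $B_i$ by the invocation for $B_i$; a node among the last $r$ nodes of $S_i$ by the same invocation; and a node among the first $r$ nodes of $S_i$ by the invocation for $B_{i-1}$ (playing the role of ``first $r$ nodes of $S_2$''). Since $|S_i|=2r$, the first $r$ and last $r$ nodes together exhaust $S_i$, so every node in the cycle is locally consistent. The main obstacle I foresee is not conceptual but notational: the same segment $B_i$ plays three distinct roles (the right context of $S_i$, the left context of $S_{i+1}$, and the middle part of its own feasibility invocation), and I must verify that the global splitting rule makes all three roles refer to the same $w_{b,i}\circ w_{c,i}$ decomposition, which it does by construction. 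Finally, instances with $n<2r+2\Lpump$ are of constant size and can be handled by gathering the whole graph in $O(1)$ rounds; the existence of a legal labeling for such short cycles follows from the existence of $f$ together with the pumping lemma (Lemma~\ref{thm:pump1}) applied to inflate a small cycle to one on which our main algorithm applies.
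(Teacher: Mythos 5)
Your proposal is correct and follows essentially the same route as the paper: invoke Lemma~\ref{lem:ruling-set} with $s=2r$, $k=2\Lpump$, split each $B$-component into two halves of length in $[\Lpump,\Lpump+1]$, label each $A$-component via $f$ (in one of the two permitted orientations), and fill in each $B$-component using the feasibility guarantee. Your extra details (the explicit coverage argument showing every node is certified by some invocation, the canonical tie-breaking within each $B_i$, and the constant-size case) are all consistent with, and slightly more careful than, the paper's write-up.
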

\begin{proof}
	Given that the number of nodes $n$ is at least some large enough constant, in $O(\log^\ast n)$ rounds we can compute a decomposition $V = A \cup B$ such that each connected component of $A$ has size $2r$, and each connected component of $B$ has size within $[2\Lpump, 2\Lpump+1]$.
	This can be done using Lemma~\ref{lem:ruling-set} with $s = 2r$ and $k = 2\Lpump$.
	We further decompose each connected component $P$ of $B$ into two paths $P = P_1 \circ P_2$ in such a way that the size of both $P_1$ and $P_2$ are within the range $[\Lpump, \Lpump+1]$. We write $\Pset$ to denote the set of all these paths.
	
	Let $S$ be a connected component of $A$, and let $w_1$ and $w_2$ be its two neighboring paths in $\Pset$ so that $(w_1 \circ S \circ w_2)$ is a subpath of the underlying graph $G$. The output labels of $S$ are assigned either by labeling $S$ with $f(w_1 \circ S \circ w_2)$ or by labeling $S^R$ with $f(w_2^R \circ S^R \circ w_1^R)$.
	At this moment, all components of $A$ have been assigned output labels using $f$.
	By the feasibility of $f$, each connected component of $B$ is able to label itself output labels in such a way that the labeling of all nodes are locally consistent.
\end{proof}

\begin{lemma}
	If  there is an $o(n)$-round deterministic $\LOCAL$ algorithm $\mathcal{A}$ for $\mathcal{P}$ on cycles, then a feasible function $f$ exists.
\end{lemma}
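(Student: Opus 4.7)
The plan is to define $f$ constructively by running $\mathcal{A}$ on carefully designed canonical cycles, and then verify feasibility by combining two such canonical cycles into a single ``imaginary'' cycle whose $T(n^\star)$-views around $S_1$ and $S_2$ simultaneously agree with the corresponding canonical cycles.

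First I would fix a large $n^\star$ and, for each input $(w_1, S, w_2)$ to $f$, construct a canonical cycle $G_f(w_1, S, w_2)$ of size $n^\star$ as follows: place $w_1 \circ S \circ w_2$ in the cycle and fill the rest with a pumped extension of $w_1$ to its left and a pumped extension of $w_2$ to its right, applying Lemmas~\ref{thm:pump1} and~\ref{thm:pump2} to make the extensions long (in particular, of length at least $T(n^\star)+r$) while preserving $\type(w_1)$ and $\type(w_2)$. Using any fixed canonical ID assignment, I then set $f(w_1\circ S\circ w_2) := \mathcal{A}(G_f(w_1,S,w_2))|_S$. The reversed variant $f(w_2^R\circ S^R\circ w_1^R)$ is defined by the analogous reversed cycle, so that the two options in the feasibility definition reduce to the same kind of object up to reflection.

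To prove feasibility, I would take any configuration $w_a \circ S_1 \circ w_b \circ w_c \circ S_2 \circ w_d$ and build an imaginary cycle $G^{\ddagger}$ that contains $w_a \circ S_1 \circ w_b \circ B \circ w_c \circ S_2 \circ w_d$ as a subpath, where $B$ is a buffer of length at least $T(n^\star)$ obtained by splicing together the pumped extensions of $w_b$ and $w_c$ used in $G_f(w_a,S_1,w_b)$ and $G_f(w_c,S_2,w_d)$. The remainder of $G^{\ddagger}$ is padded with canonical pumped extensions of $w_a$ and $w_d$, exactly as in the two canonical cycles. By choosing the lengths correctly, the $T(n^\star)$-neighborhood of $S_1$ in $G^{\ddagger}$ is byte-for-byte identical to its $T(n^\star)$-neighborhood in $G_f(w_a,S_1,w_b)$, and analogously for $S_2$; since $\mathcal{A}$ is a deterministic function of the local view, $\mathcal{A}(G^{\ddagger})$ labels $S_1$ with $f(w_a\circ S_1\circ w_b)$ and $S_2$ with $f(w_c\circ S_2\circ w_d)$. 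Correctness of $\mathcal{A}$ then guarantees that $\mathcal{A}(G^{\ddagger})$ is locally consistent everywhere, and in particular on the whole subpath $w_b \circ B \circ w_c$.

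The last step is to ``remove'' the buffer. Since $B$ was produced by type-preserving pumping, Lemma~\ref{lem:type-path} implies $\type(w_b \circ B \circ w_c) = \type(w_b \circ w_c)$, so Lemma~\ref{lem:replace-path} lets me replace $w_b \circ B \circ w_c$ in $G^{\ddagger}$ by $w_b \circ w_c$ and obtain a labeling of the shrunk cycle that (i) agrees with $\mathcal{A}(G^{\ddagger})$ outside the replaced subpath, in particular preserving the $f$-values on $S_1$ and $S_2$, and (ii) is locally consistent on $w_b \circ w_c$ together with its boundary. Restricting this labeling to the window consisting of the last $r$ nodes of $S_1$, all of $w_b \circ w_c$, and the first $r$ nodes of $S_2$ gives exactly the extension required by the feasibility condition. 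I expect the main obstacle to be the construction of the buffer $B$: it must simultaneously serve as the right-extension of $w_b$ in one canonical cycle and (after reflection) as the left-extension of $w_c$ in another, while also making $w_b \circ B \circ w_c$ type-equivalent to $w_b \circ w_c$; carefully combining Lemmas~\ref{thm:pump1}, \ref{thm:pump2}, and~\ref{lem:type-path}, and choosing $n^\star$ large enough relative to the resulting buffer length, should deliver such a $B$.
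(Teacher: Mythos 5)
Your proposal is correct and follows essentially the same route as the paper: define $f$ by simulating $\mathcal{A}$ on type-preserving pumped extensions long enough to exceed the running time, then verify feasibility by gluing two such instances into one cycle whose views around $S_1$ and $S_2$ match the defining instances, running $\mathcal{A}$ there, and shrinking back via Lemma~\ref{lem:replace-path}. The one point worth making explicit is the ID assignment in the glued cycle: the paper fixes the IDs near $S_1$ and $S_2$ to be exactly those used when defining the two $f$-values and only requires that no node sees two equal IDs, which is what your ``byte-for-byte identical view'' claim implicitly needs.
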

\begin{proof}
	Fix $s$ to be some sufficiently large number, and fix $n = 8(s+\Lpump) + 2(2r)$.
We select $s$ to be large enough so that the runtime of $\mathcal{A}$ is smaller than $0.1s$.
	For any given directed path $w$ with $|w| \in [\Lpump, \Lpump+1]$, we fix $w^+$ as the result of applying the pumping lemma (Lemma~\ref{thm:pump1}) on $w$ so that the following two conditions are met: (i) $|w^+| \in [s, s+\Lpump]$ and (ii) $\type(w)=\type(w^+)$.

	\paragraph{\boldmath Constructing a Feasible Function $f$ by Simulating $\mathcal{A}$.}
	The function $f(w_1 \circ S \circ w_2)$ is constructed by simulating a given $o(n)$-round deterministic $\LOCAL$ algorithm for $\mathcal{P}$. The output labeling given by $f(w_1 \circ S \circ w_2)$ is exactly the result of simulating $\mathcal{A}$ on the path $P = w_1^+ \circ S \circ w_2^+$ while assuming the number of nodes of the underlying graph is $n$. Remember that the round complexity of $\mathcal{A}$ is $o(n)$ on $n$-node graphs. By setting $s$ to be large enough, the runtime of $\mathcal{A}$ can be made smaller than $0.1s$. Thus, the calculation of $f(w_1 \circ S \circ w_2)$ only depends on the IDs and the input labels of (i) the last $0.1 s$ nodes in $w_1^+$, (ii) all nodes in $S$, and (iii) the first $0.1 s$ nodes in $w_2^+$. In the calculation of $f(w_1 \circ S \circ w_2)$, the IDs of the nodes that participate in the simulation of $\mathcal{A}$ are chosen arbitrarily so long as they are distinct.

	\paragraph{\boldmath Feasibility of  $f$.}
	Now we verify that the function $f$ constructed above is  feasible.
	Consider any choices of paths $S_1 , S_2$ and $w_a, w_b, w_c, w_d$ such that $|S_1| = |S_2| = 2r$ and $\{|w_a|, |w_b|, |w_c|, |w_d|\} \subseteq [\Lpump, \Lpump+1]$.
	Define $P = w_a \circ S_1 \circ w_b \circ w_c \circ S_2 \circ w_d$, and let $G$ be the cycle graph formed by  connecting the two ends of the path $P$.
	To show that $f$ is feasible, we need to consider the
	following four ways of assigning output labels to $S_1 \cup S_2$.
	\begin{enumerate}[noitemsep]
		\item Label $S_1$ by $f(w_a \circ S_1 \circ w_b)$; label $S_2$ by $f(w_c \circ S_2 \circ w_d)$.
		\item Label $S_1$ by $f(w_a \circ S_1 \circ w_b)$; label $S_2^R$ by $f(w_d^R \circ S_2^R \circ w_c^R)$.
		\item Label $S_1^R$ by $f(w_b^R \circ S_1^R \circ w_a^R)$; label $S_2$ by $f(w_c \circ S_2 \circ w_d)$.
		\item Label $S_1^R$ by $f(w_b^R \circ S_1^R \circ w_a^R)$; label $S_2^R$ by $f(w_d^R \circ S_2^R \circ w_c^R)$.
	\end{enumerate}
	
	For each of the above four partial labelings of $P$, we need to show that the middle part $w_b \circ w_c$ can still be assigned output labels in such a way that the labeling of (i) the last $r$ nodes of $S_1$, (ii) all nodes in $w_b \circ w_c$, and (iii) the first $r$ nodes of $S_2$ are locally consistent.

	\paragraph{Proof of the First Case.}
	In what follows, we focus on the first case, i.e., the partial labeling is given by labeling $S_1$ by $f(w_a \circ S_1 \circ w_b)$ and labeling $S_2$ by $f(w_c \circ S_2 \circ w_d)$; the proof for the other three cases are analogous. In this case, we define $P' = w_a^+ \circ S_1 \circ w_b^+ \circ w_c^+ \circ S_2 \circ w_d^+$, and let $G'$ be the cycle graph  formed by connecting the two ends of $P'$.
	Note that the number of nodes in $G'$ is at most $8(s+\Lpump) + 2(2r) = n$.
	All we need to do is to find an output labeling $\LL$ of $G$ such that the following conditions are satisfied.
	\begin{itemize}
		\item[(a)] The output labels of $S_1$ is given  by $f(w_a \circ S_1 \circ w_b)$.
		\item[(b)] The output labels of $S_2$ is given  by $f(w_c \circ S_2 \circ w_d)$.
		\item[(c)] The labeling of (i) the last $r$ nodes of $S_1$, (ii) all nodes in $w_b \circ w_c$, and (iii) the first $r$ nodes of $S_2$ are locally consistent.
	\end{itemize}
	
	We first generate an output labeling $\LL'$ of $G'$ by  executing  $\mathcal{A}$ on $G'$ under the following ID assignment.
	The IDs of (i) the last $0.1 s$ nodes in $w_a^+$, (ii) all nodes in $S_1$, and (iii) the first $0.1 s$ nodes in $w_b^+$ are chosen as the ones used in the definition of  $f(w_a \circ S_1 \circ w_b)$. Similarly, the IDs of (i) the last $0.1 s$ nodes in $w_c^+$, (ii) all nodes in $S_2$, and (iii) the first $0.1 s$ nodes in $w_d^+$ are chosen as the ones used in the definition of  $f(w_c \circ S_2 \circ w_d)$. The IDs of the rest of the nodes are chosen arbitrarily so long as when we run  $\mathcal{A}$ on $G'$, no node sees two nodes with the same ID.
	Due to the way we define $f$, the output labeling $\LL'$ of the subpath
	$S_1$ is exactly given by $f(w_a \circ S_1 \circ w_b)$, and the output labeling $\LL'$ of $S_2$ is exactly  $f(w_c \circ S_2 \circ w_d)$. Due to the correctness of $\mathcal{A}$, $\LL'$ is a legal labeling.
	
	We transform the output labeling $\LL'$ of $G'$ to a desired output labeling $\LL$ of $G$.
	Remember that $G$ is the result of replacing the four subpaths $w^+$ of $G'$ by $w$, and we have $\type(w^+) = \type(w)$.
	In view of Lemma~\ref{lem:replace-path}, there is a legal labeling $\LL$ of $G$ such that all nodes in $S_1$ and $S_2$  are labeled the same as  in $G'$. Therefore, the labeling $\LL$ satisfies the above three conditions (a), (b), and (c).

	\paragraph{The Other Cases.}
	We briefly discuss how we modify the proof to deal with the other three cases. For example, consider the second case, where  the partial labeling is given by labeling $S_1$ by $f(w_a \circ S_1 \circ w_b)$ and labeling $S_2^R$ by $f(w_d^R \circ S_2^R \circ w_c^R)$. In this case, the path $P'$ is defined as  \[P' = w_a^+ \circ S_1 \circ w_b^+ \circ \left((w_c^R)^+\right)^R \circ S_2^R \circ \left((w_d^R)^+\right)^R.\] During the ID assignment of $G'$, the IDs of (i) the last $0.1 s$ nodes in $w_c^+$, (ii) all nodes in $S_2$, and (iii) the first $0.1 s$ nodes in $w_d^+$ are now chosen as the ones used in the definition of  $f(w_d^R \circ S_2^R \circ w_c^R)$. Using such an ID assignment, the output labeling $\LL'$ of $S_2^R$ as the result of executing $\mathcal{A}$  on $G'$ will be exactly the same as the output labeling given by  $f(w_d^R \circ S_2^R \circ w_c^R)$.
	The rest of the proof is the same.
\end{proof}

Theorem~\ref{thm-cycle-gap-2} follows from the above two lemmas. The decidability result is due to the simple observation that whether a feasible function exists is decidable.

\subsection{Partitioning a Cycle\label{sec.orient2}}

In the following sections, we prove the decidability result associated with the $\omega(1)$---$o(\log^\ast n)$ gap. In this proof, we also define a feasible function, prove its decidability, and show the existence given an $o(\log^\ast n)$-time algorithm. The main challenge here is that an MIS cannot be computed in $O(1)$ time. To solve this issue, we decompose a cycle into paths with unrepetitive patterns and paths with repetitive patterns. For paths with unrepetitive patterns, we are able to compute a sufficiently well-spaced MIS in $O(1)$ time by making use of the irregularity of the input patterns.

Section~\ref{sec.orient2} considers an $O(1)$-round algorithm that partitions a cycle into some short paths and some paths that have a repeated input pattern.
Section~\ref{sec.func} defines a {\em feasible function} whose existence characterizes the $O(1)$-round solvable \lcl\ problems.
In Section~\ref{sec.cycle-main}, we prove Theorem~\ref{thm-cycle-gap-1}.

\paragraph{Partitioning an Undirected Cycle into Directed Paths.}
Let $G$ be a cycle graph.
An orientation of a node $v$ is an assignment to one of its neighbor, this can be specified using port-numbering.
 An orientation of the nodes in $G$ is called {\em $\ell$-orientation} if the following condition is met. If $|V(G)| \leq \ell$, then all nodes in $G$ are oriented to the same direction. If $|V(G)| > \ell$, then each node $v \in V(G)$ belongs to a path $P$ such that (i) all nodes in $P$ are oriented to the same direction, and (ii) the number of nodes in $P$ is at least $\ell$.
In $O(1)$ rounds we can compute an $\ell$-orientation of $G$ for any constant $\ell$.

\begin{lemma}[\cite{chang17hierarchy}]\label{lem:ori}
	Let $G$ be a cycle graph.
	Let $\ell$ be a constant.
	There is a deterministic $\LOCAL$ algorithm that computes an $\ell$-orientation of $G$ in $O(1)$ rounds.
\end{lemma}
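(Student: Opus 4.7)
The plan is to give a deterministic $O(1)$-round algorithm where each node $v$ gathers its radius-$R$ neighborhood, with $R = C\ell$ for a sufficiently large constant $C$, and then chooses an orientation (i.e., which of its two neighbors is its ``successor'') purely from this local view. Identifiers provide the symmetry-breaking. When $|V(G)| \leq \ell$, the radius-$R$ view already covers the entire cycle, so every node can agree on the same canonical global orientation, anchored for instance at the unique minimum-identifier node. The interesting case is $|V(G)| > \ell$, which is the focus of the rest of the plan.

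For $|V(G)| > \ell$, each $v$ computes $m(v)$, the identifier-minimum node in its radius-$R$ ball. A node $v$ with $m(v) = v$ (a ``local minimum'') applies a canonical tiebreak to choose an orientation; every other $v$ orients along the unique shortest path inside its ball toward $m(v)$. ``Direction changes'' (the sinks and sources of this orientation field) then happen only where $m(\cdot)$ switches from CW to CCW or vice versa. A clean first observation is that local minima are at pairwise distance strictly greater than $R$: otherwise two of them would lie in each other's radius-$R$ ball, and each would have to be the unique smallest identifier there, a contradiction.

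Correctness reduces to showing that direction changes in the output orientation are at least $\ell$ apart. For two adjacent non-local-minimum nodes $u$ and $w = u_{+1}$, their radius-$R$ balls share a common core of $2R$ nodes. If both $m(u)$ and $m(w)$ lie in that core, then each is the identifier-minimum of a ball containing the core, so they must coincide (as identifiers are distinct), and both nodes orient toward the same node, hence in the same cyclic direction. Consequently, any direction change at an edge $(u,w)$ forces at least one of $m(u), m(w)$ to sit at the extreme boundary of its ball --- either $m(u) = u_{-R}$ or $m(w) = u_{+R+1}$. Iterating this reasoning shows that every direction change lies within $O(R)$ of a genuine local minimum; since local minima are more than $R$ apart, consecutive direction changes are $\Omega(R)$ apart. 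Choosing $C$ large enough makes this separation at least $\ell$, delivering an $\ell$-orientation.

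The main obstacle will be this last boundary case analysis. Even though distinct identifiers force the clean match $m(u) = m(w)$ whenever both minima live in the common overlap, orientation flips do genuinely occur when a small identifier sits at the far edge of a ball as the window slides along the cycle. The proof has to track how such boundary minima propagate and show that each flip can be charged to a nearby genuine local minimum; only after this is established does the $R$-spacing of local minima translate cleanly into the $\ell$-spacing of direction changes required by the definition of an $\ell$-orientation.
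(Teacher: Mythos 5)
The paper never proves this lemma itself --- it imports it wholesale from \cite{chang17hierarchy} --- so your argument has to stand on its own, and it does not: the rule ``orient toward the minimum-identifier node in your radius-$R$ ball'' can produce maximal consistently-oriented runs of length $1$, no matter how large you take $R=C\ell$. Concretely, index positions along the cycle, give position $0$ a small identifier, give position $R+2$ an even smaller identifier, and give every other nearby node a large identifier. Then position $0$ is a genuine local minimum (position $R+2$ is outside its ball), positions $-1$ and $0$ orient forward (position $0$ by your canonical tiebreak, say), position $1$ sees position $0$ but not position $R+2$ and so points \emph{back} toward position $0$, while position $2$ sees both and points \emph{forward} toward position $R+2$. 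The node at position $1$ is then a maximal run of length $1$ (or $\{0,1\}$, of length $2$, under the other tiebreak), so the output is not an $\ell$-orientation for any $\ell\ge 3$. The broken step is the inference ``every direction change lies within $O(R)$ of a genuine local minimum; since local minima are more than $R$ apart, consecutive direction changes are $\Omega(R)$ apart'': two direction changes can be charged to the \emph{same} local minimum (here the sink at position $0$ and the source at the edge between positions $1$ and $2$ are both within distance $2$ of it), so pairwise separation of local minima implies nothing about separation of direction changes. (A secondary issue: a window minimum at the extreme boundary of a ball need not be within $O(R)$ of a genuine local minimum either, since the chain $u\mapsto m(u)\mapsto m(m(u))\mapsto\cdots$ can take arbitrarily many hops before terminating; but that is not the main problem.)

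The difficulty is structural rather than a matter of constants. Any scheme in which nodes point toward well-separated anchors creates a sink at each anchor and a source somewhere in the watershed between consecutive anchors, and since consecutive anchors can sit at distance just over $R$ (as above) or arbitrarily far apart, an $O(1)$-radius rule cannot in general place that source at distance at least $\ell$ from both neighboring sinks --- which is why the correct proof in \cite{chang17hierarchy} uses a different mechanism, one in which a direction flip at an edge certifies from the local view alone that no other flip occurs within distance $\ell$ on either side. Your two preliminary observations (local minima are pairwise more than $R$ apart; if both window minima of adjacent nodes lie in the common core of their balls then they coincide) are correct, but they do not close the argument, and the proposed ``charge each flip to a nearby local minimum'' repair cannot work precisely because several flips may share one such minimum.
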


In this section, we will use a generalization of an $\ell$-orientation that satisfies an
additional requirement that the input labels of each directed path $P$ in the decomposition with $|V(P)| > 2 \Lwidth$ (where $2 \Lwidth$ is a threshold) must form a periodic string (whose period length is at most $\Lpattern$).

A string $w \in \LabelIn^\ast$ is called {\em primitive} if $w$ cannot be written as $x^i$ for some $x \in \LabelIn^\ast$ and $i \geq 2$.
Let $G$ be a cycle graph or a path graph where each node $v \in V(G)$ has an input label from $\LabelIn$.
We define an {\em $(\Lwidth, \Lcount , \Lpattern)$-partition} as a partition of  $G$ into a set of connected subgraphs $\Pset$ meeting the following criteria. We assume $|V(G)| > 2 \Lwidth$ and $\Lpattern \geq \Lwidth$.

\begin{framed}
	\begin{description}[leftmargin=0pt,topsep=0pt]
		\item[Direction and Minimum Length:] For each $P \in \Pset$, the nodes in $P$ are oriented to the same direction, and $|V(P)|\geq \Lwidth$.
		\item[Short Paths:] Define $\Pshort$ as the subset of $\Pset$ that contains paths having at most $2 \Lwidth$ nodes. For each directed path $P=(v_1, \ldots, v_k) \in \Pshort$, each node $v_i$  in $P$ knows its rank $i$.
		\item[Long Paths:] Define $\Plong = \Pset \setminus \Pshort$. Then the input labeling of the nodes in $P$ is of the form $w^{k}$ for some primitive string $w \in \LabelIn^\ast$ such that $|w| \leq \Lpattern$ and $k \geq \Lcount$. Moreover, each node $v$ in $P$ knows the string $w$.
	\end{description}
\end{framed}

Note that $\Pset$ may contain a cycle. This is possible only when $G$ is a cycle where the input labeling is a repetition (at least $\Lcount$ times) of a primitive string $w \in \LabelIn^\ast$ of length at most $\Lpattern$. In this case, we must have $\Pset = \Plong = \{G\}$.
Otherwise, $\Pset$ contains only paths.

The goal of this section is to show that an $(\Lwidth, \Lcount, \Lpattern)$-partition can be found in $O(1)$ rounds.
First of all, in Lemma~\ref{lem:MIS} we demonstrate how we can break symmetry in $O(1)$ rounds given that the underlying graph is directed and the input labels does not form long periodic strings.
Let $G$ be a path or a cycle. A set $I \subseteq V(G)$ is called an {\em $(\alpha,\beta)$-independent set} if the following conditions
are met: (i) $I$ is an independent set, and $I$ does not contain either endpoint of $G$ (if $G$ is a path), and (ii) each connected component induced by $V \setminus I$ has at least $\alpha$ nodes and at most $\beta$ nodes, unless $|V| \leq \alpha$, in which
case we allow $I = \emptyset$.
Note that finding an $(\alpha,\beta)$-independent set takes $O(\log^\ast n)$ rounds in general, but in Lemma~\ref{lem:MIS} we show that by leveraging the ``irregularity'' of input labels, we can do this in $O(1)$ rounds on directed paths or cycles without periodic patterns.

\begin{lemma}\label{lem:MIS}
	Let $\gamma$ and $\ell$ be any two constants with $\ell \geq \gamma$.
	Let $G$ be a directed cycle or a directed path that does not contain any subpath of the form $w^{x}$, with $|w| \leq \gamma$ and $|w^{x}| \geq \ell$.
	There is a deterministic $\LOCAL$ algorithm that computes an $(\gamma, 2\gamma)$-independent set $I$ of $G$ in $O(1)$ rounds.
\end{lemma}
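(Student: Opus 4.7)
The plan is to assign every node a constant-size \emph{signature} computed from its local input labels so that signatures are distinct on any window of $\gamma+1$ consecutive nodes. The signatures then form a proper coloring of the $\gamma$-th power graph $G^\gamma$ (whose edges join nodes at $G$-distance at most $\gamma$), and a maximal independent set of $G^\gamma$ can be computed from a constant-size coloring in $O(1)$ rounds; this MIS is exactly a $(\gamma,2\gamma)$-independent set up to handling the two endpoints.

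For signatures, I would set $\sigma(v)\in\LabelIn^\ell$ to be the ordered sequence of input labels of $v,\successor(v),\successor^2(v),\dots,\successor^{\ell-1}(v)$ whenever these $\ell-1$ successors exist; the at-most $O(\ell)$ nodes near the right endpoint of a path that lack a full forward window are given distinct ``tagged'' signatures derived from their distance to the endpoint, so the total number of signatures is $K:=|\LabelIn|^\ell+O(\ell)=O(1)$. The key claim is the distinctness of $\sigma$ on pairs at distance at most $\gamma$: if $u\neq v$ both have full forward windows and $v$ is the $d$-th successor of $u$ with $1\le d\le\gamma$, then $\sigma(u)=\sigma(v)$ forces $\lambda(u+i)=\lambda(u+d+i)$ for $0\le i\le\ell-1$, so the substring of $G$ of length $\ell+d$ starting at $u$ has period $d$. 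Writing $w$ for its length-$d$ prefix, this substring equals $w^x\cdot w'$ where $x=\lfloor(\ell+d)/d\rfloor=\lfloor\ell/d\rfloor+1$ and $w'$ is a proper prefix of $w$, so $G$ contains a subpath $w^x$ with $|w|=d\le\gamma$ and $|w^x|=dx\ge\ell$, contradicting the hypothesis. Distinctness against tagged signatures, and among tagged signatures, is immediate. Hence $\sigma$ is a proper $K$-coloring of $G^\gamma$.

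To produce the IS, first dispose of the trivial small cases: in $O(\gamma)$ rounds every node can see whether $|V(G)|\le 2\gamma$; if $|V(G)|\le\gamma$ the exception in the definition lets us output $I=\emptyset$, and if $\gamma<|V(G)|\le 2\gamma$ the whole graph is already a single component of size in $[\gamma,2\gamma]$, so $I=\emptyset$ is again valid. Otherwise I would run the standard color-by-color greedy MIS on $G^\gamma$: for $c=1,\dots,K$, every node whose signature is the $c$-th smallest and whose $\gamma$-neighborhood currently contains no IS member joins $I$. Each phase needs radius-$\gamma$ communication, so the whole procedure runs in $O(K\gamma)=O(1)$ rounds. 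For a path I additionally forbid the first $\gamma$ and last $\gamma$ nodes from entering $I$. The IS property of $G^\gamma$ gives any two IS members at $G$-distance at least $\gamma+1$, while maximality gives every non-forbidden non-IS node within $G$-distance $\gamma$ of some IS member; combining these yields consecutive IS members at distance in $[\gamma+1,2\gamma+1]$, and the forbidden zones force the first and last IS members of a path to lie at positions in $[\gamma+1,2\gamma+1]$ and $[n-2\gamma,n-\gamma]$ respectively, so every component of $V\setminus I$ has size in $[\gamma,2\gamma]$. For cycles the forbidden-zone step is skipped and the same IS/maximality argument applies around the cycle.

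The main obstacle is the distinctness claim: one has to translate a coincidence of $\ell$-windows into a concrete forbidden subpath $w^x$, and this is precisely where the hypothesis $\ell\ge\gamma$ is used (to guarantee $|w^x|=d\lfloor(\ell+d)/d\rfloor\ge\ell$ for every $d\le\gamma$). Everything else—boundary bookkeeping on paths, and running a coloring-based MIS on the constant-degree graph $G^\gamma$—is routine once the coloring is in place.
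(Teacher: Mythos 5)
Your proposal is correct and follows essentially the same route as the paper: color each node by the length-$\ell$ window of input labels starting at it, use the absence of short-period substrings of length $\ge \ell$ to show this is a proper coloring of the $\gamma$-th power, and then run the standard MIS-from-coloring procedure. You merely spell out the periodicity arithmetic and the endpoint bookkeeping that the paper leaves implicit.
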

\begin{proof}
	For the case $G$ is a directed path $P=(s, \ldots, t)$, define $V'$ as the set of nodes in $G$ whose distance to $t$ is at least $\ell-1$.
	For the case $G$ is a directed cycle, define $V' = V(G)$.
	In what follows, we focus on finding an $(\gamma, 2\gamma)$-independent set $I'$ of the nodes in $V'$.
	Extending the set $I'$ to produce the desired independent set $I$ can be done with extra $O(1)$ rounds.
	
	Recall that $G$ is directed.
	Define the color of a node $v \in V'$ by the sequence of the $\ell$ input labels of $v$ and the $\ell -1$ nodes following $v$ in $G$. For each node $v \in V'$, there is no other node within distance $\gamma$ to $v$ having the same color as $v$, since otherwise we can find a subpath whose input labels form a string $w^{x}$, with $|w| \leq \gamma$ and $|w^{x}| \geq \ell$.
	By applying the standard procedure that computes an MIS from a coloring,
	within $O(1)$ rounds  a $(\gamma, 2\gamma)$-independent set $I'$ can be obtained.
\end{proof}

Using Lemma~\ref{lem:MIS}, we  first show that  an $(\Lwidth, \Lcount , \Lpattern)$-partition can be found in $O(1)$ rounds for the case $G$ is {\em directed}. That is, all nodes in $G$ are initially oriented to the same direction, and we are allowed to re-orient the nodes.

\begin{lemma}\label{lem:MISorient-base}
	Let $G$ be a directed cycle or a directed path where each node $v \in V(G)$ has an input label from $\LabelIn$, and $|V(G)| > 2 \Lwidth$.
	Let $\Lwidth, \Lcount , \Lpattern$ be three constants such that $\Lpattern \geq \Lwidth$.
	There is a deterministic $\LOCAL$ algorithm that computes an $(\Lwidth, \Lcount , \Lpattern)$-partition in $O(1)$ rounds
\end{lemma}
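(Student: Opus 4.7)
The plan is to classify each node as \emph{periodic} or \emph{aperiodic} by a local test of radius $O(1)$: periodic nodes will form the long paths in $\Plong$, while aperiodic nodes will be chopped into short paths by Lemma~\ref{lem:MIS}. Fix a sufficiently large constant $L$ (polynomial in $\Lwidth$, $\Lcount$, $\Lpattern$) so that every node can gather its $L$-neighborhood in $O(1)$ rounds. Each node $v$ declares itself \emph{periodic with primitive period $w$} if there exists a primitive string $w \in \LabelIn^{\ast}$ with $|w| \leq \Lpattern$ such that the input labels on the $2L+1$ nodes at distance at most $L$ from $v$ form a substring of some power $w^{k}$. By the Fine--Wilf theorem, as soon as $L \geq 2\Lpattern$, such a $w$ is uniquely determined by $v$ when it exists, and any two periodic nodes at distance less than $L$ must share the same $w$. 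Hence the periodic nodes form a disjoint union of maximal intervals $R_{1}, R_{2}, \ldots$, each carrying a well-defined primitive period $w_{j}$.

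\textbf{Producing $\Plong$ and $\Pshort$.} Choose $L \geq \Lcount \Lpattern + 2\Lwidth$. Then any non-empty $R_{j}$ lies inside a $w_{j}$-periodic factor of $G$ that extends $L$ extra nodes beyond each of its two ends, so $R_{j}$ contains at least $\Lcount$ copies of $w_{j}$ and has more than $2\Lwidth$ nodes. These intervals become the elements of $\Plong$, and every node of $R_{j}$ already knows $w_{j}$ from the local computation. The complementary \emph{aperiodic} intervals $N_{1}, N_{2}, \ldots$ (or the whole of $G$ if no $R_{j}$ exists) contain no subpath $u^{x}$ with $|u| \leq \Lpattern$ and $|u^{x}| \geq 2L+1$, since the center of such a subpath would itself have been classified as periodic. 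Applying Lemma~\ref{lem:MIS} on each $N_{j}$ with parameters $\gamma = \Lwidth$ and $\ell = 2L+1$ returns, in $O(1)$ rounds, an $(\Lwidth, 2\Lwidth)$-independent set; attaching each independent-set node to one of its two adjacent components partitions $N_{j}$ into paths of length in $[\Lwidth, 2\Lwidth+1]$, which become the elements of $\Pshort$. Ranks inside each short path are recovered from an $O(1)$-radius neighborhood.

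\textbf{Main obstacle.} The most delicate point is ensuring that every produced path has at least $\Lwidth$ nodes. An aperiodic interval $N_{j}$ of length less than $\Lwidth$ cannot be broken by Lemma~\ref{lem:MIS} into a valid component at all; the remedy is to absorb such a short $N_{j}$ into the nearest periodic neighbor $R_{j \pm 1}$, peeling off $O(\Lwidth / |w_{j \pm 1}|)$ trailing copies of the period $w_{j \pm 1}$ and fusing them with $N_{j}$ into a single short path of length in $[\Lwidth, 2\Lwidth]$. Because $L$ was chosen large enough, the shortened $R_{j \pm 1}$ still contains at least $\Lcount$ copies of $w_{j \pm 1}$, preserving the long-path invariant. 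The corner case where $G$ is globally periodic (all nodes lie in a single $R_{1}$ spanning the whole cycle) is handled by the special clause $\Pset = \{G\}$. All of these local modifications run in $O(1)$ rounds, so the whole algorithm terminates in $O(1)$ rounds as required.
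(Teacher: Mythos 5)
Your approach (a per-node radius-$L$ periodicity test, Fine--Wilf for consistency, then Lemma~\ref{lem:MIS} on the aperiodic remainder) is genuinely different from the paper's, which iterates over the primitive strings $w_1,\dots,w_k$ of length at most $\Lpattern$, extracts maximal blocks of the form $w_i^x$ with $x \ge \Lcount + 2\Lwidth$, and trims a fixed number of whole periods from each end before placing the result in $\Plong$. Unfortunately, your version has a genuine gap in the step that produces $\Plong$.

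The problematic claim is: ``any non-empty $R_j$ lies inside a $w_j$-periodic factor of $G$ that extends $L$ extra nodes beyond each of its two ends, so $R_j$ contains at least $\Lcount$ copies of $w_j$ and has more than $2\Lwidth$ nodes.'' The premise is true but the conclusion does not follow. A node $v$ is classified periodic only when its \emph{entire} radius-$L$ ball is $w$-periodic, so if $F$ is a maximal $w$-periodic factor of $G$, the set of nodes classified periodic on account of $F$ is $F$ with $L$ nodes shaved off each end, i.e.\ an interval of $|F|-2L$ nodes. When $|F|=2L+1$ this is a \emph{single node}: the surrounding periodic factor is long, but $R_j$ itself is not. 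Such an $R_j$ cannot be placed in $\Plong$ (it is not of the form $w^k$ with $k\ge\Lcount$, and it even violates the minimum-length condition $|V(P)|\ge\Lwidth$), and it also cannot serve as the reservoir from which your ``Main obstacle'' paragraph peels off periods to absorb a short aperiodic neighbor $N_{j}$ --- that repair step silently relies on the same false premise. The fix is to add a second threshold: only promote $R_j$ to $\Plong$ when $|R_j|$ exceeds some constant (large enough to survive trimming to a multiple of $|w_j|$, retain $\Lcount$ copies, and donate nodes to short neighbors), and reclassify shorter $R_j$'s as aperiodic; the merged aperiodic intervals then contain no periodic substring of length $\ge \ell'$ for a larger constant $\ell'$, so Lemma~\ref{lem:MIS} still applies. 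This is, in effect, what the paper's trimming of $\Lwidth|w_i|-1$ nodes from blocks with at least $\Lcount+2\Lwidth$ repetitions accomplishes. Two smaller points you should also address: (i) Fine--Wilf determines the primitive period only up to cyclic rotation, so ``the same $w$'' needs a canonical choice or an alignment argument; and (ii) the label string of $R_j$ must literally equal $w_j^{k}$, so $|R_j|$ must be trimmed to a multiple of $|w_j|$ at a period boundary, with the $O(\Lpattern)$ leftover nodes handed to an adjacent short path.
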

\begin{proof}
	Let $(w_1, w_2, \ldots, w_k)$ be {\em any} ordering of the primitive strings in $\LabelIn^\ast$ of length at most $\Lpattern$.
	First, construct a set of subgraphs $\Plong$ as follows.
	Initialize $U = V(G)$ and $\Plong = \emptyset$. For $i=1$ to $k$, execute the following procedure.
	Let $S_i$ be the set of maximal-size connected subgraphs formed by nodes in $U$ such that the input labels form the string
	$w_i^x$ with $x \geq \Lcount + 2 \Lwidth$.
	Each node $v \in U$ in $O(1)$ rounds checks if $v$ belongs to a subgraph in $S_i$; if so, remove $v$ from $U$.
	For each $P \in S_i$, define $P'$ as follows.
	If $P$ is a cycle, then $P' = P$.
	If $P$ is a path, then $P'$ is the result of removing all nodes that are within distance $\Lwidth |w_i| - 1$ to an endpoint in $P$.
	Note that each node $v$ in $P$ knows whether $v$ belongs to $P'$.
	Define $S_i' = \{P' | P \in S_i\}$, and then update $\Plong \leftarrow \Plong \cup S_i'$.
	
	It is straightforward to verify that each path or cycle $P \in \Plong$ satisfies the requirement in the definition of $(\Lwidth, \Lcount , \Lpattern)$-partition.
	Define the set of subgraphs $\Pset_{\text{irreg}}$ as the connected components of the nodes not in any subgraph in $\Plong$.
	Define $\ell = (\Lpattern +  2 \Lwidth) \cdot \Lcount$.
	By our construction, the input labeling in each subgraph $P \in \Pset_{\text{irreg}}$ does not contain any substring $w^{x}$, with $1 \leq |w| \leq \Lpattern$ and $|w^{x}| \geq \ell$. An $(\Lpattern, 2\Lpattern)$-independent set of each $P \in \Pset_{\text{irreg}}$ can be computed using Lemma~\ref{lem:MIS} in $O(1)$ rounds. Observe that each subgraph $P \in \Pset_{\text{irreg}}$ has at least $\Lwidth$ nodes. Given an $(\Lpattern, 2\Lpattern)$-independent set of a subgraph $P \in \Pset_{\text{irreg}}$, in $O(1)$ rounds $P$ can be partitioned into subpaths, each of which contains at least $\Lpattern$ nodes and at most $2 \Lpattern$ nodes. This finishes the construction of an $(\Lwidth, \Lcount, \Lpattern)$-partition.
\end{proof}

Combining Lemma~\ref{lem:MISorient-base} and Lemma~\ref{lem:ori}, we are able to construct an $(\Lwidth, \Lcount, \Lpattern)$-partition  in $O(1)$ rounds for undirected graphs.

\begin{lemma}\label{lem:MISorient}
	Let $G$ be a  cycle or a path where each node $v \in V(G)$ has an input label from $\LabelIn$, and $|V(G)| > 2 \Lwidth$.
	Let $\Lwidth, \Lcount, \Lpattern$ be three constants such that $\Lpattern \geq \Lwidth$.
	There is a deterministic $\LOCAL$ algorithm that computes an $(\Lwidth, \Lcount, \Lpattern)$-partition in $O(1)$ rounds
\end{lemma}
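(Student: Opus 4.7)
The plan is to lift the directed-case result of Lemma~\ref{lem:MISorient-base} to the undirected setting by first imposing a global orientation on long enough subgraphs via Lemma~\ref{lem:ori}, then running the directed partition algorithm independently on each maximal consistently-oriented piece.

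Concretely, I would pick a constant $\ell$ strictly larger than $2\Lwidth$, say $\ell = 2\Lwidth + 1$, and apply Lemma~\ref{lem:ori} to obtain an $\ell$-orientation of $G$ in $O(1)$ rounds. Each node can then compare its chosen direction with those of its neighbors and in $O(1)$ further rounds identify the endpoints of the maximal consistently-oriented subpath (or, if $G$ is a small cycle with $|V(G)| \leq \ell$, the entire directed cycle) containing it. By the guarantees of $\ell$-orientation, every such maximal piece $Q_i$ has at least $\ell > 2\Lwidth$ nodes; in the degenerate cycle case the single piece $Q = G$ already has more than $2\Lwidth$ nodes by hypothesis.

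Next, on every $Q_i$, which is now a directed path or directed cycle with more than $2\Lwidth$ nodes, I would invoke Lemma~\ref{lem:MISorient-base} to compute an $(\Lwidth, \Lcount, \Lpattern)$-partition $\Pset_i$ of $Q_i$ in $O(1)$ rounds, and take $\Pset := \bigcup_i \Pset_i$. To verify that $\Pset$ is a valid $(\Lwidth, \Lcount, \Lpattern)$-partition of $G$, observe that every $P \in \Pset$ is contained in a single $Q_i$, so the direction and minimum-length requirement is inherited from $\Pset_i$; similarly the rank information for paths in $\Pshort$ and the primitive-pattern structure for paths in $\Plong$ carry over unchanged.

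The main obstacle I anticipate is essentially bookkeeping: Lemma~\ref{lem:ori} is stated for cycles only, so a short argument is needed to handle the case where $G$ is an undirected path. This can be done either by orienting the path inward from each endpoint (since the endpoints are locally distinguishable by their degree) and running the cycle algorithm on the interior, or by directly noting that the same proof technique of~\cite{chang17hierarchy} works for paths. A more subtle point worth checking is that a long path $P \in \Plong$ in the final partition cannot span a boundary between two different $Q_i$'s; this is automatic, however, since such a $P$ must be consistently oriented and therefore lies within a single $Q_i$ by maximality.
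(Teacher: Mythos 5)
Your proposal is correct and follows essentially the same route as the paper: compute an $\ell$-orientation with $\ell = 2\Lwidth + 1$ via Lemma~\ref{lem:ori}, then apply Lemma~\ref{lem:MISorient-base} independently to each maximal consistently-oriented subgraph. The extra care you take with the undirected-path case and with checking that no partition class straddles two oriented pieces is sound, though the paper glosses over these points.
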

\begin{proof}
	The algorithm is as follows.
	Compute an $\ell$-orientation of $G$ by Lemma~\ref{lem:ori} in $O(1)$ rounds with $\ell = 2 \Lwidth + 1$.
	For each maximal-length connected subgraph $P$ where each constituent node is oriented to the same direction,
	find an $(\Lwidth, \Lcount , \Lpattern)$-partition of $P$ in $O(1)$ rounds by Lemma~\ref{lem:MISorient-base}.
\end{proof}

\subsection{Feasible Function\label{sec.func}}
The goal of this section is to define a {\em feasible function} whose existence characterizes the $O(1)$-round solvable \lcl\ problems.
With respect to an \lcl\ problem $\mathcal{P}$ and a function $f$ which takes a string $w \in \LabelIn^{k}$ with $1 \leq k \leq \Lpump$ as input, and returns a string $f(w) \in \LabelOut^{k}$, we define some partially  or completely labeled path graphs which are used in the definition of a feasible function.
\begin{description}
	\item[\boldmath Completely Labeled Graph $\GG_{w,z}$:]
	Let $w \in \LabelIn^{\ast}$ be any string of length at least 1 and at most $\Lpump$.
	Let $z$ be any non-negative integer.
	Define $\GG_{w,z}=(G_{w,z}, \LL)$ as follows.
	The graph $G_{w,z}$ is a path of the form $w^{r} \circ w^z \circ w^{r}$.
	The labeling $\LL$ is a complete labeling of the form ${f(w)}^{z+2r}$.
	Define $\midd(G_{w,z})$ as the middle subpath $w^z$ of $G_{w,z}$.
	\item[\boldmath Partially Labeled Graph $\GG_{w_1,w_2,S}$:]
	Let $w_1, w_2 \in \LabelIn^{\ast}$ be any two strings of length at least 1 and at most $\Lpump$.
	Let $S \in \LabelIn^{\ast}$ be any string (can be empty).
	Define $\GG_{w_1,w_2,S}=(G_{w_1,w_2,S}, \LL)$ as follows.
	The graph $G_{w_1,w_2,S}$ is the path of the form $w_1^{\Lpump+2r} \circ S \circ w_2^{\Lpump+2r}$.
	The labeling $\LL$ is a partial labeling of $G_{w_1,w_2,S}$ which fixes the output labels of the first $2r |w_1|$ and the last $2r |w_2|$ nodes by $f(w_1)^{2r}$ and $f(w_2)^{2r}$, respectively.
	Define $\midd(G_{w_1,w_2,S})$ as the middle subpath $w_1^{\Lpump+r} \circ S \circ w_2^{\Lpump+r}$ of $G_{w_1,w_2,S}$.
	\item[Feasible Function:] We call $f$ a {\em feasible function} if the following conditions are met: (i) For each $\GG_{w,z}=(G_{w,z}, \LL)$, the complete labeling $\LL$  is locally consistent at all nodes in $\midd(G_{w,z})$. (ii) Each partially labeled graph $\GG_{w_1,w_2,S}$ admits a complete labeling $\LL_\diamond$ that is locally consistent at all nodes in  $\midd(\GG_{w_1,w_2,S})$.
\end{description}

\begin{lemma}\label{lem:func-decide-cycle}
	Given an \lcl\ problem $\mathcal{P}$ on cycle graphs. It is decidable whether there is a feasible function.
\end{lemma}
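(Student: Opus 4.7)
The plan is to reduce the existence question for a feasible $f$ to a finite number of decidable subchecks by brute-force enumeration over candidate functions. The set of candidates is finite: the domain of $f$ is the finite set $\bigcup_{k=1}^{\Lpump} \LabelIn^{k}$, and for each such $w$, $f(w)$ is chosen from the finite set $\LabelOut^{|w|}$. So I would iterate over all such $f$, test each for conditions (i) and (ii) in the definition of feasibility, and output YES if and only if some $f$ passes.

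Condition (i) is straightforward. For each of the finitely many $w$ and any $z \ge 0$, the labeling $f(w)^{z+2r}$ on $G_{w,z}$ is periodic with period $|w|$ in both input and output. Local consistency at a node depends only on its radius-$r$ input/output neighborhood, and by periodicity there are at most $|w|$ distinct such neighborhoods in $\midd(G_{w,z})$, independent of $z$. So it suffices to verify local consistency at every node of a single representative path such as $w^{|w|+4r}$ labeled $f(w)^{|w|+4r}$; a uniformly bounded finite test.

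The main obstacle is condition (ii), which universally quantifies over all strings $S \in \LabelIn^{\ast}$. My plan is to leverage the pumping machinery from Section~\ref{sec.pump} to bound the relevant $|S|$. For each pair $(w_1, w_2)$ (finitely many), I claim that (ii) holds for all $S$ if and only if it holds for $S = \varepsilon$ and for all $S$ with $1 \le |S| \le \Lpump$. To see the reduction, suppose $|S| > \Lpump$. By Lemma~\ref{thm:pump1}, $S$ decomposes as $S = x \circ y \circ z$ with $|xy| \le \Lpump$, $|y| \ge 1$, and $\type(x \circ z) = \type(S)$; moreover $|xz| \ge |S| - \Lpump \ge 1$, so $xz$ is a non-empty path. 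Apply Lemma~\ref{lem:replace-path} inside $G_{w_1, w_2, S}$ with $P = S$ and $P' = x \circ z$: any completion of the partial labeling that is locally consistent on $\midd(G_{w_1, w_2, S})$ (which contains $S$) transfers to a completion of $G_{w_1, w_2, xz}$ that preserves all outputs outside $S$---in particular the fixed boundary labels $f(w_1)^{2r}$ and $f(w_2)^{2r}$, which sit inside $w_1^{\Lpump+2r}$ and $w_2^{\Lpump+2r}$ and are disjoint from $S$---and that is locally consistent on $\midd(G_{w_1, w_2, xz})$. By symmetry of $\simm$ (swapping the roles of $S$ and $xz$), the converse implication also holds, so extendibility for $S$ is equivalent to extendibility for $xz$. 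Iterating this reduction yields some $S'$ with $1 \le |S'| \le \Lpump$ and equivalent extendibility; together with a direct check for $S = \varepsilon$, condition (ii) reduces to a finite enumeration.

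For each base-case tuple $(f, w_1, w_2, S)$ with $|w_1|, |w_2| \le \Lpump$ and $|S| \le \Lpump$, the partially labeled graph $\GG_{w_1, w_2, S}$ has constant size, so deciding extendibility is a finite brute-force enumeration of output labels on the unlabeled nodes followed by a local-consistency check on $\midd$. Combining all the ingredients yields a terminating decision procedure, establishing decidability of feasibility.
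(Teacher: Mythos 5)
Your proposal is correct and follows essentially the same route as the paper: both handle condition~(i) by exploiting periodicity to check a single bounded representative $\GG_{w,z}$, and both handle condition~(ii) by combining Lemma~\ref{thm:pump1} with Lemma~\ref{lem:replace-path} to reduce the universally quantified $S$ to strings of length at most $\Lpump$ (the paper enumerates one representative per type, while you iteratively pump an arbitrary $S$ down to such a representative---the same idea packaged differently). The final brute-force over the finitely many candidate functions $f$ and the constant-size extendibility checks match the paper's argument.
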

\begin{proof}
	Note that it is not immediate from its definition as to whether a feasible function exists is decidable, since there appears to be {\em infinitely} many graphs $\GG_{w,z}$ and $\GG_{w_1,w_2,S}$ needed to be examined.
	However, the following simple observations show that it suffices to check only a constant number of these graphs.
	\begin{itemize}
		\item If the complete labeling $\LL$ of $\GG_{w,1}=(G_{w,1}, \LL)$ is locally consistent at all nodes in $\midd(G_{w,1})$, then for all $z \geq 1$, the complete labeling $\LL$ of $\GG_{w,z}=(G_{w,z}, \LL)$ is also locally consistent at all nodes in $\midd(G_{w,z})$.
		\item If $\GG_{w_1,w_2,S}$ admits a complete labeling $\LL_\diamond$ that is locally consistent at all nodes in  $\midd(\GG_{w_1,w_2,S})$, then  for each $S'$ such that $\type(S) = \type(S')$, the partially labeled graph $\GG_{w_1,w_2,S'}$ also admits a complete labeling $\LL_\diamond$ that is locally consistent at all nodes in  $\midd(\GG_{w_1,w_2,S'})$. This is due to Lemma~\ref{lem:replace-path}.
	\end{itemize}
	
	Therefore, to decide whether a function $f$ is feasible, we only need to check all possible $\GG_{w,z}$ and $\GG_{w_1,w_2,S}$. For each $w$ we only need to consider the graph $\GG_{w,z}$ with $z = 1$.
	For each $w_1$ and $w_2$, we do not need to go over all $S$; we only need to consider (i) the empty string $S = \emptyset$, and (ii) for each type $\tau$, a string $S \in \LabelIn^\ast$ such that $\type(S)=\tau$.
	By Lemma~\ref{thm:pump1}, for each type $\tau$, there exists $P \in \LabelIn^x$ with $x \leq \Lpump$ such that $\type(P)=\tau$.
	Therefore, a string $S$ with $\type(S)=\tau$ can be found in bounded amount of time; also note that the number of types is bounded; see Lemma~\ref{lem-type-number}.
\end{proof}

For the rest of this section, we show that as long as the deterministic $\LOCAL$ complexity of $\mathcal{P}$ is $o(\log^\ast n)$ on cycle graphs, there exists a feasible function $f$. In Lemma~\ref{lem:func-aux} we show how to extract a  function $f$ from a given $o(\log^\ast n)$-round deterministic $\LOCAL$ algorithm $\mathcal{A}$, and then in Lemma~\ref{lem:func} we prove that such a function $f$ is feasible.
Intuitively, Lemma~\ref{lem:func-aux} shows that there exists an ID-assignment  such that when we run  $\mathcal{A}$ on a subpath whose input labeling is a repetition of a length-$k$ pattern $w$, the output labeling is also a repetition of a length-$k$ pattern $w'$. The function $f$ will be defined as  $f(w) = w'$.

\begin{lemma}\label{lem:func-aux}
	Let $\mathcal{A}$ be any  deterministic $\LOCAL$ algorithm that solves $\mathcal{P}$ in $t(n) = o(\log^\ast n)$ rounds.
	Then there is a number $n'$ and function $f$ which takes a string $w \in \LabelIn^{k}$ with $1 \leq k \leq \Lpump$ as input, and returns a string $f(w) \in \LabelOut^{k}$ meeting the following condition. For any  $P = w^{i} \circ w^{2r+1} \circ w^i$ such that $|w^i| \geq t(n')$ and $1 \leq |w| \leq \Lpump$, there is an assignment of distinct $\Theta(\log n')$-bit IDs to the nodes in $P$ such that the following is true. Simulating $\mathcal{A}$ on $P$ while assuming that the total number of nodes in the underlying graph is $n'$ yields the output labeling $f(w)^{2r+1}$ for the middle subpath $w^{2r+1}$.
\end{lemma}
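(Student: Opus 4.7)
The plan is to apply the Naor--Stockmeyer Ramsey argument, adapted to the periodic inputs that arise from repeated patterns. Set $t = t(n')$. Color each $(2t+1)$-element subset $\{i_1 < \cdots < i_{2t+1}\}$ of the ID range $[1, n'^{c}]$ (for a sufficiently large constant $c$) by the function $\LabelIn^{2t+1} \to \LabelOut$ that sends $(\ell_1, \ldots, \ell_{2t+1})$ to the output that $\mathcal{A}$ would produce at the center of a length-$(2t+1)$ window whose positions $-t, \ldots, t$ carry IDs $i_1, \ldots, i_{2t+1}$ and input labels $\ell_1, \ldots, \ell_{2t+1}$, under the promise that the underlying graph has $n'$ nodes. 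The number of colors is $|\LabelOut|^{|\LabelIn|^{2t+1}}$, which depends only on $t$.

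By the Erd\H{o}s--Rado theorem for $(2t+1)$-uniform hypergraphs, a coloring with a constant number of colors admits a monochromatic subset of size $K$ provided the ground set has size at least a tower of height $\Theta(t)$ in $K$. I set $K := (2r+1)\Lpump + 2t$. Because $t(n') = o(\log^\ast n')$, for all sufficiently large $n'$ this tower still fits inside $n'^{c}$, so I fix such an $n'$ and obtain a monochromatic set $S$ of $\Theta(\log n')$-bit IDs with $|S| \geq K$.

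To define $f$, let $w \in \LabelIn^{k}$ with $1 \leq k \leq \Lpump$, and let $P = w^{i} \circ w^{2r+1} \circ w^{i}$ with $|w^{i}| \geq t$. Let $W$ be the window consisting of the central $w^{2r+1}$ together with $t$ extra nodes on each side, so $|W| \leq K$. Assign the $|W|$ smallest elements of $S$ to the nodes of $W$ in increasing order along $P$, and give arbitrary distinct $\Theta(\log n')$-bit IDs to the remaining nodes of $P$, keeping all IDs distinct. Then the radius-$t$ view of every node $v$ in the central $w^{2r+1}$ consists of $2t+1$ IDs forming a sorted subset of $S$ together with input labels that depend only on the position of $v$ modulo $|w|$. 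Monochromaticity of $S$ forces $\mathcal{A}$'s output at $v$ to depend only on those input labels, so the output on the central $w^{2r+1}$ is periodic of period $|w|$; set $f(w) \in \LabelOut^{|w|}$ to be one such period. The same monochromaticity shows that $f(w)$ is independent of $P$ and of the particular choice of IDs from $S$.

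The main obstacle is the quantitative Ramsey balance: the Ramsey arity $2t+1$, the color count $|\LabelOut|^{|\LabelIn|^{2t+1}}$, and the required monochromatic size $K$ all grow with $t(n')$, yet the resulting monochromatic set must live in the $\Theta(\log n')$-bit ID range. This is exactly where the hypothesis $t(n') = o(\log^\ast n')$ enters: it guarantees that a tower of height $\Theta(t(n'))$ is still polynomial in $n'$ once $n'$ is chosen large enough. Once $n'$ and $S$ are in place, periodicity and well-definedness of $f$ follow from monochromaticity, since within a single residue class modulo $|w|$ both the input labels and the ID order statistics in any length-$(2t+1)$ view are invariant.
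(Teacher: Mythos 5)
Your proof is correct, but it takes a genuinely different route from the paper's. The paper argues by contradiction: if no pair $(n',f)$ with the stated property exists, then one can take an unlabeled $n$-node directed cycle, replace each node by a copy of a suitable pattern $w$, run $\mathcal{A}$ on the resulting graph while pretending it has $n'=nk$ nodes, and read off a proper $O(1)$-coloring from the output labels of consecutive blocks $w^{2r}$ --- properness being exactly the failure of periodicity that the negation of the lemma supplies. This yields an $o(\log^\ast n)$-round MIS algorithm on unlabeled cycles, contradicting Linial's lower bound. You instead prove the statement directly via the Naor--Stockmeyer order-invariance technique: an Erd\H{o}s--Rado argument over the ID space produces a monochromatic set $S$, and assigning IDs from $S$ in increasing order along the padded window forces the output of $\mathcal{A}$ at each node of the central $w^{2r+1}$ to depend only on the input labels in its view, which are periodic, so the output is periodic as well; your padding of the window by $t$ nodes on each side is exactly what is needed for every central node's view to stay inside the monochromatically-ID'd region. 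Your quantitative bookkeeping is also the right one: the Ramsey number is a tower of height $O(t(n'))$, and $t(n') = o(\log^\ast n')$ is precisely what lets that tower fit into a polynomial-size ID range once $n'$ is large enough. The trade-off is that the paper's argument is shorter and outsources all quantitative work to the known $\Omega(\log^\ast n)$ MIS lower bound, whereas yours is self-contained on that front, is constructive in that it exhibits the ID assignment explicitly, and delivers somewhat more than the lemma asks for, namely order-invariance of $\mathcal{A}$ on all of $S$ rather than just the existence of one good ID assignment per pattern $w$.
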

\begin{proof}
	In this proof we assume that there is no such a number $n'$.
	Then we claim that using $\mathcal{A}$ it is possible to obtain a deterministic $\LOCAL$ algorithm for MIS on an $n$-node directed cycle $G$ {\em without input labeling}, in $O(t(n)) + O(1) = o(\log^\ast n)$ rounds.
	This contradicts the well-known $\Omega(\log^\ast n)$ lower bound for MIS~\cite{Linial1992}.
	
	Let $G$ be an $n$-node directed cycle without input labeling.
	The MIS algorithm on $G$ is described as follows.
	Let $w \in \LabelIn^{k}$ with $1 \leq k \leq \Lpump$  be chosen such that for any function $f$, the string $f(w) \in \LabelOut^{k}$ does not satisfy the conditions stated in the lemma for the number $n' = nk$.
	Define $G'$ as the graph resulting from replacing each node $v \in V(G)$ with a path $w$.
	We can simulate the imaginary graph $G'$ in the communication network $G$ by letting each node $v \in V(G)$ simulate a path $w$.
	
	We execute the algorithm $\mathcal{A}$ on $G'$ while assuming that the total number of nodes is $n'$.
	The execution takes $t(n') = O(t(n))$ rounds.
	For each node $v \in V(G)$, define the color of $v$ as the  sequence of the output labels of the path $w^{2r}$ simulated by the node  $v$ and the $2r-1$ nodes following $v$ in the directed cycle $G$. This gives us a proper $O(1)$-coloring, since otherwise there must exist a subpath $P = w^{2r+1}$ of $G'$ such that the output labeling of $P$ is of the form $y^{2r+1}$ for some $y$, contradicting our choice of $w$. Using the standard procedure of computing an MIS from a coloring, with extra $O(1)$ rounds, an MIS of $G$ can be obtained.
	
	Note that there is a subtle issue about how we set the IDs of nodes in $V(G')$. The following method is guaranteed to output distinct IDs. Let $v \in V(G)$, and let $u_1, \ldots, u_k$ be the nodes in $V(G')$ simulated by $v$. Then we may use $\ID(u_i) = k \cdot \ID(v) + i$.
\end{proof}

\begin{lemma}\label{lem:func}
	Suppose that the deterministic $\LOCAL$ complexity of $\mathcal{P}$ is $o(\log^\ast n)$ on cycle graphs. Then there exists a feasible function $f$.
\end{lemma}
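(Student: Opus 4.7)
The plan is to take the function $f$ produced by Lemma~\ref{lem:func-aux} applied to the hypothetical $o(\log^\ast n)$-round algorithm $\mathcal{A}$, and verify both bullets of the definition of feasibility directly. Because the proof of Lemma~\ref{lem:func-aux} rests on the MIS $\Omega(\log^\ast n)$ lower bound, which is asymptotic, its conclusion in fact holds for every sufficiently large $n'$, so I am free to choose $n'$ large enough that the inner $w_1$- and $w_2$-buffers available inside $\GG_{w_1,w_2,S}$ will exceed $t(n')$ (possible since $t(n') = o(\log^\ast n')$ while $\Lpump$ is a fixed constant determined by $\mathcal{P}$). The common strategy for both bullets is to embed the partially specified graph of interest into a genuine $n'$-node cycle $G'$, run $\mathcal{A}$ on $G'$ using ID assignments prescribed by Lemma~\ref{lem:func-aux} on selected $w^{2r+1}$-windows so that $\mathcal{A}$'s output is forced to be $f(\cdot)^{2r+1}$ there, and finally read off a labeling from $\mathcal{A}$'s globally correct output, relying on correctness of $\mathcal{A}$ to supply local consistency on the middle.

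For condition~(i), I would first reduce to $z=1$ via the periodicity of $f(w)^{z+2r}$, exactly as in the first bullet of the proof of Lemma~\ref{lem:func-decide-cycle}: each $r$-neighborhood inside $\midd(G_{w,z})$ depends only on the offset within $w$, and every such offset already appears in $\midd(G_{w,1})$. For $\GG_{w,1}$, I take $G'$ to be an $n'$-node cycle whose input contains a contiguous stretch $w^K$ with $K |w| \geq 2 t(n') + (2r+1)|w|$, padded otherwise by any input on which $\mathcal{P}$ is solvable. Lemma~\ref{lem:func-aux} supplies IDs on a central $w^i \circ w^{2r+1} \circ w^i$ window of this stretch so that $\mathcal{A}$ outputs $f(w)^{2r+1}$ on the middle $w^{2r+1}$; correctness of $\mathcal{A}$ on $G'$ makes this output locally consistent there. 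Every node $v \in \midd(G_{w,1})$ has an $r$-neighborhood isomorphic (in inputs and outputs) to that of a suitably shifted node in this window, so local consistency transfers to $\GG_{w,1}$.

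For condition~(ii), I first reduce to $|S| \leq \Lpump$ via Lemma~\ref{lem:replace-path} and Lemma~\ref{thm:pump1}, as in the second bullet of the proof of Lemma~\ref{lem:func-decide-cycle}. I then take $G'$ to be an $n'$-node cycle containing a subpath $w_1^{M_1}\circ S\circ w_2^{M_2}$ in which $\GG_{w_1,w_2,S}$ is embedded with its $w_1$- and $w_2$-portions flush against $S$, with $w_1$- and $w_2$-padding of length at least $t(n')$ on the outer sides (supplying the outer-side $t(n')$-buffers of the two $w^{2r+1}$-windows I plan to use), and with the remainder padded so that $|V(G')| = n'$ and $\mathcal{P}$ is solvable on $G'$. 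The inner-side $t(n')$-buffers of the $w_1^{2r+1}$- and $w_2^{2r+1}$-windows lie inside the $\Lpump$ unlabeled copies of $w_1$ and $w_2$ inside $\GG_{w_1,w_2,S}$; the choice of $n'$ ensures enough room. Applying Lemma~\ref{lem:func-aux} separately for $w_1$ and $w_2$ on two disjoint windows forces the output of $\mathcal{A}$ on $G'$ to be $f(w_1)^{2r+1}$ on the $w_1$-window and $f(w_2)^{2r+1}$ on the $w_2$-window, which in particular matches the prescribed $f(w_1)^{2r}$ on the leftmost $2r$ labeled copies of $w_1$ and $f(w_2)^{2r}$ on the rightmost $2r$ labeled copies of $w_2$ in $\GG_{w_1,w_2,S}$. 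Restricting $\mathcal{A}$'s globally correct output to the embedded subpath yields the required complete labeling of $\GG_{w_1,w_2,S}$, locally consistent at every $v \in \midd(\GG_{w_1,w_2,S})$ because each such $v$ has its $r$-neighborhood entirely inside the embedded subpath.

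The main technical obstacle will be to realize two distinct Lemma~\ref{lem:func-aux} ID assignments inside the single cycle $G'$ of condition~(ii) without ID collisions. I would handle this by placing the two $w^{2r+1}$-windows far enough apart in $G'$ that their $t(n')$-neighborhoods are node-disjoint, and by exploiting the $\Theta(\log n')$-bit ID space---which is polynomially larger than the $O(t(n'))$ IDs touched by each single Lemma~\ref{lem:func-aux} assignment---to pick globally distinct IDs elsewhere. Secondary issues, namely making the total length $M_1 |w_1| + |S| + M_2 |w_2| + \text{(padding)}$ equal to exactly $n'$ and keeping the padded input of $G'$ solvable for $\mathcal{P}$, are addressed by padding with any known solvable pattern and by using Lemma~\ref{thm:pump2} to fine-tune $M_1$ and $M_2$ within type-preserving residue classes.
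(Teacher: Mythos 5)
Your treatment of condition~(i) matches the paper's (the paper treats it as immediate from Lemma~\ref{lem:func-aux}; your periodicity/embedding argument is a fine elaboration), but your plan for condition~(ii) has a genuine gap at its central step. You want the two $w_i^{2r+1}$-windows on which you invoke Lemma~\ref{lem:func-aux} to sit so that their last (resp.\ first) $2r$ copies coincide with the prescribed labeled blocks of $\GG_{w_1,w_2,S}$, and you claim the inner-side $t(n')$-buffers required by Lemma~\ref{lem:func-aux} ``lie inside the $\Lpump$ unlabeled copies of $w_1$ and $w_2$ inside $\GG_{w_1,w_2,S}$; the choice of $n'$ ensures enough room.'' This is where the argument breaks: the inner buffer has only $\Lpump\,|w_i|$ nodes available before hitting $S$, a \emph{fixed constant} determined by $\mathcal{P}$, whereas the buffer must have length at least $t(n')$. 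Since $t(n)=o(\log^\ast n)$ does \emph{not} imply $t(n)=O(1)$ (e.g.\ $t$ may still tend to infinity), there is in general no $n'$ with $t(n')\le \Lpump\,|w_i|$; choosing $n'$ \emph{larger}, as you propose, only makes $t(n')$ larger. Without that buffer, Lemma~\ref{lem:func-aux} does not apply to your windows (the $t(n')$-neighborhoods of the window nodes spill into $S$ and beyond), so nothing forces $\mathcal{A}$'s output to agree with $f(w_1)^{2r}$ and $f(w_2)^{2r}$ on the prescribed blocks. No repositioning of the windows can help, because the labeled blocks are at constant distance from $S$ by definition of $\GG_{w_1,w_2,S}$.

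The missing idea is to decouple the graph on which $\mathcal{A}$ runs from the constant-size graph you need to label, via pump-and-replace. The paper forms the cycle obtained by closing up $w_1^{\Lpump}\circ w_1^{2r+1}\circ w_1^{\Lpump}\circ S\circ w_2^{\Lpump}\circ w_2^{2r+1}\circ w_2^{\Lpump}$, then \emph{enlarges} the four $w_i^{\Lpump}$ blocks type-preservingly (Lemma~\ref{thm:pump2}) to length at least $2t(n')+r$, so that in the enlarged cycle $G'$ each window genuinely has $w_i$-buffers of length $\ge t(n')$ on \emph{both} sides; it runs $\mathcal{A}$ on $G'$ there with the Lemma~\ref{lem:func-aux} IDs, and finally transfers the resulting legal labeling back to the original constant-size cycle using Lemma~\ref{lem:replace-path}, which keeps the labels $f(w_i)^{2r+1}$ on the windows intact while relabeling only the replaced blocks. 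You invoke both Lemma~\ref{thm:pump2} and Lemma~\ref{lem:replace-path}, but only for peripheral bookkeeping (adjusting $|V(G')|$ to $n'$, shortening $S$ --- the latter is needed for decidability in Lemma~\ref{lem:func-decide-cycle}, not here); you need them for the buffers themselves. Your other concerns (combining two disjoint ID assignments, exact node count) are handled as you suggest, and in fact the paper avoids the exact-count issue entirely by running $\mathcal{A}$ while merely \emph{pretending} the number of nodes is $n'$ and relying on local indistinguishability.
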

\begin{proof}
	Let $\mathcal{A}$ be any  deterministic $\LOCAL$ algorithm that solves $\mathcal{P}$ in $t(n) = o(\log^\ast n)$ rounds.
	Let $n'$ and $f$ be chosen to meet the conditions in Lemma~\ref{lem:func-aux} for $\mathcal{A}$.
	The goal of the proof is to show that $f$ is a feasible function.
	According to the conditions specified in Lemma~\ref{lem:func-aux} for the function $f$,
	we already know that the complete labeling $\LL$ of each $\GG_{w,z}=(G_{w,z}, \LL)$ is locally consistent at all nodes in $\midd(G_{w,z})$.
	Therefore, all we need to do is the following. For each partially labeled graph $\GG_{w_1,w_2,S}$, find a complete labeling $\LL_\diamond$ that is locally consistent at all nodes in $\midd(\GG_{w_1,w_2,S})$.
	
	Given the three parameters $w_1$, $w_2$, and $S$, define $G$ as the cycle resulting from linking the two ends of the path
	$w_1^{\Lpump} \circ w_1^{2r+1} \circ w_1^{\Lpump} \circ S \circ w_2^{\Lpump} \circ w_2^{2r+1} \circ w_2^{\Lpump}$.
	Define $\LL$ as the partial labeling of $G$ which fixes the output labeling of the two subpaths $w_1^{2r+1}$ and $w_2^{2r+1}$ by
	$f(w_1)^{2r+1}$ and $f(w_2)^{2r+1}$, respectively.
	We write $P_1^{\text{mid}}$ and $P_2^{\text{mid}}$ to denote the two subpaths $w_1^{2r+1}$ and $w_2^{2r+1}$, respectively.
	
	In what follows, we show that the partially labeled graph $\GG=(G, \LL)$ admits a legal labeling $\LL_\diamond$.
	Since $\GG_{w_1,w_2,S}$ is a subgraph of $\GG = (G,\LL)$,  such a legal labeling $\LL_\diamond$ is also a complete labeling  of  $\GG_{w_1,w_2,S}$ that is locally consistent at all nodes in $\midd(\GG_{w_1,w_2,S})$.
	
	For the rest of the proof, we show the existence of $\LL_\diamond$. This will be established by applying a pumping lemma.
	Define the graph $G'$ as the result of the following operations on $G$.
	\begin{itemize}
		\item Replace the two subpaths $w_1^{\Lpump}$ by $w_1^{x}$, where the number $x$ is chosen such that  $x|w_1| \geq 2t(n') + r$, and $\type(w_1^{\Lpump}) = \type(w_1^x)$.
		\item Replace the two subpaths $w_2^{\Lpump}$ by $w_2^{y}$, where the number $y$ is chosen such that  $y|w_2| \geq 2t(n') + r$, and $\type(w_2^{\Lpump}) = \type(w_2^y)$.
	\end{itemize}
	The existence of the numbers $x$ and $y$ above is guaranteed by Lemma~\ref{thm:pump2}.
	The IDs of nodes in $G'$ are assigned as follows.
	For $i = 1,2$, select the IDs of the nodes in $\bigcup_{v \in P_i^{\text{mid}}} N^{t(n')}(v)$ in such a way that the output labeling of $P_i^{\text{mid}}$ resulting from executing $\mathcal{A}$ on $G'$ while assuming that the total number of nodes is $n'$ is ${f(w_i)}^{2r+1}$. The existence of such an ID assignment is guaranteed by Lemma~\ref{lem:func-aux}.
	For all remaining nodes in $G'$, select their IDs in such a way that all nodes in $N^{r + t(n')}(v)$ receive distinct IDs, for each $v \in V(G')$. This ensures that the outcome of executing $\mathcal{A}$ on $G'$ while assuming that the total number of nodes is $n'$ is a legal labeling.
	
	Let $\LL_\diamond'$ be the legal labeling of $G'$ resulting from executing $\mathcal{A}$ with the above IDs while pretending that the total number of nodes is $n'$. Note that $\LL_\diamond'$ must label $P_1^{\text{mid}}$ and $P_2^{\text{mid}}$ by ${f(w_1)}^{2r+1}$ and ${f(w_2)}^{2r+1}$, respectively.
	A desired legal labeling $\LL_\diamond$ of $\GG$ can be obtained from the legal labeling $\LL_\diamond'$ of $G'$ by applying Lemma~\ref{lem:replace-path}, as we have $\type(w_1^{\Lpump}) = \type(w_1^x)$ and $\type(w_2^{\Lpump}) = \type(w_2^y)$.
\end{proof}

\subsection{\boldmath The \texorpdfstring{$\omega(1)$}{omega(1)}---\texorpdfstring{$o(\log^\ast n)$}{o(log* n)} Gap \label{sec.cycle-main}}
In this section we prove that it is decidable whether a given \lcl\ problem $\mathcal{P}$ has complexity  $\Omega(\log^\ast n)$ or  $O(1)$ on cycle graphs.

\begin{lemma}\label{lem:alternate-func}
	Let $f$ be any feasible function.
	Let $G$ be any cycle graph.
	Let $\Pset$ be any set of disjoint subgraphs in $G$ such that the input labeling of each $P \in \Pset$ is of the form $w^{x}$ such that $x \geq 2\Lpump + 2r$, and $w \in \LabelIn^{k}$  is a string with  $1 \leq k \leq \Lpump$.
	For each $P \in \Pset$, define the subgraph $P'$ as follows.
	If $P$ is a cycle, define $P' = P$.
	If $P$ is a path, write $P = w^{\Lpump} \circ w^{i} \circ w^{\Lpump}$, and define $P'$ as the middle subpath $w^i$.
	Let $\LL$ be a partial labeling of $G$ defined as follows.
	For each $P = w^x \in \Pset$, fix the output labels of each subpath $w$ of $P'$ by $f(w)$.
	Then $\GG=(G, \LL)$ admits a legal labeling $\LL_\diamond$.
\end{lemma}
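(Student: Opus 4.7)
The plan is to extend $\mathcal{L}$ by filling in the remaining output labels using condition (2) of feasibility on each ``gap'' between consecutive paths in $\Pset$, and then verify local consistency everywhere by combining conditions (1) and (2) of feasibility.

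First I would dispose of the degenerate case in which some $P \in \Pset$ is a cycle: then $P = P' = G$ and $\mathcal{L}$ is a complete periodic labeling $f(w)^{|G|/|w|}$. For any node $v$, its radius-$r$ neighborhood is a length-$(2r+1)$ subpath carrying the periodic input/output pattern induced by $w$ and $f(w)$, and this neighborhood is isomorphic (under the obvious alignment) to the neighborhood of some node in $\midd(G_{w,z}) = w^z$ inside $\GG_{w,z}$ for any $z \geq 2r+1$. Condition (1) of feasibility then gives local consistency at $v$, so $\mathcal{L}_\diamond = \mathcal{L}$ works.

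Otherwise, every $P \in \Pset$ is a path, and the nodes outside $\bigcup_{P \in \Pset} P'$ split into arcs of $G$, each of the form: right buffer $w_1^{\Lpump}$ of some $P_1 \in \Pset$, followed by a (possibly empty) subpath $M$ of nodes in no $P \in \Pset$, followed by the left buffer $w_2^{\Lpump}$ of the next path $P_2 \in \Pset$. For each such gap I would consider the subpath $Q$ of $G$ obtained by prepending the last $2r$ copies of $w_1$ in $P_1'$ (already carrying labels $f(w_1)^{2r}$) and appending the first $2r$ copies of $w_2$ in $P_2'$ (already carrying labels $f(w_2)^{2r}$). Then $Q$, together with its fixed partial labeling, is exactly $\GG_{w_1,w_2,M}$ (after choosing the appropriate orientation), and condition (2) of feasibility supplies a completion that is locally consistent at all nodes of $\midd(G_{w_1,w_2,M}) = w_1^{\Lpump+r} \circ M \circ w_2^{\Lpump+r}$. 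Doing this independently at every gap produces a complete labeling $\mathcal{L}_\diamond$ of $G$.

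To check local consistency, note that each node of a buffer, each node of the middle region $M$ of any gap, and each node within the last $r$ copies of $w_1$ in $P_1'$ or the first $r$ copies of $w_2$ in $P_2'$ lies in the $\midd$ region of the corresponding $\GG_{w_1,w_2,M}$, so condition (2) handles it directly. The only remaining nodes are those in the \emph{strict} interior of some $P' = w^i$, namely nodes at graph-distance more than $r$ from both boundaries of $P'$. Their radius-$r$ neighborhoods lie entirely inside $P'$ and thus see precisely a length-$(2r+1)$ window of the periodic labeled pattern determined by $w$ and $f(w)$; this is isomorphic to the neighborhood of the corresponding node in the middle $w^i$ of $\GG_{w,i}$, and condition (1) of feasibility then yields local consistency.

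The main obstacle I anticipate is orientation bookkeeping: each $P \in \Pset$ carries its own direction and $|w_1|, |w_2|$ may differ, so forming the subpath $Q$ in the cycle requires reversing one side consistently when the path directions in $\Pset$ disagree with the cyclic traversal. A related corner case is $|\Pset| = 1$ with a path $P$ whose middle length $i$ is so small that the ``first $2r$ copies'' and ``last $2r$ copies'' of $w$ in $P'$ are not disjoint when traversed around $G$; the subpath $Q$ then wraps and is not simple, and the argument has to be adapted---either by an explicit wrap-around case analysis or by first pumping $w_1^i$ via Lemma~\ref{thm:pump2} within the auxiliary graph used for the feasibility check, leveraging that type-preserving replacements extend legal labelings (Lemma~\ref{lem:replace-path}).
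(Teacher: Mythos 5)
Your proof is correct and follows essentially the same route as the paper's: condition (1) of feasibility handles the interiors of the periodic blocks (the paper's set $V_1$), and each gap, extended by the $2r$ flanking copies of $w_1$ and $w_2$, is identified with the partially labeled graph $\GG_{w_1,w_2,S}$ (the paper's $P^{++}$), so that condition (2) supplies a completion that is locally consistent on the paper's $P^{+}$; your coverage check matches the paper's $V_1/V_2$ partition. Of the two corner cases you flag, the orientation issue is not one for this lemma (the hypothesis already reads every $P$ and fixes $f(w)$ in a single traversal direction of the cycle), but the wrap-around case ($|\Pset|=1$ with $2r\le i<4r$, so the two $2r$-blocks of $P'$ overlap) is genuine and is silently glossed over in the paper, whose ``$P^{++}$'' is then a closed walk rather than a subpath. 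Your first proposed fix works: since $|P'|\ge 2r$ no radius-$r$ neighborhood wraps all the way around $P'$, the already-fixed periodic labels on $P'$ agree with both occurrences of those nodes in $G_{w,w,S}$, and local consistency still transfers node by node. Your second fix (pumping $w^i$ and shrinking back via Lemma~\ref{lem:replace-path}) does not directly apply, because that lemma relabels the replaced subpath by an arbitrary extendible labeling and so need not return the prescribed labels $f(w)^i$ on $P'$.
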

\begin{proof}
	Define $V_1$  as the set of all nodes such that $v \in V_1$ if $v$ belongs to the middle subpath $w^{j}$ of some path $P = w^{\Lpump} \circ w^{r} \circ w^{j}\circ w^{r} \circ w^{\Lpump} \in \Pset$. By the definition of feasible function, $\LL$ is already locally consistent at all nodes in $V_1$.
	Thus, all we need to do is to construct a complete labeling $\LL_\diamond$ of $\GG=(G, \LL)$, and argue that $\LL_\diamond$ is locally consistent at all nodes in $V_2 = V(G) \setminus V_1$.
	
	There are two easy special cases. If $\Pset = \emptyset$, then no output label of any node in $G$ is fixed, and so $\GG$ trivially admits a legal labeling. If $\Pset$ contains a cycle, then $\Pset = \{G\}$, and hence $\LL$ is already a legal labeling as $V_1 = V(G)$.
	
	In subsequent discussion, we restrict ourselves to the case that $\Pset$ is non-empty and contains only paths.
	The output labeling $\LL_\diamond$ is constructed as follows.
	Define $\Pset_{\text{unlabeled}}$ as the maximal-length subpaths of  $G$ that are not assigned any output labels by $\LL$.
	A path $P \in \Pset_{\text{unlabeled}}$  must be of the form $w_1^{\Lpump} \circ S \circ w_2^{\Lpump}$, where $w_1, w_2 \in \LabelIn^{\ast}$ are two strings of length at least 1 and at most $\Lpump$, and $S \in \LabelIn^{\ast}$ can be any string (including the empty string). Given $P \in \Pset_{\text{unlabeled}}$, we make the following definitions.
	\begin{itemize}
		\item Define $P^{+}$ as the subpath of $G$ that includes $P$ and the $r |w_1|$ nodes preceding $P$, and the $r |w_2|$ nodes following $P$ in the graph $G$.
		Note that the set $V_2$ is exactly the union of nodes in $P^+$ for all $P \in \Pset_{\text{unlabeled}}$.
		\item Define $P^{++}$ as the subpath of $G$ that includes $P$ and the $2r |w_1|$ nodes preceding $P$, and the $2r |w_2|$ nodes following $P$ in the graph $G$.
		The path $P^{++}$ must be of the form $w_1^{\Lpump+2r} \circ S \circ w_2^{\Lpump+2r}$, and the labeling $\LL$ already fixes the output labels of the first $2r |w_1|$ and the last $2r |w_2|$ nodes of $P^{++}$  by $f(w_1)^{2r}$ and $f(w_2)^{2r}$, respectively.
	\end{itemize}
	Observe that the path $P^{++} = w_1^{\Lpump+2r} \circ S \circ w_2^{\Lpump+2r}$ together with the labeling $\LL$ is exactly the partially labeled graph $\GG_{w_1,w_2,S}$.
	We assign the output labels to the nodes in $P$ by the labeling $\LL_\diamond$ guaranteed in the definition of feasible function.
	It is ensured that the labeling of all nodes within $P^+$ are locally consistent.
	By doing so for each $P \in \Pset_{\text{unlabeled}}$, we obtain a desired complete labeling that is locally consistent at all nodes in $V_2$.
\end{proof}

\begin{lemma}\label{lem:cycle-final-lem}
	Suppose that there is a feasible function $f$ for the \lcl\ problem $\mathcal{P}$.
	Then there is an $O(1)$-round deterministic $\LOCAL$ algorithm $\mathcal{A}$ on cycle graphs.
\end{lemma}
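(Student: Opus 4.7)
The plan is to design the $O(1)$-round algorithm by combining the periodic partition of Lemma~\ref{lem:MISorient} with the extension guaranteed by Lemma~\ref{lem:alternate-func}. First I would fix constants $\Lpattern = \Lpump$, $\Lcount = 2\Lpump + 2r$, and $\Lwidth \ge \Lpump$. The corner case $|V(G)| \le 2\Lwidth$ is handled by gathering the constant-size graph and outputting any legal labeling, which exists because Lemma~\ref{lem:alternate-func} applied with $\Pset = \emptyset$ produces one from the feasible function $f$. Otherwise, I invoke Lemma~\ref{lem:MISorient} to compute an $(\Lwidth, \Lcount, \Lpattern)$-partition $\Pset = \Plong \cup \Pshort$ in $O(1)$ rounds.

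Next, each node $v$ that lies in a long path $P = w^k \in \Plong$ outputs the symbol of $f(w)$ whose index equals $v$'s offset within $P$ modulo $|w|$. Because $k \ge \Lcount \ge 2\Lpump + 2r$ and $|w| \le \Lpump$, this partial labeling is exactly of the form required by Lemma~\ref{lem:alternate-func} applied with $\Pset := \Plong$, so a legal completion of $G$ exists. The remaining task is to locally produce such a completion at the nodes of $\Pshort$.

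For every $v \in \Pshort$, $v$ gathers a ball of radius $O(\Lwidth)$. That ball contains all of $v$'s short path (with $v$'s rank inside it) together with the adjacent long paths and their periodic patterns. Following the proof of Lemma~\ref{lem:alternate-func}, the unlabeled stretch sitting between two consecutive long paths $w_1^{k_1}$ and $w_2^{k_2}$ is an instance $\GG_{w_1,w_2,S}$, whose completion is granted by the feasibility of $f$. I would precompute, for each triple $(w_1, w_2, \tau)$ with $\tau$ a type (finitely many, by Lemma~\ref{lem-type-number}), a canonical completion of $\GG_{w_1,w_2,S_\tau}$ for a fixed representative $S_\tau$; Lemma~\ref{lem:replace-path} allows the canonical completion to be transported to any $S$ with $\type(S) = \tau$ without losing local consistency.

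The main obstacle is that when $S$ lies inside a long irregular region it is itself decomposed into several short paths of $\Pshort$, so $v$ cannot see all of $S$ within $O(1)$ rounds. The resolution leans on the property already exploited by Lemma~\ref{lem:MIS}: inside an irregular region every constant-radius input window is unique, which lets each node identify its exact position in $S$ from a bounded ball alone and then evaluate the canonical completion rule at that position. Since every node in the same gap performs the same deterministic calculation on the same observed input pattern and short-path ranks, the outputs are globally consistent; together with Lemma~\ref{lem:replace-path} and the feasibility of $f$, this produces a legal labeling of $G$ computed in $O(1)$ rounds, proving the lemma.
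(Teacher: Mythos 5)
Your first two steps (computing the $(\Lwidth,\Lcount,\Lpattern)$-partition and labeling each $P=w^k\in\Plong$ periodically by $f(w)$) match the paper. The gap is in how you complete the labeling on the irregular regions. The unlabeled stretch $S$ between two consecutive long paths can contain $\Theta(n)$ nodes, and your resolution---that ``every constant-radius input window is unique, which lets each node identify its exact position in $S$ from a bounded ball alone''---is not true. The property exploited in Lemma~\ref{lem:MIS} is only that no two nodes within constant distance $\gamma$ of each other share the same length-$\ell$ window; it is a symmetry-breaking device, not a positioning device. There are only constantly many possible windows but $\Theta(n)$ positions in $S$, so by pigeonhole most positions are locally indistinguishable, and a node certainly cannot extract a $\Theta(\log n)$-bit index from an $O(1)$-radius view. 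Moreover, even granting positions, ``evaluating the canonical completion rule at that position'' is not well defined: feasibility and Lemma~\ref{lem:replace-path} only assert the \emph{existence} of a locally consistent completion of $\GG_{w_1,w_2,S}$, and computing such a completion of a long path with fixed boundary labels is essentially solving the LCL on that path, which is not an $O(1)$-round local computation.

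The paper closes exactly this hole with an extra idea you are missing: it pumps each short path. Every $P\in\Pshort$ (of constant size) is replaced by $P^\ast = x\circ y^{\Lcount}\circ z$ with $\type(P^\ast)=\type(P)$ via Lemma~\ref{thm:pump1}, and the resulting graph $G'$ is simulated by electing a leader per short path. Now the middle $y^{2r}$ of each $P^\ast$ is \emph{also} labeled by $f$, so in $G'$ every maximal unlabeled gap is sandwiched between two $f$-labeled periodic stretches and has only $O(1)$ nodes; Lemma~\ref{lem:alternate-func} guarantees these constant-size gaps can be filled consistently, and this filling is a genuinely local brute-force step. Finally the labeling is pulled back from $G'$ to $G$ by applying Lemma~\ref{lem:replace-path} in the direction $P^\ast\to P$, which again only requires each constant-size $P$ to see its own constant-size neighborhood. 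Without this pumping step (or some substitute that bounds the size of the unlabeled gaps), your algorithm does not terminate the argument.
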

\begin{proof}
	The first step of the algorithm $\mathcal{A}$  is to compute an $(\Lwidth, \Lcount , \Lpattern)$-partition in $O(1)$ rounds by Lemma~\ref{lem:MISorient}.
	We set $\Lcount = 2 \Lpump + 2r$ and $\Lwidth = \Lpattern = \Lpump$.
	We assume $|V(G)| > 2 \Lwidth$.
	Recall that an $(\Lwidth, \Lcount , \Lpattern)$-partition decomposes the cycle $G$ into two sets of disjoint subgraphs $\Pshort$ and $\Plong$.

	Define $G'$ as the graph resulting from applying the following operations on $G$. For each $P \in \Pshort$, replace the path $P$ by the path $P^{\ast} = x \circ y^i \circ z$ such that $i = \Lcount$, $1 \leq |y| \leq \Lpattern$, and the type of $P^\ast$ is the same as the type of $P$. The path $P^\ast$ is obtained via Lemma~\ref{thm:pump1}. Note that each path $P \in \Pshort$ has at least $\Lwidth = \Lpump$ nodes and at most $2\Lwidth = 2 \Lpump$ nodes. Define $\Pset^\ast$ as the set of all $P^{\ast}$ such that $P \in \Pshort$. The graph $G'$ is simulated in the communication graph $G$ by electing a leader for each path $P \in \Pshort$ to simulate $P^{\ast}$.
	
	Calculate a partial labeling $\LL'$ of $G'$ using  the feasible function $f$ as follows.
	Recall $\Lcount = 2 \Lpump + 2r$.
	For each $P^{\ast} = x \circ y^{\Lpump} \circ y^{2r} \circ y^{\Lpump} \circ z\in \Pset^\ast$, label the middle subpath $y^{2r}$ by the function $f$.
	For each $P=w^{\Lpump} \circ w^{i} \circ w^{\Lpump} \in \Plong$, label the middle subpath $w^{i}$ by $f(w)^i$. Even though a path $P \in \Plong$ can have $\omega(1)$ nodes, this step can be done locally in $O(1)$ rounds due to the following property of $(\Lwidth, \Lcount, \Lpattern)$-partition. All nodes in a path $P \in \Plong$ agree with the same direction and know the primitive string $w$.

	By Lemma~\ref{lem:alternate-func}, the remaining unlabeled nodes in $G'$ can be labeled to yield a legal labeling of $G'$.
	This can be done in $O(1)$ rounds since the connected components formed by unlabeled nodes have at most $O(1)$ nodes.
	Given any valid labeling of $G'$, a legal labeling of $G$ can be obtained by applying Lemma~\ref{lem:replace-path} in $O(1)$ rounds. Remember that $\type(P) = \type(P^\ast)$ for each $P \in \Pshort$, and $G'$ is exactly the result of replacing each $P \in \Pshort$ by $P^\ast$.
\end{proof}

\begin{figure}
	\centerline
	{\includegraphics[width=1\textwidth]{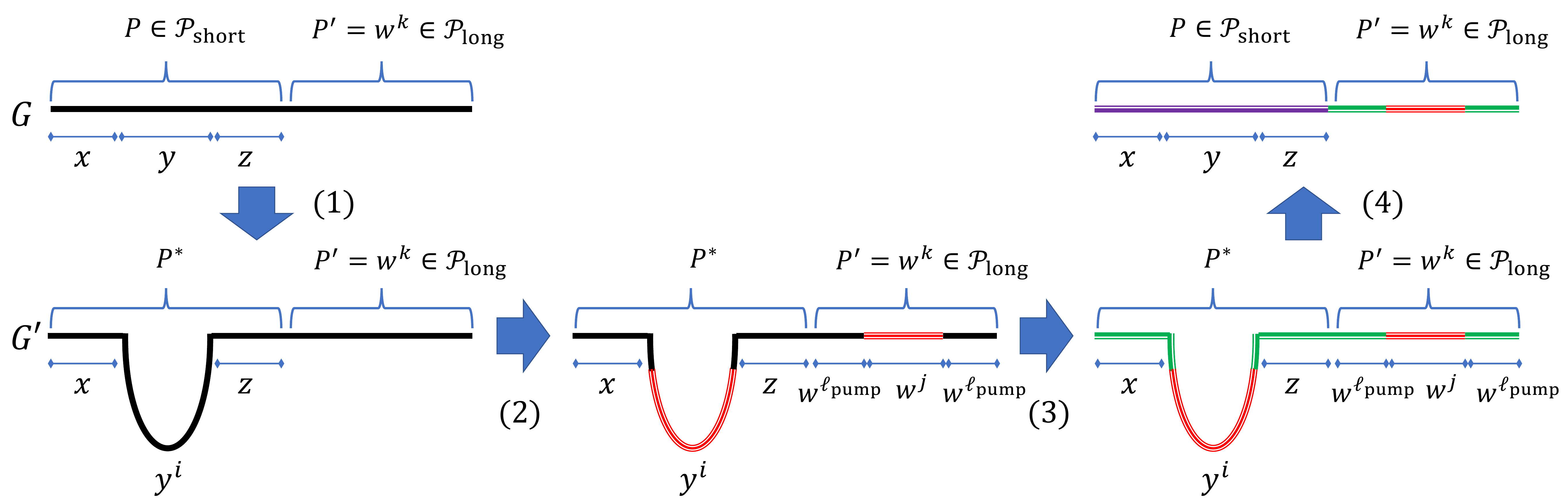}}
	\caption{\label{fig:upper-bound-proof-illustration} Illustration of Lemma~\ref{lem:cycle-final-lem}.}
\end{figure}

See Figure~\ref{fig:upper-bound-proof-illustration} for an illustration of Lemma~\ref{lem:cycle-final-lem}:
(1) applying a pumping lemma to extend each path  $P \in \Pshort$;
(2) labeling the middle subpath $y^{2r}$ of $P^{\ast} = x \circ y^{\Lpump} \circ y^{2r} \circ y^{\Lpump} \circ z\in \Pset^\ast$ and the  middle subpath $w^{j}$ of $P'=w^{\Lpump} \circ w^{i} \circ w^{\Lpump} \in \Plong$  by the function $f$;
(3) the remaining unlabeled nodes in $G'$ can be labeled to yield a legal labeling of $G'$ by Lemma~\ref{lem:alternate-func};
(4) since $\type(P) = \type(P^\ast)$ for each $P \in \Pshort$, we can recover a legal labeling of $G$ by re-labeling nodes in each $P \in \Pshort$.

Combining  Lemma~\ref{lem:func-decide-cycle}, Lemma~\ref{lem:func}, and Lemma~\ref{lem:cycle-final-lem}, we have proved Theorem~\ref{thm-cycle-gap-1}. That is, for any \lcl\ problem $\mathcal{P}$ on cycle graphs, its deterministic $\LOCAL$ complexity is either  $\Omega(\log^\ast n)$ or  $O(1)$. Moreover, there is an algorithm that decides whether $\mathcal{P}$ has complexity  $\Omega(\log^\ast n)$ or  $O(1)$ on cycle graphs; for the case the complexity is $O(1)$, the algorithm outputs a description of an $O(1)$-round deterministic $\LOCAL$ algorithm that solves $\mathcal{P}$.

\section*{Acknowledgments}
Many thanks to Laurent Feuilloley, Juho Hirvonen, Janne H. Korhonen, Christoph Lenzen, Yannic Maus, and Seth Pettie for discussions, and to anonymous reviewers for their helpful comments on previous versions of this work. This work was supported in part by the Academy of Finland, Grant 285721.

{
	\urlstyle{sf}
	\DeclareUrlCommand{\Doi}{\urlstyle{same}}
	\renewcommand{\doi}[1]{\href{http://dx.doi.org/#1}{\footnotesize\sf doi:\Doi{#1}}}
	\bibliographystyle{plainnat}
	\bibliography{references}
}

\end{document}